\documentclass[a4paper,UKenglish,cleveref, autoref, thm-restate]{lipics-v2021}

\title{Decomposition of Automata Recognizing Ideals}

\author{Mathias Berry}{Université Marie et Louis Pasteur, CNRS, institut FEMTO-ST, F- 25000 Besançon, France}{mathias.berry@femto-st.fr}{}{}
\author{Pierre-Cyrille H{\'e}am}{Université Marie et Louis Pasteur, CNRS, institut FEMTO-ST, F- 25000 Besançon, France }{pierre-cyrille.heam@femto-st.fr}{}{}
\author{Isma{\"{e}}l Jecker}{Université Marie et Louis Pasteur, CNRS, institut FEMTO-ST, F- 25000 Besançon, France }{ismael.jecker@femto-st.fr}{}{This research was partially funded by the Agence Nationale de la Recherche (ANR) grant ANR-25-CE48-2447 FAVOR}

\authorrunning{M.Berry, P.-C.H{\'e}am, I.Jecker}

\Copyright{Mathias Berry, Pierre-Cyrille H{\'e}am, Isma{\"{e}}l Jecker}

\ccsdesc[500]{Theory of computation~Regular languages} 

\keywords{Finite state automata, decomposition, Shuffle ideals} 

\nolinenumbers
\hideLIPIcs

\EventEditors{John Q. Open and Joan R. Access}
\EventNoEds{2}
\EventLongTitle{42nd Conference on Very Important Topics (CVIT 2016)}
\EventShortTitle{CVIT 2016}
\EventAcronym{CVIT}
\EventYear{2016}
\EventDate{December 24--27, 2016}
\EventLocation{Little Whinging, United Kingdom}
\EventLogo{}
\SeriesVolume{42}
\ArticleNo{23}

\usepackage{amsthm} 
\usepackage{graphicx} 
\usepackage[dvipsnames]{xcolor}
\usepackage{stmaryrd} 
\usepackage{extarrows}  
\usepackage[capitalise]{cleveref} 
\usepackage[utf8]{inputenc}
\usepackage[T1]{fontenc} 
\usepackage{amsmath,amsfonts,amssymb} 
\usepackage{dsfont} 
\usepackage{thmtools}  
\usepackage{thm-restate} 
\usepackage{todonotes}	
\usepackage{algorithm2e} 
\usepackage{pdfpages}
\usepackage[appendix=append]{apxproof} 
\usepackage{wrapfig} 

\usetikzlibrary{arrows}
\usetikzlibrary{automata}
\usetikzlibrary{shapes}
\usetikzlibrary[automata] 
\usetikzlibrary {arrows.meta,automata,positioning}
\usetikzlibrary{decorations.pathmorphing}
\usetikzlibrary{calc}

\newcommand{\N}[0]{\mathbb{N}}

\newcommand{\setint}[1]{\llbracket 1, #1 \rrbracket}
\newcommand{\setintz}[1]{\llbracket 0, #1 \rrbracket}
\newcommand{\set}[2]{\left\{#1 \mid #2 \right\}}

\newcommand{\A}[0]{\mathcal{A}}

\newcommand{\Auto}[0]{(S,\Sigma, \iota, F, \delta)}
\newcommand{\AutoLin}[0]{(\{q_0, q_1, \ldots, q_n\},\Sigma, q_0, \{q_n\}, \delta)}
\newcommand{\AutoLinGen}[0]{(\{q_0, q_1, \ldots, q_n\},\Sigma, q_0, F, \delta)}
\newcommand{\minAuto}[0]{(S,\Sigma, \iota, \{q_f\}, \delta)}
\newcommand{\I}[0]{\mathcal{I}}
\newcommand{\Res}[2]{R(#1,#2)}

\newcommand{\norm}[1]{{\rm rk}(#1)}
\newcommand{\sep}[0]{q_{\rm sep}}
\newcommand{\setsep}[0]{S_{\rm sep}}

\newcommand{\wit}[0]{w_{\rm wit}}
\newcommand{\Wit}[0]{S_{\rm wit}}
\newcommand{\pred}[1]{{\rm Anc}(#1)}
\newcommand{\predA}[2]{{\rm Anc}_{#1}(#2)}
\newcommand{\succs}[1]{{\rm Desc}(#1)}
\newcommand{\succsA}[2]{{\rm Desc}_{#1}(#2)}
\newcommand{\mifa}[1]{{\rm Fam}(#1)}
\newcommand{\mifaA}[2]{{\rm Fam}_{#1}(#2)}

\newcommand{\Lmin}[0]{L_{\min}}
\newcommand{\wmax}[0]{w}

\newdimen\demi%
\def\shu#1#2#3{%
\dimen\demi=#1pt%
\divide\dimen\demi by 2%
\def\haut{\vrule width#3pt height#2pt depth0pt}%
\def\bas{\vrule width\dimen\demi height#3pt depth0pt}%
\hbox{\haut\bas\haut\bas\haut}}
\def\shuffle{\mathrel{\mathchoice{\shu{8}{4}{.4}}{\shu{8}{4}{.4}}{\shu{6}{3}{.4}}{\shu{3}{1.5}{.4}}}}

\newcommand{\Shuffle}[1]{#1 \shuffle \ \Sigma^*}
\newcommand{\lang}[1]{\mathcal{L}(#1)}

\SetKwComment{Comment}{/* }{ */} 
\RestyleAlgo{ruled} 

\newcommand{\partAlph}[2]{\Sigma_{#1,#2}}
\newtheoremrep{prop}[proposition]{Proposition}
\newtheoremrep{lem}[lemma]{Lemma}
\newtheoremrep{theo}[theorem]{Theorem}
\newtheoremrep{coro}[corollary]{Corollary}

\renewcommand{\leq}{\leqslant}
\renewcommand{\geq}{\geqslant}

\begin{document}

\maketitle

\begin{abstract}
	Minimizing the size of finite automata is a fundamental problem in theoretical computer science.
	Beyond standard minimization, further reductions can be achieved
	by decomposing an automaton into smaller components whose languages
	combine via intersection or union to recover the original language.
	However, in general, no polynomial-time algorithm is known for computing such decompositions.
	
  	In this paper, we focus on
  	automata that recognize ideals, that is, 
  	languages  at level 1/2  in the Straubing–Thérien
  	hierarchy.
  	Equivalently, these languages are expressible as a finite union of languages of
  	the form \(\Sigma^*a_1\Sigma^*\dots\Sigma^*a_n\Sigma^*\) where
  	\(\Sigma\) is an alphabet and \(a_i\) are letters of
  	\(\Sigma\).
  We show that the two problems of deciding whether an automata recognizing an ideal  
  can be decomposed
  into an intersection or a union of smaller automata are decidable in NL.
	Moreover, we provide a polynomial-time algorithm
	that computes a decomposition into an intersection, if one exists,
	while ensuring that the resulting components also recognize ideal languages.
\end{abstract}

\section{Introduction}

Finite automata are fundamental tools in computer science,
with applications ranging from text processing to system modelling and
machine learning, thanks to their strong algorithmic
properties. In contexts such as model verification, automata
can grow prohibitively large, limiting applicability to industrial-scale
problems. Significant research has focused on
automata minimization (see for instance
\cite{DBLP:books/aw/HopcroftU79,DBLP:journals/siamcomp/JiangR93,DBLP:journals/tc/KamedaW70,brainerd1968minimalization}),
yet an alternative approach consists in decomposing complex systems into simpler components~\cite{DBLP:conf/compos/1997}.

In this context, the notion of deterministic automaton
decomposition was introduced by O. Kupferman and
J. Mosheiff~\cite{kupferman2015prime}.
In this paper, we push this line of research further and distinguish two types of decompositions.
An intersection [resp. union] decomposition of an automaton \(\A\) is a
finite collection of smaller automata \(\A_1,\ldots,\A_n\) such that the
intersection [resp. union] of their recognized languages coincides
with that \(\A\). When no such decomposition exists, \(\A\) is called prime. Such decompositions reduce the
complexity of models or specifications by splitting them into simpler
components.
For instance,
verifying that a system satisfies a specification that has an intersection decomposition
reduces to verifying whether the system satisfies each component individually.
This approach can improve computational efficiency, for instance
through distributed computation
\cite{DBLP:conf/cav/SternD97,DBLP:books/sp/18/BarnatBDLPPR18,DBLP:conf/formats/DalsgaardLLOP12}.

We focus on decomposition problems for the class of \emph{ideal languages},
that occupy level 1/2 in the Straubing–Thérien
hierarchy~\cite{Haines,straubing1985finite,therien1981classification}.
Formally, this the class of languages expressible as a
finite union of languages of the form \(\Sigma^*a_1\Sigma^*\dots
\Sigma^* a_n \Sigma^*\) where \(\Sigma\) is a finite alphabet and
\(a_i\) are letters. 
Ideal languages are also known as \emph{shuffle ideals}~\cite{heam2002shuffle}, or simply \emph{ideals},
and they arise naturally in the study of word combinatorics~\cite{higman}
and the formal verification of systems with communication channels~\cite{abdulla2004using}.
They also correspond to the languages definable by the existential
fragment of FO(<)~\cite{barloy2022regular},
and  generate, via Boolean combinations, the class of
\emph{piecewise testable} languages~\cite{DBLP:conf/automata/Simon75,DBLP:journals/dm/Klima11,DBLP:conf/csl/KarandikarS16,DBLP:journals/lmcs/MasopustK21}.
Moreover, ideal languages are upward closed under the subword relation,
a property that has been studied in relation to state complexity~\cite{karandikar2014state,okhotin2010state}
and combinatorial properties of words~\cite{karandikar2015index}.
This property underlies the relevance of the class of ideals.

\subparagraph*{Contributions.}
We study both intersection and union decompositions
for automata recognizing ideals, focusing on the complexity of deciding
primality and on the effective construction of decompositions. Note that these two problems are interreducible in general, since an automaton is prime for intersection if and only if its complement is prime for union. However, ideal languages are not closed by complement, thus these two problems are different in that case.
Our results first address intersection decomposition.

\begin{restatable}[]{theorem}{maintheorem}\label{theo:space complexity}
	Given a minimal automaton \(\A\) recognizing an ideal, deciding
    whether \(\A\) is prime for intersection is in NL.
\end{restatable}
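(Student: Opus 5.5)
The plan is to reduce primality to a purely graph-theoretic property of the minimal automaton, and then check that property in NL. We use the standard structure of a minimal DFA \(\A=\minAuto\) recognizing an ideal. Since the language is upward closed for the subword order, it has a single accepting state \(q_f\), which is a sink looping on every letter. Moreover, the transition graph is acyclic apart from self-loops, because ideals are piecewise testable. So reachability \(p \preceq q\) (there is a word leading from \(p\) to \(q\)) is a partial order on \(S\).

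The characterization we aim for is: \emph{\(\A\) is composite for intersection if and only if the reachability order restricted to \(S\setminus\{q_f\}\) is not a total order}, i.e.\ there exist two incomparable non-final states \(p,q\).

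For the \emph{if} direction, build for each non-final state \(q\) the automaton \(\A_q\) obtained by turning \(q\) into an accepting sink. Its trimmed version has strictly fewer states, since \(q\) is identified with \(q_f\) and states reachable only through \(q\) disappear. It recognizes an ideal containing \(\lang{\A}\), so \(\lang{\A}\subseteq \lang{\A_p}\cap\lang{\A_q}\). For the reverse inclusion, take a word \(w\) rejected by \(\A\); its run never enters \(q_f\). If it were accepted by both \(\A_p\) and \(\A_q\), the run would visit both \(p\) and \(q\), making them comparable. Hence when \(p,q\) are incomparable, \(\lang{\A}=\lang{\A_p}\cap\lang{\A_q}\) with both factors smaller. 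This is also the decomposition with ideal components announced in the abstract.

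The \emph{only if} direction is the main obstacle. Here the non-final states form a chain \(\iota=q_0\prec q_1\prec\dots\prec q_{n-1}\) with \(q_n=q_f\). We must show that any DFAs \(\A_1,\dots,\A_k\) with \(\lang{\A}=\bigcap_i\lang{\A_i\!}\) include one with at least \(n+1\) states. We choose words \(u_j\) leading from \(\iota\) to \(q_j\) along the chain, so the \(u_j\) are pairwise separated by suffixes. The idea is to find a single component that separates all of them at once. Take the rejected word \(x=u_{n-1}v\), where \(v\) loops on \(q_{n-1}\); some component \(\A_i\) rejects \(x\). We then argue, by a pumping/induction argument along the chain, that \(\A_i\) must already distinguish every prefix \(u_j\) of this run. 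Suppose \(\A_i\) merged \(u_j\) and \(u_{j'}\) with \(j<j'\). Then inserting the factor that leads from \(q_j\) to \(q_{j'}\) again, or deleting it, would turn an accepted word into \(x\) or into a rejected variant in a way contradicting \(\lang{\A}\subseteq\lang{\A_i}\). Linearity is what makes this work: every rejected run of \(\A\) can be routed through all the states of the chain. I expect the delicate part to be choosing the letters \(v\) and the connecting factors so that this argument goes through when self-loops overlap with forward transitions. I would first try taking the run following the longest chain and using the upward closure of \(\lang{\A_i}\cap\lang{\A}\)-type arguments.

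Given the characterization, the NL algorithm guesses two non-final states \(p,q\). It then verifies that \(q\) is not reachable from \(p\) and \(p\) is not reachable from \(q\). Non-reachability is in coNL, which equals NL by Immerman–Szelepcsényi, and NL is closed under the required composition. Identifying \(q_f\) and checking that it is final take logarithmic space. This places primality (the complement of the guessed condition, again using \(\mathrm{NL}=\mathrm{coNL}\)) in NL.
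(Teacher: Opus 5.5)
There is a genuine gap: your characterization ``composite iff non-linear'' is false. It fails exactly in the direction you flagged as the main obstacle, because a \emph{linear} minimal automaton recognizing an ideal can be composite for intersection.

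Take the automaton \(\A\) of Fig.~\ref{fig: decomposition}, the minimal automaton of \(L=\{ab,ba,bb,ca,cb\}\shuffle\Sigma^*\) over \(\{a,b,c\}\). Its transitions are \(q_0\xrightarrow{a}q_1\), \(q_0\xrightarrow{b,c}q_2\), \(q_1\) looping on \(a\), \(q_1\xrightarrow{c}q_2\), \(q_1\xrightarrow{b}q_3\), \(q_2\) looping on \(c\), \(q_2\xrightarrow{a,b}q_3\), and \(q_3\) is the accepting sink. It is linear, yet \(L=(\{b,ca\}\shuffle\Sigma^*)\cap(\Sigma\cdot(\{a,b\}\shuffle\Sigma^*))\), and both languages have \(3\)-state automata (the reduced automata \(\A_0\) and \(\A_1\)). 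So your algorithm answers ``prime'' on a composite automaton.

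Your pumping sketch breaks concretely here. Take \(u_1=a\), \(u_2=ac\) and \(x=acc\). Both components reject \(x\), yet \(\A_0\) sends \(\varepsilon\) and \(u_1\) to the same state, and \(\A_1\) sends \(u_1\) and \(u_2\) to the same state. No contradiction arises, because a component may skip a state of the chain as long as the rejected words it then wrongly accepts (such as \(aa\) for \(\A_1\)) are rejected by another component. Rejected runs of a linear automaton need not traverse the whole chain.

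The missing idea is the damping pattern: a linear \(\A\) is composite iff \(\Sigma_{k-1,k-1}\cup\Sigma_{k-1,k}\subseteq\Sigma_{k,k}\) for some non-final \(q_k\).
\begin{itemize}
\item If such a \(k\) exists, then \(\lang{\A}=\lang{\A_{k-1}}\cap\lang{\A_k}\). A rejected run avoiding \(q_k\) (resp.\ \(q_{k-1}\)) is reproduced exactly by \(\A_k\) (resp.\ \(\A_{k-1}\)). A rejected run visiting both ends in the same state in \(\A_{k-1}\), which waits in \(q_k\) thanks to the inclusion.
\item If no such \(k\) exists, you can pick blocks \(w(i)\) of one or two letters with \(\delta(q_{i-1},w(i))=q_i\neq\delta(q_i,w(i))\). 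Then \(w(1)\cdots w(n-1)\notin\lang{\A}\), while duplicating any block yields an accepted word. Your pumping idea does go through on this witness: a smaller \(\A'\) with \(\lang{\A}\subseteq\lang{\A'}\) either repeats a state among the block boundaries \(r_0,\dots,r_{n-1}\) of the witness's run, or visits all of its states, and in both cases accepts the witness.
\end{itemize}
The absence of damping patterns is exactly what makes your ``choice of letters and connecting factors'' possible. Accordingly, in the linear case the NL procedure must also guess two consecutive states and test the inclusion letter by letter. Consecutiveness can be checked with reachability queries, using NL \(=\) coNL.

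As a side remark, turning \(q\) into an accepting sink does not preserve ideals. In Fig.~\ref{fig:rank}, \(ca\) visits \(q_1\), but its upper-word \(cba\) neither visits \(q_1\) nor reaches \(q_6\). Your non-linear decomposition is still valid for deciding primality, but it is the naive one, not an ideal-preserving decomposition.
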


\begin{restatable}[]{theorem}{constructionIntersection}\label{theo:time complexity of intersection decomposition}
	Let \(\A = \Auto\) be an automaton recognizing an ideal.
	Determining whether \(\A\) has an intersection decomposition can be done in time \(\mathcal{O}(|\A||\Sigma|)\),
	and if so, such a decomposition can be constructed in time \(\mathcal{O}(|\A||\Sigma|^2)\).  
\end{restatable}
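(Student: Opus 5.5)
This result rests on a structural characterization of intersection-primality for minimal automata recognizing ideals, which we establish first and then turn into a linear-time test. Let \(\A = \Auto\) be minimal, of size \(|\A|\). It has a unique sink accepting state \(q_f\), since the language is upward closed, and every state \(q\) has a residual \(\Res{\A}{q}\) that is itself an ideal. The transition graph is acyclic apart from self-loops: if \(\delta(q,u)=q'\) and \(\delta(q',v)=q\), then \(\Res{\A}{q} \subseteq \Res{\A}{q'} \subseteq \Res{\A}{q}\), so \(q=q'\) by minimality. Consequently every state \(q\) has a well-defined rank \(\norm{q}\) along this acyclic order.

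\textbf{Step 1: a sufficient condition for decomposability.} Suppose there is a state \(\sep\), other than the initial state and \(q_f\), together with a partition of the letters. Suppose further that \(\lang{\A} = L_1 \cap L_2\), where \(L_1\) (resp. \(L_2\)) is obtained by redirecting, inside \(\A\), some letter-transitions to \(q_f\) (resp. by merging \(\sep\) into \(q_f\)). Then both components are ideals, and both have strictly fewer states after minimization. I would prove this by checking that the two components are upward closed and that each word rejected by \(\A\) is rejected by one of them.

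\textbf{Step 2: the converse.} Assume \(\lang{\A} = \bigcap_i \lang{\A_i}\) with every \(|\A_i| < |\A|\). For each \(i\) pick a state of \(\A\) that \(\A_i\) fails to distinguish, witnessed by a word \(\wit\). Using upward closure of \(\lang{\A}\), I would show that these failures force a separating state \(\sep\) with the property of Step 1. The natural route is to take the minimal rank among the merged states and compare the ancestor set \(\pred{\sep}\) with the descendant set \(\succs{\sep}\). I expect this to be the main obstacle. The components \(\A_i\) need not be ideals, so the argument has to work purely through residuals of \(\lang{\A}\).

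\textbf{Step 3: complexity.} Once primality is equivalent to a local condition, the running time follows. The condition should be of the form ``there is a state \(q \neq \iota, q_f\) such that, for each letter \(a\), the set \(\partAlph{q}{a}\) of letters looping at predecessors satisfies an inclusion.'' The algorithm is:
\begin{enumerate}
\item Compute \(\norm{\cdot}\) by a topological sort in time \(\mathcal{O}(|\A||\Sigma|)\).
\item For each state, compute its self-loop alphabet in time \(\mathcal{O}(|\A||\Sigma|)\).
\item Test the condition for every candidate \(\sep\) in a single forward pass, propagating along transitions a bitmask of letters that can still be read. Each transition is handled in \(\mathcal{O}(1)\) amortized time, which gives \(\mathcal{O}(|\A||\Sigma|)\) for deciding.
\end{enumerate}

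\textbf{Construction.} Given a valid \(\sep\), each component is built by copying \(\A\) and redirecting transitions; this costs \(\mathcal{O}(|\A||\Sigma|)\). The decomposition may need one component per letter class, that is, up to \(|\Sigma|\) components. Hence the total cost is \(\mathcal{O}(|\A||\Sigma|^2)\). If \(\A\) is not given minimal, we first minimize it. Hopcroft's algorithm would give \(\mathcal{O}(|\A||\Sigma|\log|\A|)\), which is too slow. On an automaton that is acyclic apart from self-loops, however, Revuz-style minimization along \(\norm{\cdot}\) runs in linear time \(\mathcal{O}(|\A||\Sigma|)\), so it fits within the bound.
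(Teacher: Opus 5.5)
\textbf{Main gap.} The whole argument rests on a characterization of intersection-primality that you never establish. Your Step 2 is left open as ``the main obstacle,'' and the local condition in Step 3 is only guessed. So neither the $\mathcal{O}(|\A||\Sigma|)$ decision procedure nor the construction has a correctness argument; the running-time analysis is bookkeeping around an unspecified predicate.

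What is missing is the dichotomy: $\A$ is composite if and only if $\preccurlyeq$ is not total ($\A$ is non-linear), or $\A$ is linear with a damping pattern. A damping pattern is a pair of consecutive states $q_{k-1},q_k$ with $q_k\neq q_f$ and $\partAlph{k-1}{k-1}\cup\partAlph{k-1}{k}\subseteq\partAlph{k}{k}$. Both conditions can be read off a topological sort in $\mathcal{O}(|\A||\Sigma|)$: linearity holds iff consecutive states of the sort are joined by a transition, and the damping test costs $\mathcal{O}(|\Sigma|)$ per consecutive pair.

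Each direction of the dichotomy needs its own argument:
\begin{itemize}
\item Non-linear automata are always composite, via the family automata $\A(\rho)$, $\rho\in\setsep$. In $\A(\rho)$, a transition leaving $\mifa{\rho}$ is redirected to $\delta(s^\rho_{(q,a)},a)$.
\item A damping pattern yields the reduced automata $\A_{k-1},\A_k$. Each deletes one state and sends that state's incoming transitions to its \emph{successor}, not to $q_f$.
\item Conversely, a linear automaton without a damping pattern is prime, witnessed by $\Wit(\A)$. Each such word is rejected by $\A$, yet duplicating any factor $w(i)$ gives a word of $\lang{\A}$. A pigeonhole argument on the run of any smaller $\A'$ with $\lang{\A}\subseteq\lang{\A'}$ then forces $\A'$ to accept it.
\end{itemize}
Your plan of extracting, component by component, a pair of states that the component fails to distinguish is never carried out. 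The paper's route instead uses a single word that defeats \emph{all} smaller supersets at once.

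\textbf{Secondary issues.}
\begin{itemize}
\item Step 1 is not a usable criterion. It assumes $\lang{\A}=L_1\cap L_2$ instead of deriving this from a checkable property. Also, redirecting transitions to $q_f$ is exactly the naive construction that in general destroys the ideal property, so your claim that both components are ideals is unjustified.
\item The $|\Sigma|^2$ factor comes from building $|\setsep|$ family automata, each in $\mathcal{O}(|\A||\Sigma|)$, where $|\setsep|\leq|\Sigma|$ because every state of $\setsep$ is a one-letter successor of $\sep$. The linear case needs only two components. Your ``one component per letter class'' is not tied to any construction and conflicts with the two components of your Step 1.
\item Your linear-time minimization relies on acyclicity up to self-loops, which you proved only for minimal automata. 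A non-minimal DFA for an ideal can contain nontrivial cycles, e.g.\ two accepting states swapped by every letter, recognizing $\Sigma^*$. Decomposition is defined for minimal automata, so you should take $\A$ minimal rather than claim linear-time minimization of an arbitrary input.
\end{itemize}
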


\noindent
We then turn to union decomposition, and prove a symmetric complexity result.

\begin{restatable}[]{theorem}{constructionUnion}\label{theo:time complexity of union decomposition}
	Given a  minimal automaton  \(\A\) recognizing an ideal, deciding
    whether \(\A\) is prime for union is in NL.
\end{restatable}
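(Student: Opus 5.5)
The idea is to characterize union-primality of a minimal DFA \(\A\) for an ideal \(L\) by a structural property that can be checked with logarithmically many pointers. A union decomposition is a family of DFAs \(\A_1,\dots,\A_k\), each with fewer states than \(\A\), whose languages have union \(L\). Every \(\lang{\A_i}\) is contained in \(L\), so each \(\A_i\) recognizes a subset of \(L\) that is "simpler" than \(L\).

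\textbf{Reduction to ideal components.} First I will show that the components may be assumed to be ideals. If \(L_i\subseteq L\) is recognized by a small DFA, then its upward closure \(\Shuffle{L_i}\) is still contained in \(L\), because \(L\) is upward closed. I will then argue that \(\Shuffle{L_i}\) has a minimal DFA no larger than that of \(L_i\). I expect to derive this from a saturation construction: in the DFA of \(L_i\), add self-loops and merge along subword closure, in the style of the standard construction used earlier in the paper for ideals. If this bound fails in general, I fall back to reasoning directly with minimal ideal automata.

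\textbf{Structural characterization.} Minimal DFAs of ideals are acyclic except for self-loops, and they have a unique accepting sink \(q_f\). Ideals are finite unions of principal ideals \(\Shuffle{w}\), where \(w\) ranges over the minimal words of \(L\). The plan is to show that \(\A\) is \emph{not} prime for union if and only if every minimal word \(w\) of \(L\) (or a finite set of witnesses covering them) lies in some ideal \(\Shuffle{u}\subseteq L\) whose minimal DFA avoids at least one state of \(\A\).

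I will phrase this as a condition on states. \(\A\) is prime exactly when there is a state \(q\) (other than the sink and the initial state, or possibly those too) such that some word \(w\) accepted by \(\A\) has the following property: every ideal \(I\subseteq L\) containing \(w\) needs at least as many states as \(\A\). Intuitively, every accepting run must "use" every state, since every residual of \(L\) is forced. Concretely, I expect the criterion to be the existence of a single witness word \(\wit\) whose run visits states witnessing all residuals, and such that its upward closure intersected with the state structure already requires all of \(|S|\) states.

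\textbf{NL algorithm.} With such a characterization, the algorithm nondeterministically guesses the path of \(\wit\) in \(\A\) letter by letter, storing only the current state and constant extra information. For each state \(p\) it must verify a local condition of the form "for every letter \(a\), \(\delta(p,a)\) is \(p\) or lies on the path". Checking such a condition needs reachability or non-reachability queries. These are in NL, and complements are handled by Immerman–Szelepcsényi, so the whole procedure is in NL.

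\textbf{Main obstacle.} The hard step is the structural characterization itself: proving that whenever no such witness exists, one can actually build a union decomposition into strictly smaller automata. I would first try a case split on the minimal words of \(L\). For each minimal word \(w\), I would bound the state count of \(\Shuffle{w}\), or of a well-chosen sub-union, by finding a state of \(\A\) that its run avoids and showing the corresponding quotient automaton is smaller.
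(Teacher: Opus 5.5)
\textbf{There is a genuine gap: the proposal does not contain a proof.} The characterization of primality is only announced as an expectation. The reformulation ``some accepted \(w\) such that every ideal \(I\subseteq L\) containing \(w\) needs at least \(|\A|\) states'' is essentially a restatement of primality, not a criterion you can check in logarithmic space.

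Your intuition that primality is witnessed by a single word whose run uses every state is correct. What is missing is its precise form and the two arguments that establish it. Let \(m=\max\{|w| : w\in\Lmin\}\).
\begin{itemize}
\item \emph{Lower bound (pigeonhole).} Take \(w\in\Lmin\) with \(|w|=m\), and any DFA with at most \(m\) states that accepts \(w\). Its run on \(w\) repeats a state, so it also accepts the strict subword obtained by cutting that loop, and this subword is not in \(L\). Applied to \(\A\) itself, this gives \(|\A|\geq m+1\). When \(|\A|=m+1\), no component of a union decomposition can contain \(w\) while staying inside \(L\), so \(\A\) is prime.
\item \emph{Upper bound.} The minimal automaton of \(\Shuffle{\{u\}}\) has exactly \(|u|+1\) states, and \(L=\bigcup_{u\in\Lmin}\Shuffle{\{u\}}\). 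So \(|\A|>m+1\) immediately yields a decomposition.
\end{itemize}
Your fallback of bounding the size of the automaton for \(\Shuffle{w}\) is the right move, but the bound is simply \(|w|+1\). No quotient of \(\A\) avoiding a state is needed, and it is unclear such a quotient would even recognize a subset of \(L\).

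The pigeonhole argument works for arbitrary DFA components, so your reduction to ideal components is unnecessary. That is fortunate, because as stated it is false. Adding self-loops to a DFA yields an NFA, and the upward closure of an \(n\)-state DFA language can require exponentially many states \cite{okhotin2010state,karandikar2014state}.

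\textbf{The NL part needs a local reformulation of \(|\A|=m+1\), and your local test is the wrong one.} A word of \(\Lmin\) of length \(|\A|-1\) must visit all states without repetition. Hence \(\A\) is linear, \(q_0\prec\dots\prec q_n\), and its \(i\)-th letter lies in \(\Sigma_{i-1,i}\) but in no \(\Sigma_{j,i}\) with \(j<i-1\); otherwise deleting a factor gives an accepted strict subword. Conversely, choosing such letters yields a word of \(\Lmin\) of length \(n\). This uses the fact that \(\delta(p,a)\preccurlyeq\delta(q,a)\) whenever \(p\preccurlyeq q\) in a linear minimal automaton recognizing an ideal, so skipping a letter leaves the run strictly behind for good. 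Thus \(\A\) is prime iff it is linear and has no accelerating pattern, which is what the paper's algorithm checks nondeterministically, closing under complement via NL \(=\) co-NL.

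Your test ``for every letter \(a\), \(\delta(p,a)\) is \(p\) or lies on the path'' fails on two counts. On a prime automaton the witness run visits every state, so the test is vacuous. And a log-space machine cannot remember which states lie on a guessed path. Replace it with this per-step test: the guessed letter \(a\) moving from \(q_{i-1}\) to \(q_i\) satisfies \(\delta(r,a)\neq q_i\) for every \(r\prec q_{i-1}\), which is a co-NL check. Add a counter verifying that the run makes \(|\A|-1\) non-loop moves and ends in the final state. With these changes, your ``guess the witness letter by letter'' algorithm would actually work and decide primality directly.
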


\subparagraph*{Related work.}
The general problem of intersection decomposition was introduced in
the seminal work of Orna Kupferman and Jonathan
Mosheiff~\cite{kupferman2015prime}, which defines prime automata,
i.e., automata that are not decomposable, and shows
that deciding whether an automaton is prime for intersection is in EXPTIME.
To date, no better algorithm is known, and the problem is only known to be NP-hard~\cite{DBLP:journals/corr/abs-2605-07031}.
Subsequent work has focused on specific subclasses of automata.
For automata over a single-letter alphabet,
intersection primality is decidable
in NL~\cite{DBLP:conf/mfcs/JeckerKM20},
and for \emph{permutation} automata,
where each input letter induces a bijection
over the state set, intersection primality is decidable
in NP~\cite{jecker2025decomposing}.

Permutation and \emph{aperiodic} automata play complementary roles
in the algebraic study of regular languages:
by the Krohn–Rhodes theorem,
every regular language can be decomposed into a cascade of permutation and aperiodic components.
With the permutation case understood,
the next step is to tackle aperiodic automata.
Prior work in this direction has addressed acyclic automata, providing 
a polynomial time criterion for their intersection decomposability~\cite{spenner2023decomposing}.
Moreover, for \emph{almost-acyclic} automata, that might contain sink states with self loops, the problem is known to be NP-complete~\cite{DBLP:journals/corr/abs-2605-07031}. 
In this paper, we focus on  automata recognizing ideals, another subclass of aperiodic automata.

\subparagraph*{Structure of the paper.}
The paper is organized as follows:
\begin{itemize}
\item
	Section~\ref{sec:preliminaries} recalls standard notions and notations for finite automata and ideals.
\item 
    Section~\ref{sec:decomposition_of_non_linear_automata}
    studies intersection decomposition of \emph{non-linear} automata recognizing
    ideals, i.e., automata whose state sets cannot be totally ordered
    compatibly with the transition function.
    We prove that every automaton of this form is decomposable into  at most \(|\Sigma|\) smaller automata, each
    recognizing an ideal, and that such a decomposition can be
    computed in polynomial time (Theorem~\ref{theo:non-linear are decomposable}).

\item
  Section~\ref{sec:linear_automata_and_decomposition} focuses on
  intersection decomposition of \emph{linear} automata recognizing ideals,
  and provides a primality criterion.
  Moreover, we show that every decomposable automaton of this form
  is decomposable into two smaller automata, each
  recognizing a ideal, and that such a decomposition can be
  computed in polynomial time (Theorem~\ref{prop: decomposition of linear automata}).

\item   
	Section~\ref{sec:union_decomposition} addresses union decompositions,
	and provides a primality criterion (Theorem~\ref{theo:union characterization}).

 \item
 	Finally, Section~\ref{sec: Decomposition in prime automaton} combines the results of the two previous sections
 	to establish that intersection primality of automata recognizing ideals
 	is decidable in NL (Theorem~\ref{theo:space complexity})
 	and computable in polynomial time (Theorem~\ref{theo:time complexity of intersection decomposition}).
\end{itemize}

All results are proven; proofs omitted from the main text are deferred to the appendix.

\section{Preliminaries}\label{sec:preliminaries}

\subparagraph*{Mathematics.}  For all \(n,m \in \N\), we denote by \(\llbracket
n, m \rrbracket\) the set \(\set{k \in \N}{n \leq k \text{ and } k\leq
  m}\).  For every set \(E\) and for every subset \(F\) of \(E\), the
subset \(E\setminus F\) is just denoted \(\overline{F}\) when there is
no ambiguity on \(E\).

\subparagraph*{Ordered relations.} For  every set \(S\), a binary relation
\(\leq\) on \(S\) is a \emph{partial order} if \(\leq\) is
reflexive, transitive and antisymmetric.  A partial order \(\leq\) is
said to be \emph{total} if for all \(q,r \in S\) either \(q \leq r\)
or \(r \leq q\). For an order \(\leq\) on \(S\), an element
\(x\) of some subset \(X \subseteq S\) is a \emph{minimal} [resp. \emph{maximal}] element of
 \(X\) if for all \(y\in X\), \(y\leq x\)
 implies \(x=y\) [resp. \(x \leq y\) implies \(x=y\)].  If \(x\) is a
minimal [resp. maximal] element of \(X\) such that every
\(y\in X\) satisfies \(x\leq y\) [resp.  \(y\leq x\)], then \(x\) is
called the \emph{least element} [resp. \emph{greatest element}] of
\(X\).  Note that if a subset \(X\) has a least [resp. greatest]
element, then it is unique and denoted \(\min X\) [resp. \(\max
  X\)]. Note too that any finite subset \(X\) admits at least one
minimal and one maximal element.

\subparagraph*{Words and languages.}  An \emph{alphabet} is a finite set whose
elements are called \emph{letters}.  A \emph{finite word} over an
alphabet \(\Sigma\) is a finite sequence of letters of \(\Sigma\). The
empty sequence is called the \emph{empty word} and is denoted
\(\varepsilon\). A word \((w(i))_{i \in \setint{n}}\) is classically
written \(w_1 \dots w_n\). Its length \(n\) is denoted \(|w|\). The
empty word is the unique word of length 0.  For every alphabet
\(\Sigma\), the set of all finite words over \(\Sigma\) is denoted
\(\Sigma^*\). Subsets of  \(\Sigma^*\) are called \emph{languages} over \(\Sigma\).

\subparagraph*{Shuffle product.}
Let \(\Sigma\) be an alphabet and let \(u,v \in \Sigma^*\) be
words of respective length \(m\) and \(n\).
The \emph{shuffle} of \(u\) and \(v\), denoted by \(u \shuffle v\), is the
set of words \(w \in \Sigma^*\) of length \(m + n\) such that
there exist two strictly increasing sequences \((i(k))_{k\in
\setint{m}}\) and \((j(k))_{k\in \setint{n}}\) whose images partition \(\setint{m+n}\)
and such that \(u_k = w_{i(k)}\) for all
\(k \in \setint{m}\) and \(v_\ell = w_{j(\ell)}\) for all
\(\ell \in \setint{n}\). This notion extends to
languages in the usual way: for two languages \(L_1,
L_2 \subseteq \Sigma^*\), the \emph{shuffle} of \(L_1\) and \(L_2\) is
the language \(L_1 \shuffle L_2 = \bigcup_{u \in L_1, v \in L_2} u \shuffle v\).
For all \(u, v \in \Sigma^*\), if \(v\in \{u\}\shuffle \Sigma^*\)
we say that \(u\) is a \emph{sub-word} of \(v\) and that \(v\) is an
\emph{upper-word} of \(u\). 
Every language \(L \subseteq \Sigma^*\)  satisfies \(L \subseteq \Shuffle{L}\).
A language is an \emph{ideal} if \(\Shuffle{L} = L\),
or, equivalently, if for all \(u \in L\) every upper-word of \(u\) also belongs to \(L\).

\subparagraph*{Finite automata.}  A \emph{finite state automaton} is a
\(5\)-tuple \(\A = \Auto\) where \(S\) is a finite set of
\emph{states}, \(\Sigma\) is an alphabet, \(\iota \in S\) is the
\emph{initial state}, \(F \subseteq S\) is the set of \emph{accepting
(or final) states} and \(\delta \subseteq S \times \Sigma \times S\)
is the set of \emph{transitions}. If for all \(q \in S\) and for all
\(a \in \Sigma\) there exists at least [resp. at most] one \(q' \in
S\) such that \((q,a,q') \in \delta\), the automaton is said to be
\emph{complete} [resp. \emph{deterministic}]. By abuse of notation,
for a complete deterministic automaton, the unique state \(q' \in S\)
such that \((q,a,q') \in \delta\) is denoted \(\delta(q,a)\).  For
deterministic complete automata, the notation \(\delta\) is extended
to words in the following way: for all \(q \in S\) we set
\(\delta(q,\varepsilon) = q\), and for all \(u \in \Sigma^*\) and
\(a \in \Sigma\) we define inductively \(\delta(q,ua)=
\delta(\delta(q,u),a)\).
Let \(w \in \Sigma^*\) and \(q_1,q^\prime_n \in S\).
A run \(\sigma\) of \(\A\) over \(w\) from \(q_1\)
to \(q^\prime_n\) is a sequence of transitions \((q_i,w_i,q_i')_{i \in
  \setint{n}}\) satisfying \(w_1 \dots w_n= w\) and such that for all
\(i \in \setint{n-1}\), the state \(q_i'\) is equal to \(q_{i+1}\).
The \emph{size} of a deterministic automaton \(\A\), denoted \(|\A|\),
is its number of states. For every automaton \(\A = \Auto\), the
\emph{language recognized} by \(\A\), denoted by \(\lang{\A}\), is the
set of words \(w \in \Sigma^*\) such that \(\delta(\iota,w) \in F\).
Let \(\A = \Auto\) and \(q \in S\).  The \emph{residual} of \(\A\) in
\(q\), denoted \(\Res{\A}{q}\), is the language recognized by the automaton
\((S,\Sigma, q, F, \delta)\) obtained by replacing the initial state of \(\A\) by \(q\).
A language is \emph{regular} if there
exists an automaton that recognizes it. 
A deterministic complete automaton is called minimal if there is no
smaller automaton that recognizes the same language or, equivalently,
if it is the minimal automaton of its language.  In a minimal
automaton \(\A\), if \(\Res{\A}{q}=\Res{\A}{q^\prime}\), then
\(q=q^\prime\).  We say that a state \(q\) of a deterministic complete
automaton is a \emph{sink state} if for all \(a \in \Sigma\) one has
\(\delta(q,a) = q\). For any regular language
\(L\), there exists a unique minimal deterministic complete automaton (up to
state names) recognizing \(L\) called the minimal automaton of \(L\).

\subparagraph*{Accessibility for automata.}  Let \(\A = \Auto\) be a
deterministic complete automaton and let \(q,r \in S\). We say that
\(r\) is \emph{accessible} from \(q\) when there exists \(w \in
\Sigma^*\) such that \(r = \delta(q,w)\). In that case we denote it by
\(q \preccurlyeq_\A r\). When \(q \preccurlyeq_\A r\) and \(q \neq
r\), we write \(q \prec_\A r\).  When either \(q \preccurlyeq_\A r\)
or \(r \preccurlyeq_\A q\) we say that \(q\) and \(r\) are
\emph{comparable}.  When \(\A\) is clear from the context we
simply write \(\preccurlyeq\) and \(\prec\).
A deterministic complete automaton is called \emph{trim}
if every state is accessible from the initial state
and from every state there exists an accessible final state.
In this paper all automata are assumed to be deterministic complete and trim unless otherwise
stated.  Let \(\A\) be a deterministic complete automaton, and let
\(q\) be a state of \(\A\). The set of \emph{ ancestors} of \(q\), denoted
\(\predA{\A}{q}\), is
\(\set{s \in S}{s \preccurlyeq_\A q}\).
The set of \emph{descendants} of \(q\),
denoted
\(\succsA{\A}{q}\), is
\(\set{s \in S}{q \preccurlyeq_\A s}\).
The \emph{family} of \(q\),  denoted \(\mifaA{\A}{q}\),
is \(\pred{q} \cup \succs{q}\).
When \(\A\) is clear from the context we
simply write \(\pred{q}\), \(\succs{q}\) and \(\mifa{q}\).
An automaton \(\A\) is called \emph{linear} if \(\preccurlyeq_\A\) is a total order on its set of
states, otherwise \(\A\) is \emph{non-linear}.
For instance, the automaton on Fig.~\ref{fig:rank} is non-linear since \(q_1\) and \(q_2\) are not
comparable. We denote \(\I\) the set of automata recognizing ideal.

\subparagraph*{Decomposition Problems.}  For a minimal automaton \(\A\), an
\emph{intersection [resp. union] decomposition} of \(\A\) is a finite
sequence of automata \((\A_i)_{i \in \setint{n}}\) such that
\(|\A_i|< |\A|\) for all \(i \in \setint{n}\), and \[\lang{\A} =
\bigcap\limits_{i \in \setint{n}}\lang{\A_i}\quad \text{[resp.}\quad
\lang{\A} = \bigcup\limits_{i \in \setint{n}}\lang{\A_i}]\enspace. \]

  When \(\A\) admits an intersection [resp. union] decomposition, we
  say that \(\A\) is \emph{composite for intersection}
  [resp. \emph{for union}] and otherwise it is \emph{prime for
    intersection} [resp. \emph{for union}].

\subsection{Ideals}
In this section, we recall several useful results on ideals.
Most of them follow either from Higman's Lemma~\cite{higman} (there
is no infinite language of incomparable words for the subword
relation), or from the fact that ideals form a positive variety of
languages~\cite{DBLP:conf/icalp/PinW95}. See
also~\cite{heam2002shuffle,straubing1988partially}.
An example of a minimal automaton accepting an ideal is shown in
Fig.~\ref{fig:rank}.
These classical results can be summarized as follows.

\begin{figure}
\begin{center}
  \begin{tikzpicture}
    \node[state,initial, initial text=,inner sep=3pt, minimum size=0pt,fill=blue!25] (i) at (-2,0) {$\iota$};
    \node[state,inner sep=3pt, minimum size=0pt,fill=blue!25] (0) at (0,0) {$q_0$};
    \node[state,inner sep=3pt, minimum size=0pt,fill=blue!25] (1) at (2,1)  {$q_1$}; 
    \node[state,inner sep=3pt, minimum size=0pt,fill=blue!25] (2) at (2,-1)  {$q_2$}; 
    \node[state,inner sep=3pt, minimum size=0pt,fill=blue!25] (3) at (4,1)  {$q_3$}; 
    \node[state,inner sep=3pt, minimum size=0pt,fill=blue!25] (4) at (4,-1)  {$q_4$}; 
    \node[state,inner sep=3pt, minimum size=0pt,fill=blue!25] (5) at (6,1)  {$q_5$}; 
    \node[state,accepting,inner sep=3pt, minimum size=0pt,fill=blue!25] (6) at
    (6,-1)  {$q_6$}; 

    \path[draw,-latex] (i) edge node[above] {$c$}(0);
    \path[draw,-latex] (i) edge [loop above] node[left] {$a,b$}();
    
    \path[draw,-latex] (0) edge node[above] {$a$}(1);
    \path[draw,-latex] (0) edge node[above] {$b$}(2);
    \path[draw,-latex] (0) edge [loop above] node[left] {$c$}();
    
    \path[draw,-latex] (1) edge  [loop above] node[left] {$a$}();
    \path[draw,-latex] (1) edge  node[above] {$c$}(3);
    \path[draw,-latex] (1) edge[bend left=40]  node[above,pos=0.9] {$b$}(5);

    \path[draw,-latex] (3) edge  [loop above] node[left] {$a$}();
    \path[draw,-latex] (3) edge  node[above] {$b$}(5);
    \path[draw,-latex] (3) edge[]  node[left] {$c$}(4);
    
    \path[draw,-latex] (2) edge  [loop below] node[left] {$b$}();
    \path[draw,-latex] (2) edge  node[above] {$a$}(3);
    \path[draw,-latex] (2) edge[]  node[above] {$c$}(4);

    \path[draw,-latex] (4) edge  [loop below] node[right] {$b,c$}();
    \path[draw,-latex] (4) edge  node[above] {$a$}(6);

    \path[draw,-latex] (5) edge  [loop above] node[left] {$a$}();
    \path[draw,-latex] (5) edge  node[above] {$c$}(4);
    \path[draw,-latex] (5) edge[]  node[right] {$b$}(6);
    
    \path[draw,-latex] (6) edge[loop right]  node[right] {$a,b,c$}(6);

  \end{tikzpicture}
  \end{center}
\caption{Minimal automaton of  \(\{cabb,cacca,cbca\}\shuffle
    \{a,b,c\}^*\).}\label{fig:rank}
\end{figure}

\begin{proposition} \label{prop: hold prop on ideals}
    The following properties hold.
  \begin{enumerate}
  \item \label{theo:ideal are regular}
    All ideals are regular.
  \item \label{lemma: Lmin ideals}\label{lemma:no subwords of Lmin in L}
    If \(L\) is an ideal, then there exists a finite
    language \(K\) such that \(L=K\shuffle \Sigma^*\).
    Moreover, there exists a unique finite language with this property of minimal cardinality,
    which we denote by \(\Lmin\).
    Finally, for any \(u\in \Lmin\), every strict subword
    \(v\) of \(u\) satisfies \(v\notin L\).
  \item \label{lemma:Res of ideal are ideal}
    If  \(\A\) is a deterministic automaton recognizing an ideal,
    then for all \(q \in S, \Res{\A}{q}\) is an ideal.
  \item 
    If  \(\A\) is a minimal automaton recognizing an ideal,
    then \(\A\) is trim and \(\preccurlyeq\) is a partial order.
  \item \label{lemma: ideals unique final state} 
    If \(\A\) is a minimal automaton recognizing an ideal, then it
    has a unique final state, and this final state is a sink state.
  \end{enumerate}
  \end{proposition}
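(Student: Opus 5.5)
The five items build on each other, so I would prove them in order, using only Higman's Lemma and basic facts about minimal automata.

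\textbf{Items 1 and 2.} Let \(L\) be an ideal and let \(M\) be the set of subword-minimal elements of \(L\). These are pairwise incomparable, so Higman's Lemma makes \(M\) finite.
\begin{itemize}
\item \emph{\(L = M\shuffle\Sigma^*\).} Every \(u\in L\) lies above some element of \(M\), because the strict subword relation is well-founded (lengths decrease). Since \(L\) is upward closed, \(L\supseteq M\shuffle\Sigma^*\). Together these give equality.
\item \emph{Item 1.} The language \(K\shuffle\Sigma^*\) for a finite \(K\) is the finite union of the sets \(\Sigma^*a_1\Sigma^*\cdots a_n\Sigma^*\). Each of these is regular, so \(L\) is regular.
\item \emph{Minimality of \(M\).} Let \(K\) be finite with \(K\shuffle\Sigma^*=L\). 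Each \(m\in M\) lies above some \(k\in K\subseteq L\), and minimality of \(m\) forces \(k=m\). Hence \(M\subseteq K\).
\item \emph{Uniqueness.} If \(|K|=|M|\), then \(K=M\). So \(\Lmin=M\) is the unique finite generating set of minimal cardinality.
\item \emph{Strict subwords.} A strict subword of \(u\in\Lmin\) cannot be in \(L\), by minimality of \(u\).
\end{itemize}

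\textbf{Item 3.} Assume \(\A\) is deterministic and complete. Take \(u\in \Res{\A}{q}\) and an upper-word \(v\) of \(u\). Pick \(x\) with \(\delta(\iota,x)=q\); this needs \(q\) accessible, which holds since automata are assumed trim. Then \(xu\in L\), and \(xv\) is an upper-word of \(xu\), so \(xv\in L\). Hence \(v\in\Res{\A}{q}\).

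\textbf{Item 4.}
\begin{itemize}
\item \emph{Trim.} Accessibility holds in any minimal automaton. For co-accessibility, note that \(L\neq\emptyset\) implies that for any \(x\) the word \(xw\) is in \(L\) for some \(w\in\Lmin\). If \(L=\emptyset\), the automaton has a single state; I treat this as a degenerate convention.
\item \emph{Partial order.} Reflexivity and transitivity are immediate. The key step is antisymmetry.
\item \emph{Residuals grow along runs.} If \(r=\delta(q,w)\) and \(u\in\Res{\A}{q}\), then \(u\) is a subword of \(wu\). Since \(\Res{\A}{q}\) is an ideal by item 3, \(wu\in\Res{\A}{q}\), which means \(u\in\Res{\A}{r}\). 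So \(q\preccurlyeq r\) implies \(\Res{\A}{q}\subseteq\Res{\A}{r}\).
\item \emph{Antisymmetry.} If \(q\preccurlyeq r\preccurlyeq q\), the residuals coincide, and minimality gives \(q=r\).
\end{itemize}

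\textbf{Item 5.}
\begin{itemize}
\item \emph{Final states are universal.} For a final state \(f\), \(\varepsilon\in\Res{\A}{f}\). That residual is an ideal, so it equals \(\Sigma^*\).
\item \emph{Uniqueness.} All final states have the same residual, so minimality makes them a single state \(q_f\).
\item \emph{Sink.} Every \(\delta(q_f,a)\) also has residual \(\Sigma^*\), so it equals \(q_f\).
\end{itemize}

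\textbf{Main obstacle.} The only substantive point is the monotonicity of residuals along runs, which drives antisymmetry in item 4 and reappears in item 5. Everything else is routine once Higman's Lemma provides finiteness of \(\Lmin\).
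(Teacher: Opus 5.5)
Your proof is correct. The paper gives no proof of this proposition: it presents the five items as classical facts and only says they follow from Higman's Lemma or from the fact that ideals form a positive variety, with pointers to the literature. Your argument is a self-contained version of that sketch.
\begin{itemize}
\item \textbf{Items 1 and 2.} Higman's Lemma makes the set \(M\) of subword-minimal elements finite, and both items follow. Your inclusion \(M\subseteq K\), for every finite generating set \(K\), gives minimality and uniqueness of \(\Lmin\) in one stroke.
\item \textbf{Item 3.} This is closure of ideals under left quotients. The paper would get it from the variety property; you prove it by hand.
\item \textbf{Items 4 and 5.} Both rest on monotonicity of residuals along runs. The paper records this only afterwards, as Proposition~\ref{prop:new prop on ideals}, item 1, with the same argument via item 3, so there is no circularity.
\end{itemize}

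What your route buys over the citation is that it exposes the hidden hypotheses. Item 3 genuinely needs \(q\) to be accessible, since an unreachable state can have a non-ideal residual; the paper's global trimness convention supplies this, as you noted. Items 4 and 5 need \(\lang{\A}\neq\emptyset\). You flagged the empty case in item 4 but not in item 5. For \(L=\emptyset\) the minimal complete automaton has no final state, so your item 5 argument only shows there is at most one final state. Existence must be derived separately from \(L\neq\emptyset\). This is a defect of the statement as written, not of your reasoning.
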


\noindent
We now provide more technical statements that directly follow from these known properties.

\begin{proprep}\label{prop:new prop on ideals}
	Let \(\A = \Auto\) be an automaton recognizing an ideal. The following properties hold.
	\begin{enumerate}
		\item \label{lemma:preccurlyeq and inclusion of residuals}
			Let \(q,r\) be two states
      satisfying \(q \preccurlyeq r\). One
      has \(\Res{\A}{q} \subseteq \Res{\A}{r}\).
		\item \label{lemma:R(delta(a)) subset R(delta(ua))}
			Let \(q \in S\) and \(u,v \in \Sigma^*\). One has \(\Res{\A}{\delta(q,u)} \subseteq \Res{\A}{\delta(q,vu)}\).
		\item \label{lemma: min automata preccurlyeq equality}
			Let \(q \in S\)
			and \(u,v \in \Sigma^*\).
			Then either \(\delta(q,u) \preccurlyeq \delta(q,vu)\),
			or \(\delta(q,u)\) and \(\delta(q,vu)\) are incomparable.
		\item\label{lemma:absorbent implies accepting}
			Let \(q \in S\) be a sink state. If \(\lang{\A} \neq \emptyset\) then \(q \in F\).
	\end{enumerate}
\end{proprep}

\begin{proof}
	\begin{enumerate}
		\item
			By definition of \(\preccurlyeq\), if \(q \preccurlyeq r\),
      then there exists \(u \in \Sigma^*\) such that \(r =
      \delta(q,u)\). For all \(w \in \Res{\A}{q}\), the word \(w\)
      is a sub-word of \(uw\). Hence, since ideals are close by upper-words, we have   \(uw \in
      \Res{\A}{q}\). Now, since \(\A\) is deterministic and
      complete,  \(w \in \Res{\A}{\delta(q,u)}\). Since \(r = \delta(q,u)\), one can conclude that \(w \in \Res{\A}{r}\). Therefore \(\Res{\A}{q} \subseteq \Res{\A}{r}\).
    \item 
    	Let \(w \in \Res{\A}{\delta(q,u)}\). Since \(\A\) is
			deterministic, \(uw \in \Res{\A}{q}\). Since \(\Res{\A}{q}\)
			is an ideal, and since \(uw\) is a sub-word of \(vuw\),
		 	we have  \(vuw \in \Res{\A}{q}\).
 			Therefore, \(w \in \Res{\A}{\delta(q,vu)}\)
 			as \(\A\) is deterministic.
 		\item 
 			If \(\delta(q,vu) \not\preccurlyeq \delta(q,u)\),
			the statement is immediately satisfied
			as either \(\delta(q,u) \preccurlyeq \delta(q,vu)\)
			or \(\delta(q,u) \not\preccurlyeq \delta(q,vu)\),
			and the later case implies that
			\(\delta(q,u)\) and \(\delta(q,vu)\) are incomparable.
	
			Now in the case where \(\delta(q,vu) \preccurlyeq \delta(q,u)\),
			on the one hand Item~\ref{lemma:preccurlyeq and inclusion of residuals} of this property yields that
			\(\Res{\A}{\delta(q,vu)} \subseteq \Res{\A}{\delta(q,u)}\),
			and on the other hand 
			Item~\ref{lemma:R(delta(a)) subset R(delta(ua))} of this property
			yields that \(\Res{\A}{\delta(q,u)} \subseteq \Res{\A}{\delta(q,vu)}\).
			Therefore \(\Res{\A}{\delta(q,u)} = \Res{\A}{\delta(q,vu)}\)
			Hence, as \(\A\) is minimal,
			\(\delta(q,u) = \delta(q,vu)\).
			In particular \(\delta(q,u) \preccurlyeq \delta(q,vu)\).
		\item
			Since \(\A\) is trim, there exists \(w \in \Sigma^*\) such
			that \(\delta(\iota,w)=q\). Let \(v \in\lang{\A}\). Since \(\lang{\A}\)
			is an ideal, one has \(wv \in \lang{\A}\). And since \(q\) is a sink
			state, \(\delta(\iota,vw)= q\). Those two statements implies
			that \(q \in F\).\qedhere
	\end{enumerate}
\end{proof}

We will denote by \(\I\) the class of minimal automata recognizing ideals.

\section{Intersection-decomposition of non-linear automata}\label{sec:decomposition_of_non_linear_automata}

In this section, decomposition and prime automata
are understood with respect to intersection.
We show that every non-linear automaton recognizing an ideal is decomposable.

\begin{restatable}[]{theorem}{decompOfNonLinear}
  \label{theo:non-linear are decomposable}
  		Let \(\A\) be a non-linear minimal automaton
  		over an alphabet \(\Sigma\) that recognizes an ideal.
  		Then \(\A\) has a decomposition into at most \(|\Sigma|\) automata, each recognizing an ideal.
\end{restatable}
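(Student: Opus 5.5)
The plan is to build the components explicitly, one per letter, by collapsing part of $\A$ into the accepting sink, so that each component accepts a superset of $\lang{\A}$. Let $\A=\minAuto$, with $q_f$ the unique final sink given by Proposition~\ref{prop: hold prop on ideals}. Since $\A$ is non-linear, some pair of states is incomparable. Call a state \emph{central} if it is comparable with every state of $S$.

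The central states form a chain containing $\iota$ and $q_f$. Let $p$ be the largest central state such that $\succs{p}$ still contains two incomparable states. For each letter $a$, write $q_a=\delta(p,a)$. Define $\A_a$ from $\A$ by redirecting every transition that leaves $\mifa{q_a}$ into the accepting sink $q_f$, and then minimizing. Concretely, $\lang{\A_a}$ is the set of words that are either accepted by $\A$ or whose run in $\A$ exits $\mifa{q_a}$. When $q_a$ is central, $\A_a$ would simply be $\A$; such letters are discarded. This gives at most $|\Sigma|$ components.

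\textbf{Step 1: the intersection equals $\lang{\A}$.} The inclusion $\lang{\A}\subseteq\lang{\A_a}$ is immediate. Conversely, take $w\notin\lang{\A}$.

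\begin{itemize}
\item If the run of $w$ never passes $p$, it stays among ancestors of $p$. These are central, so they lie in every $\mifa{q_a}$, and $w$ is rejected by every $\A_a$.
\item Otherwise, write $w=ubv$, where $u$ reaches $p$ and the run leaves $p$ on the letter $b$. From then on the run stays in $\succs{q_b}\subseteq\mifa{q_b}$, and before $p$ it is inside $\pred{q_b}$. So the run never exits $\mifa{q_b}$. If $q_b$ is non-central, $\A_b$ rejects $w$.
\item If $q_b$ is central, the choice of $p$ must be used to show that, after reading $b$, the run still cannot escape all the remaining families. This may require refining $p$, for instance by iterating the construction from $q_b$.
\end{itemize}

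\textbf{Step 2: each component is smaller.} Since $q_a$ is non-central, some state $s$ is incomparable to $q_a$, so $s\notin\mifa{q_a}$. Then $s$ is unreachable in $\A_a$, or is merged with the sink, so $|\A_a|<|\A|$ holds even before minimization.

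\textbf{Step 3: each $\lang{\A_a}$ is an ideal.} This is where I expect the main difficulty. I will take $w\in\lang{\A_a}$ and an upper-word $w'$ obtained by inserting one letter, and show $w'\in\lang{\A_a}$.

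\begin{itemize}
\item If $w\in\lang{\A}$, this follows from $\lang{\A}$ being an ideal.
\item Otherwise, the run of $w$ exits $\mifa{q_a}$ at some prefix $xc$. I would use Proposition~\ref{prop:new prop on ideals}, Item~\ref{lemma: min automata preccurlyeq equality}: after the insertion, the state reached on the same suffix is either a $\preccurlyeq$-descendant of the original one or incomparable to it. In both cases I would argue that the state reached is again outside $\mifa{q_a}$, using that $\mifa{q_a}$ contains all ancestors of $q_a$ and is closed under descendants of $q_a$. Otherwise, the run of $w'$ exits at an earlier point.
\end{itemize}

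If this direct argument fails, the fallback is to replace $\lang{\A_a}$ by $\Shuffle{\lang{\A_a}}$, which is an ideal by construction. I would then check that it still excludes the rejected words from Step 1. This amounts to showing that no subword of a word staying in $\mifa{q_a}$ exits $\mifa{q_a}$, again via Proposition~\ref{prop:new prop on ideals}, Item~\ref{lemma: min automata preccurlyeq equality}.
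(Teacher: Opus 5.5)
\textbf{Step~3 fails.} Redirecting every exiting transition to the sink is exactly the ``naive'' construction the paper sets aside, because it does not preserve ideals. Take $\Sigma=\{a,b\}$ and $L=\Shuffle{\{aa,bb\}}$. Its minimal automaton has states $\iota$, $x=\delta(\iota,a)$, $y=\delta(\iota,b)$, $z=\delta(x,b)=\delta(y,a)$, and the sink $q_f=\delta(x,a)=\delta(y,b)=\delta(z,a)=\delta(z,b)$. The central states are $\iota,z,q_f$, so your $p$ is $\iota$. Your components are $\A_a$ on $\mifa{x}=\{\iota,x,z,q_f\}$ and $\A_b$ on $\mifa{y}$. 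The only transition leaving $\mifa{x}$ is $\iota\xrightarrow{b}y$, so $\lang{\A_a}=L\cup b\Sigma^*$. This contains $b$ but not its upper-word $ab$, whose run $\iota\to x\to z$ never leaves $\mifa{x}$.

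Your sketch implicitly assumes that the complement of $\mifa{q_a}$ is closed under descendants. It is not, because branches reconverge: $z$ descends from $y\notin\mifa{x}$ yet lies in $\mifa{x}$. The shuffle-closure fallback fails too. The word $ab$ lies in $\Shuffle{\lang{\A_a}}$ (upper-word of $b$) and in $\Shuffle{\lang{\A_b}}$ (upper-word of $a$), but $ab\notin L$. So the closures no longer intersect to $\lang{\A}$. This also refutes your claim that no subword of a word staying in $\mifa{q_a}$ leaves it. You would in any case also have needed to bound the size of the minimal automaton of each closure.

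\textbf{The missing idea} is to redirect an exiting transition not to the sink but to a shortcut inside the family. For $\rho\in\setsep$ the paper sets $\delta_\rho(q,a)=\delta(s,a)$, where $s$ is the least state of $\mifa{\rho}$ above $q$ whose $a$-successor stays in $\mifa{\rho}$. Reading $a$ at $q$ then behaves like reading a word leading to $s$ followed by $a$. This keeps all residuals upward closed (Proposition~\ref{lemma:RA'q' equality}) while still containing $\lang{\A}$. In the example it gives $\delta_x(\iota,b)=z$ instead of $q_f$. The components become $\Sigma^{\geq 2}\setminus\{ab\}$ and $\Sigma^{\geq 2}\setminus\{ba\}$, whose intersection is $L$.

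This minimum exists only because the strict ancestors of $\rho$ form a chain (Proposition~\ref{prop:sep}). That is why the paper branches at the first branching state $\sep$, not at your last central branching state $p$. With your $p$, ancestors of $p$ need not be central, so the justification in your first bullet of Step~1 is wrong, though its conclusion survives since $\pred{p}\subseteq\mifa{q_a}$.

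Your open third bullet, by contrast, closes easily. If $\delta(p,b)\neq p$ is central, maximality of $p$ makes $\succs{\delta(p,b)}$ a chain. Hence every non-central $q_a$ is a strict ancestor of $\delta(p,b)$, and the run stays in $\mifa{q_a}$. Some $q_a$ must be non-central, since otherwise $\succs{p}$ would itself be a chain.
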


  \noindent
  The proof follows from the results established in the following sections.
  In Section~\ref{sec:qsep}, we define a set of states, the \emph{separator set} \(\setsep\),
  and establish its main properties.
  In Section~\ref{sec:decomposition}, we associate with each state \(\rho \in \setsep\)
  a \emph{family automaton} \(\A(\rho)\), and prove that these automata form a decomposition of \(\A\).
  More precisely, for each \(\rho \in \setsep\) the automaton \(\A(\rho)\) is strictly smaller than \(\A\)
  (Proposition~\ref{lemma:|A(a)| < |A|}) and recognizes an ideal
  (Proposition~\ref{lemma:RA'q' equality}).
  Moreover, 
  \(\lang{\A} = \bigcap_{\rho \in \setsep} \lang{\A(\rho)}\)
  (Proposition~\ref{lemma:L(A) = inter L(A(a))}).
  The bound on the size of the decomposition follows from the fact that
  \(|\setsep| \leq |\Sigma|\)
  (Proposition~\ref{prop:sep}.\ref{lemma:transition from sep to setsep}).

\subsection{Ranks and Separation States}\label{sec:qsep}
We introduce in this section the key notion of \emph{separator state} \(\sep\) of a non-linear automaton \(\A\).
Intuitively, this state is the first point in \(\A\) at which a branching towards two incomparable states occurs.
We then define the notion of \emph{rank} on the states of \(\A\).
This notion provides an alternative characterization of the separator
state (Lemma~\ref{lemma:unicity above sep}),
and also allows us to introduce an important ingredient of our
construction: the \emph{separator set} \(\setsep\), which consists of the states
immediately following the separator state.

\subparagraph*{Separator state.}
We characterize the state in non-linear automata just before the first two states that are not comparable.
To introduce this notion formally,
we first establish a structural property on non-linear automata
recognizing ideals.

\begin{proprep}\label{lemma:carac_non-linear}
	Let \(\A = \minAuto\) be a minimal automaton accepting an ideal.
	Then \(\A\) is non-linear if and only if
	the subset of states 
	\[\set{q \in
		S\setminus F}{\succs{q}\setminus\{q\} \text{ has no least element for } \preccurlyeq}\]
	is non-empty and has a least element for \(\preccurlyeq\).
\end{proprep}

\begin{proof}
	Let \(H_{\A}\) denote the set \(\set{q \in
		S\setminus F}{\succs{q}\setminus\{q\} \text{ has no least element for } \preccurlyeq}\).
	First, we show that \(H_{\A}\) is non-empty if and only if \(\A\) is non-linear.
	
	Suppose that  \(H_{\A}\) is non-empty.
	Then there exists a non final state \(q\) such that the subset
	\(\succs{q}\setminus\{q\}\) has no least element for \(\preccurlyeq\).
	Since \(\A\) is trim and \(q\) is not a final state, the set \(\succs{q}\setminus\{q\}\) is non-empty.
	The absence of a least element implies that it contains at least two incomparable minimal elements.
	Therefore the order \(\preccurlyeq\) is not total, and \(\A\) is non-linear.
	
	Conversely, assume that \(H_{\A}\) is empty.
	Then for every non final state \(q\) the subset
	\(\succs{q}\setminus\{q\}\) has a least element for \(\preccurlyeq\).
	Let \(q_0=\iota\) be the initial state of \(\A\), and for each non
	final state \(q_i\) let \(q_{i+1}\) be the least element  of
	\(\succs{q_i}\setminus\{q_i\}\).
	By construction, \(q_i\preccurlyeq q_{i+1}\).
	Moreover, since \(\A\) is trim, it is easy to show by induction that \(\pred{q_i} = \set{q_j\in S}{j \leq i}\). Hence every state \(q\) of \(\A\) is visited in this chain.
	It follows that \(\preccurlyeq\) is a total order, hence \(\A\) is linear.
	
	To prove the second part of the statement,
	we now show that if \(H_\A\) is non-empty then it has a
	least element.
	Towards building a contradiction, suppose there exist two distinct element of \(H_\A\), denoted \(q_1\) and \(q_2\).
	Let \(T=\{q \mid q \preccurlyeq q_1 \text{ and } q\preccurlyeq q_2\}\).
	Since the initial state is in \(T\),
	it is non-empty. Let \(p\) be a maximal
	element of \(T\).
	Then \(p\preccurlyeq q_1\) and \(p\preccurlyeq q_2\).
	Since \(q_1\) is minimal in \(H\), we must have \(p\notin H_\A\).
	Consequently, \(\succs{p}\setminus\{p\}\) has a least element
	\(r\). Since \(q_1\in \succs{p}\setminus\{p\}\) and \(q_2\in
	\succs{p}\setminus\{p\}\), one has \(r\preccurlyeq q_1\) and \(r\preccurlyeq
	q_2\), hence \(r\in T\).
	Since \(p\preccurlyeq r\) and \(p \neq r\), this contradicts the maximality of \(p\)
	in \(T\), concluding the proof.
\end{proof}

Now for every non-linear automaton \(\A = \Auto\) that
recognizes an ideal, the \emph{separator state} of \(\A\),
denoted \(\sep\), is defined by
\[\sep = \text{min}_\preccurlyeq \set{q \in S}{\succs{q}\setminus\{q\} \text{ has no least element
            for} \preccurlyeq}.\]
In other words, \(\sep\) is the least element (for
\(\preccurlyeq\)) whose strict successors do not admit a least element.
This notion is illustrated in Fig.~\ref{fig: nonlinear}: the set
\(\set{\sep \in S}{\succs{\sep}\setminus\{\sep\}}\) has two distinct
  minimal elements \(\rho_1\) and \(\rho_2\).

\begin{figure}

       \includegraphics[width=0.9\textwidth]{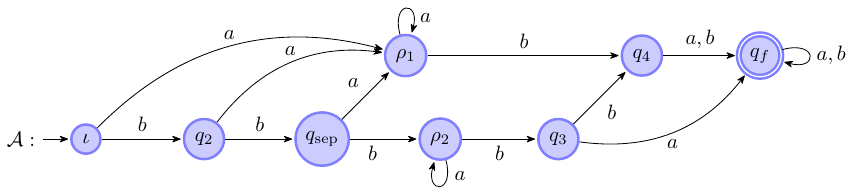}

    \caption{A non-linear automaton recognizing an ideal.}
    \label{fig: nonlinear}
\end{figure}

\subparagraph*{Rank.} The notion of rank is a classical one in directed acyclic graphs and
corresponds intuitively to a depth notion in the graph. This notion represents how far a state can be from the initial state. More formally, let \(\A = \minAuto\) be in \(\I\) and let
  \(q \in S\). The \emph{rank} of \(q\), denote \(\norm{q}\), is the
  size of a longest path in \(\A\) from \(\iota\) to \(q\) that
  doesn't visit twice any state
  We denote by \(\setsep\) the set of states of the lowest rank held by at least two states.
  It has the property that there exists a transition from \(\sep\)
  to each state of \(\setsep\).

For instance, on
illustrating Fig.\ref{fig:rank}, we have
\(\norm{\iota}=0\), \(\norm{q_2} = 1\), \(\norm{\sep} = 2\), \(\norm{\rho_1} = \norm{\rho_2} = 3\), \(\norm{q_3} = 4\), \(\norm{q_4} = 5\) and \(\norm{q_f} = 6\).

The main properties linking ranks and states are summed up in Proposition~\ref{porp:rank and states}.

\begin{proprep}\label{porp:rank and states}
  Let \(\A = \minAuto\) be in \(\I\). The following statements holds.
  \begin{enumerate}
     \item \label{lemma:norm and transition}\label{cor:norm and path} 
     For all \(q,s \in S\) verifying \(q \prec s\), one has \(\norm{q} < \norm{s}\). 
     \item \label{cor:eq norm and preccurlyeq}
       For all \(q,s \in S\). If \(\norm{q} = \norm{s}\) and \(q \neq s\) then \(q\) and \(s\) are incomparable for \(\preccurlyeq\).
    \item \label{lemma:lower than n implis path taht uses n}
       Let \(q \in S\) and \(n \in\mathbb{N}\) such that \(n \leq \norm{q}\). There exists \(r \in \pred{q}\) of rank \(n\).
   \end{enumerate} 
\end{proprep}

\begin{proof}
  \begin{enumerate}
    \item
      First we show that if there is \(a \in \Sigma\) such that \(\delta(q,a)= s\), then \(\norm{q} < \norm{s}\). Then we generalize this result to obtain the wanted statement. Let \(\pi\) be a loopfree path from \(\iota\) to \(q\) of length
      \(\norm{q}\). Since \(\preccurlyeq\) is an order relation, and since
      \(q\leq s\), \(\pi\) doesn't visit \(s\). Therefore the path
      \(\pi\cdot(q,a,s)\) is a loopfree path from \(\iota\) to \(s\) of length
      \(\norm{q}+1\), proving that \(\norm{s}\geq \norm{q}+1 > \norm{q}\). The
      general result is obtained by induction on a run from \(q\) to \(s\).
    \item 
      By a direct contraposition of the first statement, one has the second one.
    \item
      Let \(\pi\) be a loopfree path from \(\iota\) to \(q\).  Let \(w =
      w_1 \dots w_{\norm{q}}\) be the label of \(\pi\) and let \(s =
      \delta(\iota,w_1 \dots w_n)\). By definition of \(\preccurlyeq\) one
      has \(s \preccurlyeq q\). Then it suffices to show that \(\norm{s} =
      n\). Since \(\delta(\iota,w_1\dots w_n) = s\), one has \(\norm{s}
      \geq n\). Now, for all words \(w'\) such that \(\delta(\iota,w') = s\), one
      has \(\delta(\iota,w'w_{n+1}\dots w_k) = q\). Therefore, by definition of
      \(\norm{q}\), the word \(w'w_{n+1}\dots w_k\) has a length smaller or
      equal to \(\norm{q}\). It follows that the length of \(w'\) is at most
      \(\norm{q}-\norm{q} + n=n\). Hence \(\norm{s}
      \leq n\), proving the lemma. \qedhere
  \end{enumerate}
\end{proof}

These statements allow us to define another characterization of \(\sep\).

\begin{lemrep}\label{lemma:unicity above sep}
	Let \(\A = \minAuto\) be a non-linear automaton in \(\I\) and
        for all \(m \in \N\) let \(a_m = |\set{s \in S}{\norm{s} =
          m}|\). Then \(\sep\) is the only state of rank \(\max\set{n \in \N}{ \forall m
          \leq n, a_m = 1}.\)
\end{lemrep}

\begin{proof}
  Note first that \(\iota\) is the unique state of rank \(0\) and
  \(a_0=1\). Therefore the set \(\max\set{n \in \N}{ \forall m \leq n,
    a_m = 1}\) is non-empty. Let \(n_0\) be its maximum. By
  definition, \(s_{n_0}=1\). Let \(q\) be the unique element of rank
  \(n_0\). Let's show that \(q = \sep\). This proof is done in two
  steps. First we will show that \(\sep \preccurlyeq q\) and next
  that \(q = \sep\).

  \begin{itemize}
    \item Note that \(q\neq q_f\). Otherwise, \(\norm{q_f}=|\A|\),
      implying that there is a loopfree path in \(\A\) visiting all
      states and \(\A\) would be linear.

      Therefore, by definition of \(q\), one has \(s_{\norm{q}+1}\geq 2\).
      Let \(r_1,r_2 \in S\) be two states verifying that \(\norm{r_1} = \norm{r_2} =
      \norm{q}+1\). By Proposition~\ref{porp:rank and states}.\ref{lemma:lower than n implis path taht
        uses n} and since \(q\) is the only state of rank
      \(\norm{q}\), one has \(q \preccurlyeq r_1\), \(q \preccurlyeq
      r_2\), and \(r_1\) and \(r_2\) are not comparable. 

      Now let \(h\in \succs{q}\setminus\{q\}\) satisfying \(h\prec r_1\).
      One has \(\norm{h} < \norm{r_1}=\norm{q}+1\) by
      Proposition~\ref{porp:rank and states}.\ref{lemma:norm and transition}. Moreover, since \(h\in
      \succs{q}\), we also have \(\norm{h} \geq \norm{q}\). It follows
      that \(\norm{h}=\norm{q}\). Since \(s_{\norm{q}}=1\),
      \(h=q\). But \(h\in \succs{q}\setminus\{q\}\), a
      contradiction. Therefore, \(r_1\) is minimal in
      \(\succs{q}\setminus\{q\}\). Similarly \(r_2\) is minimal in
      \(\succs{q}\setminus\{q\}\). Consequently, \(\succs{q}\setminus\{q\}\)
      has no least element. By definition of \(\sep\), one has
      \(\sep \preccurlyeq q\).

    \item Now, assume that \(\sep \prec q\). In this case
      \(\norm{\sep}< \norm{q}\). Hence,
      \(s_{\norm{\sep}}=s_{\norm{\sep}+1}=1\).  Let \(u\) be the
      unique state of rank \(\norm{\sep}+1\) and \(v\) be a minimal
      element of \(\succs{\sep}\setminus{\sep}\). One has \(\norm{v} >
      \norm{\sep}\).  By Proposition~\ref{porp:rank and states}.\ref{lemma:lower than n implis path
        taht uses n}, \(u\) is in \(\pred{v}\). By minimality of
      \(v\), one has \(u=v\). Therefore, \(u\) is the unique minimal
      element of \(\succs{\sep}\setminus{\sep}\), a contradiction. 
  \end{itemize}
  It follows that \(\sep = q\).
\end{proof}

\subparagraph*{Separator set.}	The rank characterization of the separator state makes it natural to consider the set of states of rank \(\norm{\sep}+1\). The decomposition we construct contains one automaton for each state in this set. We also introduce the set of letters leading from \(\sep\) to the separator set. More formally, let \(\A = \minAuto\) be a non-linear automaton in \(\I\). The \emph{separator set}, denoted \(\setsep\), is the subset of \(S\) defined by \(\setsep = \set{q \in S}{\norm{q} = \norm{\sep}+1}\).

Finally we can define the technical properties of \(\sep\) and  \(\setsep\). 

\begin{proprep}\label{prop:sep}
  Let \(\A = \minAuto\) be a non-linear minimal automaton recognizing an ideal.
  \begin{enumerate}
  \item \label{lemma: above sep mifa = S}
    If \(q \in S\) satisfies \(q \preccurlyeq \sep\), then \(\mifa{q}
    = S\).

  \item \label{lemma:pred(q in setsep) fully ordered}
    If \(\rho \in \setsep\), then \(\preccurlyeq\) induces a total order
    on \(\pred{\rho}\setminus \{\rho\}\).\

  \item \label{lemma:transition from sep to setsep} Let \(\rho \in
    \setsep\). There exists \(a \in \Sigma\) such that
    \(\delta(\sep,a) = \rho\).
     
   \item \label{lemma:size setsep geq 2} \(2 \leq |\setsep| \leq |\Sigma|\).
  \end{enumerate}
\end{proprep}

\begin{proof}
\begin{enumerate}
\item By Proposition~\ref{porp:rank and states}.\ref{cor:norm and path}, \(\norm{q} \leq
  \norm{\sep}\). Let \(s \in S\). Two cases arise.
   \begin{itemize}
     \item Suppose that \(\norm{s} \leq \norm{q}\). Then Proposition~\ref{porp:rank and states}.\ref{lemma:lower than n
     	implis path taht uses n} implies that there exists  \(s' \in \pred{q}\) of rank \(\norm{s}\).
     	However, since \(\norm{\sep} \geq \norm{q} \geq \norm{s}\), Lemma~\ref{lemma:unicity above sep} implies
       that \(s\) is the only state of rank \(\norm{s}\).
       Therefore \(s = s' \in  \pred{q}\).

     \item Suppose that \(\norm{q} \leq \norm{s}\). Then Proposition~\ref{porp:rank and states}.\ref{lemma:lower than n
     	implis path taht uses n} implies that there exists  \(q' \in \pred{s}\) of rank \(\norm{q}\).
     However, since \(\norm{\sep} \geq \norm{q}\), Lemma~\ref{lemma:unicity above sep} implies
     that \(q\) is the only state of rank \(\norm{q}\).
     Therefore \(q = q' \in  \pred{s}\).
	\end{itemize}
     Thus by definition of family one has \(\mifa{q} = S\).

\item Let \(q,s \in \pred{\rho}\setminus \{\rho\}\), using the previous result \(q \in \mifa{s}\) hence \(q\) and \(s\) are comparable.

 \item Let \(u = u_1\dots u_k\) be a word such that the run of \(\A\)
 over \(u\) from \(\iota\) leads to \(\rho\) without cycle.
 By definition of the
   rank \(\norm{\delta(\iota,u_1 \dots u_{k-1})} =
   \norm{\sep}\). Since \(\sep\) is the only state of rank
   \(\norm{\sep}\), by Lemma~\ref{lemma:unicity above sep},
   \(\delta(\iota,u_1\dots u_{k-1}) = \sep\). Thus \(\delta(\sep,u_k)
   = \rho\). 
   
   \item
   Since \(\A\) is deterministic, the fact that \(|\sep| \leq |\Sigma|\) follows immediately from Item~\ref{lemma:transition from sep to setsep}.
   Moreover, \(|\setsep| \neq 0\) as otherwise \(\preccurlyeq\) would be a total order on
   \(S\), contradicting the non-linearity of \(\A\).
   Finally, \(|\setsep| \neq 1\) since \(\sep = \max\set{n \in
   	\N}{ \forall m \leq n, a_m = 1}\) where \(a_m = |\set{s \in
   	S}{\norm{s} = m}|\) by Lemma~\ref{lemma:unicity above sep}.
   Therefore \(2 \leq |\setsep| \leq |\Sigma|\). \qedhere
  \end{enumerate}
\end{proof}

\subsection{Decomposition of non-linear Automata}\label{sec:decomposition}

\begin{figure}[t]
  \centering
  \includegraphics[width=1\textwidth]{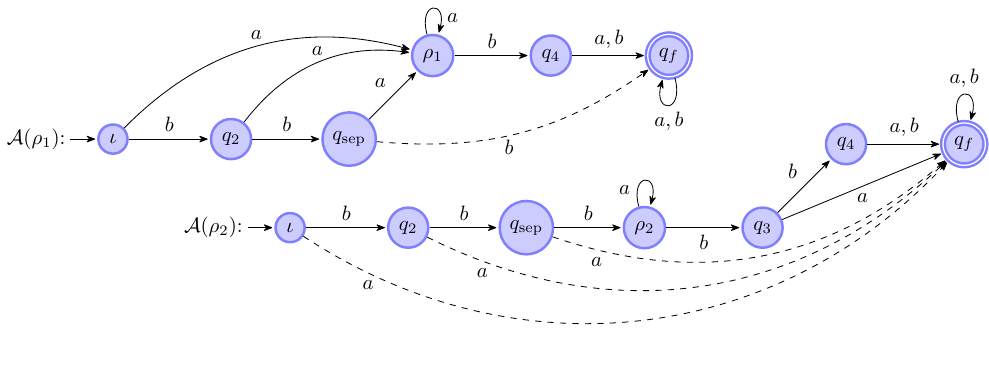}
  
  \caption{Naive decomposition of the automaton \(\A\) depicted in Fig.~\ref{fig: nonlinear}
    into automata that don't recognize ideals.
    The transitions modified compared to the initial automaton are dashed.}
  \label{fig: naive_decompositionAB}
\end{figure}

\begin{figure}[t]
	\centering
	\includegraphics[width=1\textwidth]{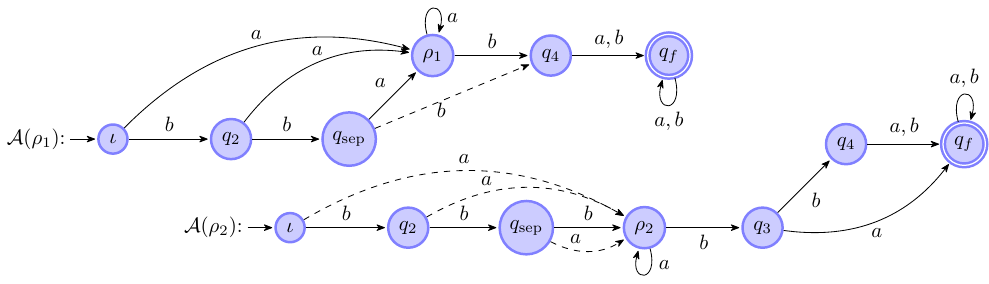}
	
	\caption{Decomposition of the automaton \(\A\) depicted in Fig.~\ref{fig: nonlinear}
		into its family automata \(\A(\rho_1)\) and \(\A(\rho_2)\).
		The transitions modified compared to the initial automaton are dashed.}
	\label{fig: decompositionAB}
\end{figure}

Let us first describe intuitively the proposed construction to
decompose non-linear automata. For every state \(\rho\) of \(\setsep\)
we define a \emph{family automaton}
\(\A(\rho)\) whose set of states is \(\mifa{\rho}\).
The main idea is to redirect into \(\mifa{\rho}\) all the transitions of \(\A\) that leave this set,
while preserving the property of recognizing an ideal.
A naive approach, illustrated on Fig.~\ref{fig: naive_decompositionAB}, would be to redirect such transitions to the unique final sink state.
Although this yields a decomposition of \(\A\), it does not, in general, preserve automata recognizing ideals.
Ensuring that the decomposition remains within this class therefore requires a
more careful construction.
The two main results underlying Theorem~\ref{theo:non-linear are decomposable} are
Proposition~\ref{lemma:RA'q' equality}, which shows that each family automaton \(\A(\rho)\)
recognizes an ideal, and Proposition~\ref{lemma:L(A) = inter L(A(a))},
which ensures that the family automata \((\A(\rho))_{\rho \in \setsep}\)
provide a valid decomposition of \(\A\).

\subparagraph*{Family automaton.} Let \(\A = \minAuto\) be a non-linear minimal automaton that
  recognizes an ideal. For all \(\rho \in \setsep\), we define the \emph{family automaton}
  \(\A(\rho) = (\mifa{\rho},
  \Sigma, \iota, F, \delta_\rho)\),
  where the set of transitions \(\delta_\rho\) is defined as follows.
  For all \(q \in
  \mifa{\rho}\) and \(a \in \Sigma\), we let 
  \[\delta_\rho(q,a) = \delta(s^\rho_{(q,a)},a) \text{, where } 
  s_{(q,a)}^\rho= \min\set{s \in \mifa{\rho}}{q \preccurlyeq s \wedge
  \delta(s,a) \in \mifa{\rho}}.\]
The existence of the least element is provided by Lemma~\ref{lemma:existence of s_(q,b)} which is deferred to appendix.
This construction is illustrated in Fig.~\ref{fig: nonlinear} and~\ref{fig: decompositionAB}:
the automaton depicted in Fig.~\ref{fig: nonlinear} is
decomposed into the two automata \(\A_{\rho_1}\) and \(\A_{\rho_2}\) depicted in Fig.~\ref{fig: decompositionAB}.
Note that the set of states of each automaton \(\A(\rho_i)\) is a
strict subset of the sets of \(\A\), providing the size constraint for
the decomposition. All the transitions starting from a removed state
are deleted. Furthermore, \(\delta_{\rho_i}\) and \(\delta\) coincide on transitions whose source and target
both belong to \(\mifa{\rho_i}\).
The two sets of transitions critically differ for transitions
starting from a state of \(\mifa{\rho}\) and leaving this set:
in this case, the transition set \(\delta_{\rho_i}\) redirects the transition to
the least state greater than or equal to \(\rho_i\) pointed by
the same letter in \(\delta\).
We show that the family automata are smaller than \(\A\).

\begin{proposition}\label{lemma:|A(a)| < |A|}
    Let \(\A\) be a non-linear automaton that recognizes an ideal.
    Then, for every \(\rho \in \setsep\), one has \(|\A(\rho)|<|\A|\).
\end{proposition}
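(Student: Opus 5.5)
The state set of \(\A(\rho)\) is \(\mifa{\rho}\subseteq S\), so \(|\A(\rho)| = |\mifa{\rho}| \leq |S| = |\A|\). It therefore suffices to exhibit one state of \(S\) that lies outside \(\mifa{\rho}\), that is, a state neither an ancestor nor a descendant of \(\rho\).

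By Proposition~\ref{prop:sep}.\ref{lemma:size setsep geq 2} we have \(|\setsep|\geq 2\), so I pick \(\rho'\in\setsep\) with \(\rho'\neq\rho\). By definition of the separator set, \(\norm{\rho}=\norm{\rho'}=\norm{\sep}+1\). Since the two states are distinct and have equal rank, Proposition~\ref{porp:rank and states}.\ref{cor:eq norm and preccurlyeq} shows that \(\rho\) and \(\rho'\) are incomparable for \(\preccurlyeq\). Hence \(\rho'\notin\pred{\rho}\) and \(\rho'\notin\succs{\rho}\), so \(\rho'\notin\mifa{\rho}\). This gives \(\mifa{\rho}\subsetneq S\), and thus \(|\A(\rho)|<|\A|\).

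The only point needing care is that the automaton \(\A(\rho)\) is well defined with state set exactly \(\mifa{\rho}\). This depends on the existence of the minima \(s^\rho_{(q,a)}\), which is Lemma~\ref{lemma:existence of s_(q,b)}, and I take that lemma as given. The size bound itself needs no property of the transitions, so there is no real obstacle here.
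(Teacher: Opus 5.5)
Your proof is correct and follows essentially the same argument as the paper. You use \(|\setsep|\geq 2\) to pick a second state of the separator set; that state has the same rank as \(\rho\), so it is incomparable to \(\rho\) and lies outside \(\mifa{\rho}\), which means \(\A(\rho)\) omits at least one state of \(\A\).
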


\begin{proof}
  Proposition~\ref{prop:sep}.\ref{lemma:size setsep geq
  2} states that \(|\setsep| \geq 2\),
  hence there exists \(q \in \setsep\setminus\{\rho\}\).
  Then, by
  definition of \(\setsep\), \(\norm{q} =
  \norm{\rho}\),
  and Proposition~\ref{porp:rank and states}.\ref{cor:eq norm and preccurlyeq} implies \(q \notin \mifa{\rho}\).
  Consequently, the construction of \(\A(\rho)\)
  excludes at least one state of \(|\A|\),
  namely \(q\), hence \(|\A(\rho)|< |\A|\).
\end{proof}

\begin{lemrep}\label{lemma:existence of s_(q,b)}
 Let \(\A = \minAuto\) be in \(\I\). Let \(\rho \in \setsep\), \(q \in \mifa{\rho}\) and 
 \(a \in \Sigma\). The set \(\set{s \in \mifa{\rho}}{q \preccurlyeq s
   \wedge \delta(s,a) \in \mifa{\rho}}\) has a least element.
\end{lemrep}

\begin{proof}
If \(\rho \preccurlyeq q\) then \(q\) is the minimum of this set. And
if \(q \preccurlyeq \rho\), then \(\rho\) belongs to this set, and
Proposition~\ref{prop:sep}.\ref{lemma:pred(q in setsep) fully ordered} gives that \(\pred{\rho}\) is a totally ordered finite set,
ensuring the existence of the least element.
\end{proof}

\begin{toappendix}
Before proving results on \(\A(\rho)\), we first point out in Lemma~\ref{lemma: delta' above q_sep} and~\ref{lemma:A' when q leq q_a} as well as in
Corollary~\ref{cor:A' when q leq q_a} some relations between the transitions of
\(\A\) and the transitions of  \(\A(\rho)\).

\begin{lemma}\label{lemma: delta' above q_sep}
   Let \(\A = \minAuto\) be a non-linear minimal automaton that
   recognizes an ideal, and let \(\rho \in \setsep, a \in \Sigma, q \in
   \mifa{\rho}\). If \(\delta_\rho(q,a) \prec_\A
   \rho\), then \(\delta_\rho(q,a) = \delta(q,a)\).
\end{lemma}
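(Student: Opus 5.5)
We argue directly from the definition \(\delta_\rho(q,a) = \delta(s,a)\), where \(s = s^\rho_{(q,a)}\) is the least state of \(\mifa{\rho}\) above \(q\) whose \(a\)-successor stays in \(\mifa{\rho}\). The target is to show \(s = q\), which immediately gives \(\delta_\rho(q,a)=\delta(q,a)\).

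First we place \(s\) relative to \(\rho\). By definition \(q \preccurlyeq s\) and \(s \preccurlyeq \delta(s,a) = \delta_\rho(q,a) \prec \rho\). Hence \(q \prec \rho\) and \(s \prec \rho\), so all of \(q\), \(s\) and \(\delta(s,a)\) lie in \(\pred{\rho}\setminus\{\rho\}\). By Proposition~\ref{prop:sep}.\ref{lemma:pred(q in setsep) fully ordered}, this set is totally ordered.

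Next we show that in fact \(q \preccurlyeq \sep\). Every state of \(\pred{\rho}\setminus\{\rho\}\) has rank \(<\norm{\rho}=\norm{\sep}+1\) by Proposition~\ref{porp:rank and states}.\ref{lemma:norm and transition}, so it has rank at most \(\norm{\sep}\). By Lemma~\ref{lemma:unicity above sep}, every rank up to \(\norm{\sep}\) is held by a unique state. Proposition~\ref{porp:rank and states}.\ref{lemma:lower than n implis path taht uses n} then gives that each such state is an ancestor of \(\sep\). Consequently \(q\), \(s\) and \(\delta(s,a)\) are all \(\preccurlyeq \sep\), and Proposition~\ref{prop:sep}.\ref{lemma: above sep mifa = S} gives \(\mifa{q}=S\). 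The ancestor claim is the main point needing care.

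Now consider \(\delta(q,a)\). Since \(\mifa{q}=S\), it suffices to show \(\delta(q,a)\in\mifa{\rho}\); then \(q\) itself belongs to the set defining \(s\), and minimality forces \(s=q\). From \(q \preccurlyeq s\), write \(s=\delta(q,v)\). Apply Proposition~\ref{prop:new prop on ideals}.\ref{lemma: min automata preccurlyeq equality} to the pair \(\delta(q,a)\) and \(\delta(q,va)=\delta(s,a)\). It yields that either \(\delta(q,a)\preccurlyeq\delta(s,a)\), or the two states are incomparable. Incomparability is impossible, because \(\delta(s,a) \preccurlyeq \sep\) and \(\mifa{\delta(s,a)}=S\) by Proposition~\ref{prop:sep}.\ref{lemma: above sep mifa = S}. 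Hence \(\delta(q,a)\preccurlyeq\delta(s,a)\prec\rho\), so \(\delta(q,a)\in\pred{\rho}\subseteq\mifa{\rho}\). Therefore \(s=q\), and \(\delta_\rho(q,a)=\delta(q,a)\).
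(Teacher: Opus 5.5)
Your proof is correct and follows the same route as the paper: you show \(s^\rho_{(q,a)} = q\) by combining Proposition~\ref{prop:new prop on ideals}.\ref{lemma: min automata preccurlyeq equality} with the comparability from Proposition~\ref{prop:sep}.\ref{lemma: above sep mifa = S} to get \(\delta(q,a) \preccurlyeq \delta(s,a) \prec \rho\), which puts \(q\) itself in the set defining \(s\). Your rank argument, showing that every strict ancestor of \(\rho\) is an ancestor of \(\sep\), makes explicit a step the paper uses without justification when it invokes Proposition~\ref{prop:sep}.\ref{lemma: above sep mifa = S}.
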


\begin{proof}
  Let \(s = s_{q,a}^\rho\). By definition of \(\delta_\rho\) one has that \(\delta_\rho(q,a) = \delta(s,a)\), therefore it suffices to show that \(s = q\) to conclude the lemma. By hypothesis \(\delta_\rho(q,a) \prec_\A \rho\), hence Proposition~\ref{prop:sep}.\ref{lemma: above sep mifa = S} says that \(\delta(s,a)\) and \(\delta(q,a)\) are comparable in \(\A\). Also by definition of \(s\), one has \(q \preccurlyeq_\A s\), hence Proposition~\ref{prop:new prop on ideals}.\ref{lemma: min automata preccurlyeq equality} yields \(\delta(q,a) \preccurlyeq \delta(s,a)\). Thus by transitivity, since \(\delta(s,a) \prec_\A \rho\), one has \(\delta(q,a) \prec_\A \rho\). This says that \(\delta(q,a) \in \mifa{\rho}\), which forces by definition of \(s_{q,a}^\rho\) that \(s = q\). Hence one can concludes that \(\delta_\rho(q,a) = \delta(q,a)\).
\end{proof}

\begin{lemma}\label{lemma:A' when q leq q_a}
	Let \(\A = \minAuto\) be a non-linear minimal automaton that recognizes an ideal, and let \(\rho \in \setsep, q \in \mifa{\rho}\). For all \(a \in \Sigma\) such that \(\delta(q,a) \in \mifa{\rho}\), one has \(\delta(q,a)=\delta_\rho(q,a)\).
\end{lemma}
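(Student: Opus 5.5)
The claim reduces to showing that \(s^\rho_{(q,a)} = q\), since \(\delta_\rho(q,a) = \delta(s^\rho_{(q,a)},a)\) by definition. I will verify directly that \(q\) is the least element of the set
\[\set{s \in \mifa{\rho}}{q \preccurlyeq s \wedge \delta(s,a) \in \mifa{\rho}}\]
whose minimum defines \(s^\rho_{(q,a)}\). Lemma~\ref{lemma:existence of s_(q,b)} already guarantees that this minimum exists, so it suffices to check two things about \(q\).

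First, \(q\) belongs to the set. Indeed, \(q \in \mifa{\rho}\) by hypothesis, \(q \preccurlyeq q\) by reflexivity of the accessibility relation, and \(\delta(q,a) \in \mifa{\rho}\) is exactly the assumption of the lemma.

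Second, \(q\) lies below every element of the set. Every element \(s\) satisfies \(q \preccurlyeq s\) by the defining condition, so \(q\) is a lower bound of the set, and it is therefore its least element. Hence \(s^\rho_{(q,a)} = q\), and so \(\delta_\rho(q,a) = \delta(q,a)\).

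There is no real obstacle. The only subtlety is that ``least element'' is taken with respect to \(\preccurlyeq\), which is a partial order by Proposition~\ref{prop: hold prop on ideals}. Since the set contains \(q\) and every element of the set lies above \(q\), the least element must be \(q\) itself by antisymmetry, regardless of whether \(\preccurlyeq\) is total on \(\mifa{\rho}\).
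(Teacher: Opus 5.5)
Your proof is correct and matches the paper's: both observe that \(q\) itself lies in \(\set{s \in \mifa{\rho}}{q \preccurlyeq s \wedge \delta(s,a) \in \mifa{\rho}}\) and is below every element of it, so \(s^\rho_{(q,a)} = q\) and hence \(\delta_\rho(q,a) = \delta(q,a)\). Your appeal to Lemma~\ref{lemma:existence of s_(q,b)} is not needed, since you exhibit the least element directly.
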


\begin{proof}
   Let \(a \in \Sigma\) such that \(\delta(q,a) \in
   \mifa{\rho}\). Then \(q\) is exactly the least element
   \(\min\set{s' \in \mifa{\rho}}{s' \preccurlyeq q' \wedge
     \delta(s',a) \in \mifa{\rho}}\).
\end{proof}

As a direct consequence, one gets the following corollary.

\begin{corollary}\label{cor:A' when q leq q_a}
	Let \(\A = \Auto\) be a non-linear minimal automaton that recognizes an ideal, and let \(\rho \in \setsep, q \in \succs{\rho}\). The accessible part of the automaton \((S,\Sigma, q, F, \delta)\) is equal to the accessible part of the automaton \((\mifa{\rho},\Sigma, q, F, \delta_\rho)\).
\end{corollary}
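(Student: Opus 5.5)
The corollary states that, starting from a state \(q \in \succs{\rho}\), the accessible part of \((S,\Sigma,q,F,\delta)\) coincides with that of \((\mifa{\rho},\Sigma,q,F,\delta_\rho)\). The plan is to show that, from \(q\), the set of states reachable under \(\delta\) is contained in \(\succs{\rho}\subseteq\mifa{\rho}\), and that on this set the two transition functions agree.

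\emph{Step 1: the reachable set stays inside \(\mifa{\rho}\).} Let \(X=\succs{q}\) be the set of states reachable from \(q\) in \(\A\). Since \(\rho\preccurlyeq q\), transitivity of \(\preccurlyeq\) gives \(\rho\preccurlyeq r\) for every \(r\in X\). Hence \(X\subseteq\succs{\rho}\subseteq\mifa{\rho}\).

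\emph{Step 2: the two transition functions agree on \(X\).} Take \(r\in X\) and \(a\in\Sigma\). Then \(\delta(r,a)\in X\subseteq\mifa{\rho}\), so Lemma~\ref{lemma:A' when q leq q_a} applies and yields \(\delta_\rho(r,a)=\delta(r,a)\).

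\emph{Step 3: conclude by induction.} Induct on the length of a word \(w\) to show \(\delta_\rho(q,w)=\delta(q,w)\in X\). The base case \(w=\varepsilon\) is immediate, and the inductive step is exactly Step 2. It follows that the set of states reachable from \(q\) is the same in both automata, namely \(X\), and that the restrictions of the two transition functions to \(X\) coincide. Both automata have initial state \(q\), and the final states agree because \(F\) is the same set, intersected with \(X\). The two accessible parts are therefore equal.

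There is no real obstacle. The only point requiring care is that Lemma~\ref{lemma:A' when q leq q_a} demands the image \(\delta(r,a)\) lie in \(\mifa{\rho}\), and this is ensured by closure of \(\succs{\rho}\) under transitions, which is just transitivity of \(\preccurlyeq\).
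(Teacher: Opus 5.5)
Your proof is correct and takes the same approach as the paper, which states this corollary as a direct consequence of Lemma~\ref{lemma:A' when q leq q_a}. You spell out the two steps the paper leaves implicit: \(\succs{\rho}\) is closed under \(\delta\), so the lemma's hypothesis holds at every reachable state, and an induction on word length then gives \(\delta_\rho(q,w)=\delta(q,w)\).
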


The next lemma is dedicated to establish that for all \(q, q' \in
\mifa{\rho}\) one has \(q \preccurlyeq_\A q'\) if only if \(q
\preccurlyeq_{\A(\rho)} q'\). The proof is based on the fact that the
modified transitions in \(\A(\rho)\) can be seen as shortcuts.

\begin{lemma}\label{lemma: preccurlyeq in A' and A}\label{lemma: preccurlyeq in A and A'}
	Let \(\A = \minAuto\) be a non-linear minimal automaton that recognizes an ideal and \(\rho \in \setsep\). The relation \(\preccurlyeq_{\A(\rho)}\) is a partial order and for all \(q, q' \in \mifa{\rho}\) one has \(q \preccurlyeq_\A q'\) if only if \(q \preccurlyeq_{\A(\rho)} q'\).
\end{lemma}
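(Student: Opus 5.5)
We prove the two implications separately and then deduce that \(\preccurlyeq_{\A(\rho)}\) is a partial order from the corresponding property of \(\preccurlyeq_\A\).

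\emph{\(\A\)-accessibility implies \(\A(\rho)\)-accessibility.} Let \(q,q'\in\mifa{\rho}\) with \(q\preccurlyeq_\A q'\), and fix a word \(w\) with \(\delta(q,w)=q'\). The run of \(\A\) from \(q\) on \(w\) only visits states of \(\pred{q'}\cap\succs{q}\). We distinguish two cases.
\begin{itemize}
\item If \(\rho\preccurlyeq_\A q\), then every visited state lies in \(\succs{\rho}\subseteq\mifa{\rho}\). By Lemma~\ref{lemma:A' when q leq q_a} each transition is unchanged in \(\delta_\rho\), so the same run exists in \(\A(\rho)\).
\item If \(q\prec_\A\rho\), then \(q\preccurlyeq\sep\). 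Indeed, \(\norm{q}<\norm{\rho}=\norm{\sep}+1\), and by Lemma~\ref{lemma:unicity above sep} states of rank at most \(\norm{\sep}\) are unique, hence all lie below \(\sep\). By Proposition~\ref{prop:sep}.\ref{lemma: above sep mifa = S}, \(\mifa{q}=S\), so \(\rho\) and \(q'\) are comparable. If \(q'\preccurlyeq\rho\), every visited state lies in \(\pred{q'}\subseteq\pred{\rho}\subseteq\mifa{\rho}\), and Lemma~\ref{lemma:A' when q leq q_a} again keeps the run intact. If \(\rho\prec q'\), we use the chain \(q\preccurlyeq_\A\sep\) and \(\delta(\sep,a)=\rho\) from Proposition~\ref{prop:sep}.\ref{lemma:transition from sep to setsep}. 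The path from \(q\) to \(\sep\) stays in \(\pred{\rho}\) and is preserved, the transition \(\sep\xrightarrow{a}\rho\) stays inside \(\mifa{\rho}\) and is preserved, and from \(\rho\) to \(q'\) we are in the first case.
\end{itemize}

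\emph{\(\A(\rho)\)-accessibility implies \(\A\)-accessibility.} By induction on a run, it suffices to show that for one transition, \(\delta_\rho(q,a)=\delta(s,a)\) with \(s=s^\rho_{(q,a)}\) and \(q\preccurlyeq_\A s\). Then \(q\preccurlyeq_\A s\preccurlyeq_\A\delta(s,a)\), which gives \(q\preccurlyeq_\A\delta_\rho(q,a)\). This is exactly the ``shortcut'' intuition, and it is the easy direction.

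\emph{Partial order.} Reflexivity and transitivity hold for any accessibility relation. For antisymmetry, suppose \(q\preccurlyeq_{\A(\rho)}q'\) and \(q'\preccurlyeq_{\A(\rho)}q\). The second direction gives the same relations in \(\A\), and antisymmetry of \(\preccurlyeq_\A\) (Proposition~\ref{prop: hold prop on ideals}) yields \(q=q'\).

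The main obstacle is the first direction in the case where the \(\A\)-run from \(q\) below \(\rho\) passes through states outside \(\mifa{\rho}\), for instance through some other \(\rho'\in\setsep\). This is why we do not try to reuse the original run there. Instead we reroute through \(\sep\) and the letter leading to \(\rho\), using that the region below \(\sep\) is totally ordered. The points that need care are the rank bookkeeping establishing \(q\preccurlyeq\sep\), and confirming that transitions between states of \(\pred{\rho}\) are unchanged.
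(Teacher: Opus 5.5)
Your proof is correct and follows the paper's proof essentially step by step. It uses the same shortcut induction for the direction from \(\A(\rho)\) to \(\A\), pulls antisymmetry back from \(\preccurlyeq_\A\), and makes the same case split on the position of \(q,q'\) relative to \(\rho\), gluing at \(\rho\) when \(q\preccurlyeq\rho\preccurlyeq q'\). The one difference is your detour through \(\sep\), which is unnecessary: \(q\prec_\A\rho\) already gives a run from \(q\) to \(\rho\) inside \(\pred{\rho}\subseteq\mifa{\rho}\). Also, the comparability of \(\rho\) and \(q'\) follows directly from \(q'\in\mifa{\rho}\), not from \(\mifa{q}=S\).
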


\begin{proof}
  Let \(q,q^\prime \in \mifa{\rho}\). 
  \begin{itemize}
  \item We first prove that if \(q \preccurlyeq_{\A(\rho)} q^\prime\), then
    \(q \preccurlyeq_{\A} q^\prime\).

    Since \(q \preccurlyeq_{\A(\rho)} q^\prime\), there exists \(u \in
    \Sigma^*\) such that \(\delta_\rho(q,u) = q^\prime\). Let's show by
    induction on \(|u|\) that \(q \preccurlyeq_\A q^\prime\).

  If \(|u|=0\), then \(u = \varepsilon\) and, therefore \(q =
  q^\prime\). Hence \(q \preccurlyeq_\A q^\prime\).

 Now, let \(u'\in \Sigma^*\) and \(a \in \Sigma\) be such that \(u = au'\). One has \(\delta_\rho(\delta_\rho(q,a),u') = q^\prime\), and \(|u'| < |u|\).  By induction, we have \(\delta_\rho(q,a) \preccurlyeq_\A q^\prime\). Since \(\preccurlyeq_\A\) is transitive, it  suffices to show that \(q \preccurlyeq_\A \delta_\rho(q,a)\).

 If \(\delta(q,a) \in \mifa{\rho}\), then by definition of \(\delta_\rho\), we have \(\delta_\rho(q,a)=\delta(q,a)\). Then, by definition of \(\preccurlyeq_\A\), one has \(q \preccurlyeq_\A \delta_\rho(q,a)\). If \(\delta(q^\prime,a) \notin \mifa{\rho}\), then, by definition of \(\delta_\rho\) there exists \(r \in S\) such that \(q \preccurlyeq_A r\) and that \(\delta_\rho(q^\prime,a) = \delta(r,a)\). By transitivity of \(\preccurlyeq_\A\) one has \(q \preccurlyeq_\A \delta_\rho(q,a)\).

\item We now show that \(\preccurlyeq_{\A(\rho)}\) is an order. By
  definition of \(\preccurlyeq\), the relation \(\preccurlyeq_{\A(\rho)}\)
  is reflexive and transitive. Then we only have to show that
  \(\preccurlyeq_{\A(\rho)}\) is antisymmetric.

  Let \(p,p^\prime \in \mifa{\rho}\) be such that \(p^\prime
  \preccurlyeq_{\A(\rho)} p\) and \(p \preccurlyeq_{\A(\rho)} p^\prime\). The
  argument above proves that \(p \preccurlyeq_{A} p^\prime\) and
  \(p^\prime \preccurlyeq_{\A} p\). Finally the antisymmetry of
  \(\preccurlyeq_{\A}\) provides that \(p = p^\prime\). Thus
  \(\preccurlyeq_{\A(\rho)}\) is antisymmetric.

\item  We prove now that if \(q \preccurlyeq_{\A} q'\), then \(q \preccurlyeq_{\A(\rho)} q'\).

This proof is done by case distinction
on the relations between \(q\), \(q'\) and \(\rho\).
   \begin{itemize}
     \item If \(\rho \preccurlyeq_{\A} q\):

      Let \(u \in \Sigma^*\) such that \(\delta(q,u) = q'\),
      Corollary~\ref{cor:A' when q leq q_a} says that \(\delta_\rho(q,u) =
      q'\). Hence, \(q \preccurlyeq_{\A(\rho)} q'\).

     \item If \(q' \preccurlyeq_{\A} \rho\):

      Let \(u_1\dots u_n \in \Sigma^*\) be such that \(\delta(q,u_1 \dots u_n) = q'\). By transitivity of \(\preccurlyeq_\A\), for all \(i \in \setint{n}\), one has \(\delta(q,u_1 \dots u_i) \preccurlyeq_A \rho\). Consequently, \(\delta(q,u_1 \dots u_i) \in \mifa{\rho}\). Hence Lemma~\ref{lemma:A' when q leq q_a} applied \(n\) times provides that \(\delta_\rho(q,u_1 \dots u_n) = \delta(q,u_1 \dots u_n)\). Therefore \(\delta_\rho(q,u_1 \dots u_n) = q'\), providing by definition of \(\preccurlyeq\) that \(q \preccurlyeq_{\A(\rho)} q'\).

   \item If \(q \preccurlyeq_{\A} \rho\) and \(\rho \preccurlyeq_{\A} q'\):

     The first case of this part of the proof applied on \(q \preccurlyeq_{\A} \rho\) gives \(q \preccurlyeq_{\A(\rho)} \rho\), and the second case applied on \(\rho \preccurlyeq_{\A} q'\) provides \(\rho \preccurlyeq_{\A(\rho)} q'\). Then the transitivity of \(\preccurlyeq_{A(\rho)}\) allows us to conclude \(q \preccurlyeq_{\A(\rho)} q'\)
	\end{itemize}

 \end{itemize}
\end{proof}

\begin{lemma}\label{lemma: E = E'}
Let \(\A = \minAuto\) be a non-linear minimal automaton that
recognizes an ideal, let \(\rho \in \setsep\) and let \(q \in
\mifa{\rho}\). One has \(\set{a \in \Sigma}{\delta(q,a) =
  q}=\set{a \in \Sigma}{\delta_\rho(q,a) = q}\).
\end{lemma}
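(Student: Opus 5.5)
We prove the two inclusions separately. The only tools needed are the definition of \(\delta_\rho\) through \(s^\rho_{(q,a)}\), Lemma~\ref{lemma:A' when q leq q_a}, and the antisymmetry of \(\preccurlyeq_\A\). The latter holds by Proposition~\ref{prop: hold prop on ideals}, since \(\A\) is minimal and recognizes an ideal.

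\emph{Inclusion \(\subseteq\).} Let \(a\) satisfy \(\delta(q,a)=q\). Then \(\delta(q,a)=q\in\mifa{\rho}\), so Lemma~\ref{lemma:A' when q leq q_a} gives \(\delta_\rho(q,a)=\delta(q,a)=q\). Equivalently, \(q\) itself belongs to the set \(\set{s \in \mifa{\rho}}{q \preccurlyeq s \wedge \delta(s,a) \in \mifa{\rho}}\), and it is obviously the least element of that set, so \(s^\rho_{(q,a)}=q\).

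\emph{Inclusion \(\supseteq\).} Let \(a\) satisfy \(\delta_\rho(q,a)=q\), and set \(s=s^\rho_{(q,a)}\). Then \(\delta(s,a)=\delta_\rho(q,a)=q\). By definition of \(s\) we have \(q\preccurlyeq_\A s\). Since \(q=\delta(s,a)\), we also have \(s\preccurlyeq_\A q\). Antisymmetry of \(\preccurlyeq_\A\) then forces \(s=q\), and hence \(\delta(q,a)=\delta(s,a)=q\).

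The only point needing care is in the second inclusion. A redirected transition could a priori create a new self-loop at \(q\), namely when \(s\neq q\) but \(\delta(s,a)=q\). We rule this out by combining \(q\preccurlyeq s\) and \(s\preccurlyeq q\). This requires \(\preccurlyeq_\A\) to be a genuine partial order rather than a preorder, which is exactly where the hypothesis that \(\A\) is minimal and recognizes an ideal is used. The existence of \(s\) is guaranteed by Lemma~\ref{lemma:existence of s_(q,b)}.
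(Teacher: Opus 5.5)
Your proof is correct. For the nontrivial inclusion \(\supseteq\) it takes a different and more economical route than the paper; your inclusion \(\subseteq\) is the same as the paper's.

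The paper splits on whether \(\rho \preccurlyeq q\) or \(q \prec \rho\). In the first case, all successors of \(q\) lie in \(\succs{\rho}\subseteq\mifa{\rho}\), so Lemma~\ref{lemma:A' when q leq q_a} gives \(\delta_\rho(q,a)=\delta(q,a)\) for every letter. In the second case it invokes Lemma~\ref{lemma: delta' above q_sep}. That lemma's proof goes through the rank structure around \(\sep\) (Proposition~\ref{prop:sep}.\ref{lemma: above sep mifa = S}) and Proposition~\ref{prop:new prop on ideals}.\ref{lemma: min automata preccurlyeq equality} to show that a transition of \(\A(\rho)\) landing strictly below \(\rho\) was not redirected.

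You instead note that \(\delta(s^\rho_{(q,a)},a)=q\) together with \(q\preccurlyeq s^\rho_{(q,a)}\) forces \(s^\rho_{(q,a)}=q\) by antisymmetry. This treats both cases at once. It needs only that \(\preccurlyeq_\A\) is a partial order (Proposition~\ref{prop: hold prop on ideals}) and that \(s^\rho_{(q,a)}\) is well defined.

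The paper's detour does establish the more general Lemma~\ref{lemma: delta' above q_sep}, which covers arbitrary targets strictly below \(\rho\), not just self-loops. However, that lemma is used only here, so your argument makes it dispensable.
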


\begin{proof}
  Let \(q\) be a state of \(\A(\rho)\). We denote \(E' = \set{b \in
    \Sigma}{\delta_\rho(q,b) = q}\) and we also write \(E~=~\set{a \in \Sigma}{\delta(q,a)
    = q}\).~Since \(q \in \mifa{\rho}\), \(q\) and \(\rho\) are comparable: we reason by case distinction on the
  relation between \(\rho\) and \(q\).

  \begin{itemize}
  \item If \(\rho \preccurlyeq_{\A} q\): by Lemma~\ref{lemma:A' when q
   leq q_a}, for all \(a \in \Sigma, \delta_\rho(a,b) = \delta(a,b)\),
   hence \(E = E'\).
	
 \item Otherwise if \(q \prec \rho\): Let \(a \in E'\). Since \(q\) is a state of  \(A(\rho)\), Lemma~\ref{lemma: delta' above q_sep} provides that
   \(\delta_\rho(q,a) = \delta(q,a)\). Thus, since \(\delta_\rho(q,a) = q\), one
   has \(\delta(q,a) = q\). Therefore \(a \in E\) and \(E' \subseteq
   E\).

 Let \(a \in E\). Therefore \(\delta(q,a) \in \mifa{\rho}\)
 since \(\delta(q,a) = q\). Thus, Lemma~\ref{lemma:A' when q leq
   q_a} yields \(\delta_\rho(q,a) = \delta(q,a)\). Therefore \(\delta_\rho(q,a)= q\)
 and thus \(a \in E'\). Hence \(E \subseteq E'\).
        \end{itemize}
\end{proof}

The following lemma will be fruitful to show results on languages recognized by the automata \(\A(\rho)\).

\begin{lemma}\label{lemma:RAq subset RA'q}
	Let \(\A = \Auto\) be a non-linear minimal automaton recognizing an ideal, let \(\rho \in \setsep\) and let \(q \in \mifa{\rho}\). One has \(\Res{\A}{q} \subseteq \Res{\A(\rho)}{q}\).
\end{lemma}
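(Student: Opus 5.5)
Let \(w \in \Res{\A}{q}\). We show \(w \in \Res{\A(\rho)}{q}\) by induction on \(|w|\), with the statement quantified over all \(q \in \mifa{\rho}\) simultaneously. The key invariant is that at every step of the run of \(\A(\rho)\) on \(w\) from \(q\), the current state lies \(\preccurlyeq_\A\)-above the current state of the run of \(\A\) on a suitable word. More precisely, we prove the following stronger claim: for all \(q \in \mifa{\rho}\) and \(w \in \Sigma^*\), the residual of \(\A\) at \(\delta(q,w)\) is contained in the residual of \(\A\) at \(\delta_\rho(q,w)\). Taking the residual containment at the end of the word and using that the unique final state \(q_f\) is a sink (Proposition~\ref{prop: hold prop on ideals}.\ref{lemma: ideals unique final state}) gives the result. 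If \(\delta(q,w) = q_f\), then \(\Res{\A}{\delta_\rho(q,w)} \supseteq \Sigma^*\), so \(\varepsilon\) is accepted from \(\delta_\rho(q,w)\) in \(\A\), i.e. \(\delta_\rho(q,w) = q_f \in F\).

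\emph{One step.} Fix \(q \in \mifa{\rho}\) and a letter \(a\). Let \(s = s^\rho_{(q,a)}\), so \(q \preccurlyeq_\A s\) and \(\delta_\rho(q,a) = \delta(s,a)\). By Proposition~\ref{prop:new prop on ideals}.\ref{lemma:preccurlyeq and inclusion of residuals}, \(q \preccurlyeq s\) gives \(\Res{\A}{q} \subseteq \Res{\A}{s}\), hence \(a^{-1}\Res{\A}{q} \subseteq a^{-1}\Res{\A}{s}\). That is, \(\Res{\A}{\delta(q,a)} \subseteq \Res{\A}{\delta_\rho(q,a)}\).

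\emph{Propagation.} To iterate, I would generalise the one-step statement to pairs \((p,p')\) with \(p \in S\) arbitrary, \(p' \in \mifa{\rho}\) and \(\Res{\A}{p} \subseteq \Res{\A}{p'}\). Given a letter \(a\), we need \(\Res{\A}{\delta(p,a)} \subseteq \Res{\A}{\delta_\rho(p',a)}\). Since \(\Res{\A}{\delta(p,a)} = a^{-1}\Res{\A}{p} \subseteq a^{-1}\Res{\A}{p'} = \Res{\A}{\delta(p',a)}\), the single-step argument applied at \(p'\) yields \(\Res{\A}{\delta(p',a)} \subseteq \Res{\A}{\delta_\rho(p',a)}\), and transitivity closes the induction. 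Starting from \(p = p' = q\) and reading \(w\) letter by letter gives the stronger claim.

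\emph{Obstacle.} The delicate point is justifying that \(\Res{\A}{\cdot}\) of a state of \(\A(\rho)\) is well defined in terms of \(\A\). This is needed because the run of \(\A(\rho)\) stays inside \(\mifa{\rho} \subseteq S\), so residuals computed in \(\A\) make sense. I also have to make sure I never need the order \(\preccurlyeq\) between \(\delta(p,a)\) and \(\delta_\rho(p',a)\), only residual inclusion, which is exactly what Proposition~\ref{prop:new prop on ideals}.\ref{lemma:preccurlyeq and inclusion of residuals} provides. The existence of \(s^\rho_{(q,a)}\) is guaranteed by Lemma~\ref{lemma:existence of s_(q,b)}, so no further case split on the position of \(q\) relative to \(\rho\) should be needed.
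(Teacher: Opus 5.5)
Your proof is correct, and it takes a genuinely different route from the paper's.

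\textbf{The paper's route.} The paper uses well-founded induction on \(\preccurlyeq_{\A(\rho)}\), and the steps are as follows.
\begin{enumerate}
\item It writes \(\Res{\A}{q}\) and \(\Res{\A(\rho)}{q}\) as \(E^*\) followed by a union, over the non-loop letters \(a\), of \(a\) times the residual at the \(a\)-successor.
\item It invokes Lemma~\ref{lemma: E = E'} to see that \(q\) has the same loop letters in \(\A\) and in \(\A(\rho)\).
\item It obtains \(\Res{\A}{\delta(q,a)}\subseteq\Res{\A}{\delta_\rho(q,a)}\) from Proposition~\ref{prop:new prop on ideals}.\ref{lemma:R(delta(a)) subset R(delta(ua))}.
\item It concludes with the induction hypothesis at the strict \(\A(\rho)\)-successor \(\delta_\rho(q,a)\).
\end{enumerate}
The last step tacitly relies on Lemma~\ref{lemma: preccurlyeq in A' and A}, i.e.\ on \(\preccurlyeq_{\A(\rho)}\) being a partial order, for the induction to be well founded.

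\textbf{Your route.} You run \(\A\) and \(\A(\rho)\) in parallel on \(w\). By induction on \(|w|\) you maintain the invariant \(\Res{\A}{p}\subseteq\Res{\A}{p'}\) on the pair of current states, stated entirely in terms of residuals of \(\A\). Each step uses only three things: \(\delta_\rho(p',a)=\delta(s,a)\) with \(p'\preccurlyeq_\A s\) (Lemma~\ref{lemma:existence of s_(q,b)}), Proposition~\ref{prop:new prop on ideals}.\ref{lemma:preccurlyeq and inclusion of residuals}, and left quotients. Acceptance is then read off from \(\varepsilon\in\Res{\A}{\delta_\rho(q,w)}\).

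\textbf{What each buys.} Your argument is shorter. It needs neither Lemma~\ref{lemma: E = E'}, nor the partial-order lemma, nor any case split on the position of \(q\) relative to \(\rho\). It also shows more generally that redirecting transitions \((q,a)\mapsto\delta(s,a)\) with \(q\preccurlyeq s\), keeping the same final states, can only enlarge the accepted language. The paper's version mainly buys uniformity with the induction scheme it reuses right afterwards in Proposition~\ref{lemma:RA'q' equality}.

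\textbf{One cleanup.} Drop the opening description of the invariant as ``the \(\A(\rho)\)-state lies \(\preccurlyeq_\A\)-above the \(\A\)-state''. That order-based version is false in general. For instance, take the minimal automaton of \(\Shuffle{\{abc,bbc,bd\}}\) over \(\{a,b,c,d\}\).
\begin{itemize}
\item One has \(\sep=\iota\) and \(\setsep=\{\delta(\iota,a),\delta(\iota,b)\}\).
\item The letter \(a\) loops on \(\rho=\delta(\iota,b)\), so \(\delta_\rho(\iota,a)=\rho\).
\item Hence \(\delta_\rho(\iota,a)\) is incomparable with \(\delta(\iota,a)\).
\end{itemize}
The residual-inclusion invariant you actually prove is the right one, as your own ``Obstacle'' paragraph anticipates.
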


\begin{proof}
Let \(q \in \mifa{\rho}\),
and let \(E = \set{a \in \Sigma}{\delta(q,a) = q}\) and \(E' = \set{a
  \in \Sigma}{\delta_\rho(q,a) = q}\). Lemma~\ref{lemma: E = E'} sets
that \(E = E'\). Let's show by a well founded induction on \(\preccurlyeq_{\A(\rho)}\)
that for all \(q \in \mifa{\rho}\) we have \(\Res{\A}{q} \subseteq
\Res{\A(\rho)}{q}\).

If \(\succsA{\A(\rho)}{q}=\{q\}\), then \(q=q_f\) and \(q\) is a sink state
in \(\A(\rho)\). It follows that \(E' = \Sigma\). Since \(q_f\) is also
the final state of \(\A\) and is also a sink state in \(\A\), one has
\(\Res{\A}{q} = \Sigma^* =\Res{\A(\rho)}{q}\).

Assume now that \(\succsA{\A(\rho)}{q} \neq \{q\}\). Since \(\A\) and \(\A(\rho)\) are  both deterministic one has
\[\Res{\A(\rho)}{q} = E'^*\bigcup\limits_{a \in \Sigma \setminus E'} a \Res{\A(\rho)}{\delta_\rho(q,a)} \quad \text{ and } \quad \Res{\A}{q} = E^*\bigcup\limits_{a \in \Sigma \setminus E} a \Res{\A}{\delta(q,a)} \enspace .\]

But \(E = E'\). Therefore it suffices to show that for all \(a \in
\Sigma \setminus E'\), \(\Res{\A}{\delta(q,a)}\) is included in \(\Res{\A(\rho)}{\delta_\rho(q,a)}\).
	
By definition of \(\delta_\rho\) there exists \(s \in \mifa{\rho}\) such that \(q
\preccurlyeq_\A s\) and  \(\delta(s,b) = \delta_\rho(q,b)\). Since \(q
\preccurlyeq_\A s\) there exists \(u \in \Sigma^*\) such that
\(\delta(q,u) = s\). Therefore \(\delta(q,ua) = \delta_\rho(q,a)\). Thus, using
Proposition~\ref{prop:new prop on ideals}.\ref{lemma:R(delta(a)) subset R(delta(ua))}, one has
\(\Res{\A}{\delta(q,a)} \subseteq \Res{\A}{\delta_\rho(q,a)}\). Also since
\(a \notin E'\) one has \(q \prec \delta_\rho(q,a)\). Hence, by induction
hypothesis \(\Res{\A}{\delta_\rho(q,a)} \subseteq
\Res{\A(\rho)}{\delta_\rho(q,a)}\). Finally since \(\Res{\A}{\delta(q,a)}
\subseteq \Res{\A}{\delta_\rho(q,a)}\) and \(\Res{\A}{\delta_\rho(q,a)}
\subseteq \Res{\A(\rho)}{\delta_\rho(q,a)}\), one can conclude that
\(\Res{\A}{\delta(q,a)} \subseteq \Res{\A(\rho)}{\delta_\rho(q,a)}\).
\end{proof}
\end{toappendix}

Proposition~\ref{lemma:RA'q' equality} applied to the initial state yields that \(\A(\rho)\) accepts an ideal. The proof is rather technical, relying both on the partially ordered structure of the automata and on the subword closure of ideals.

\begin{proprep}\label{lemma:RA'q' equality}
	Let \(\A = \minAuto\) be a non-linear minimal automaton that recognizes an ideal and \(\rho \in \setsep\). For all \(q \in \mifa{\rho}\), the language \(\Res{\A(\rho)}{q}\) is an ideal.
\end{proprep}

\begin{proof}
  
 \begin{figure}
	\centering
 	\includegraphics[width=0.7\textwidth]{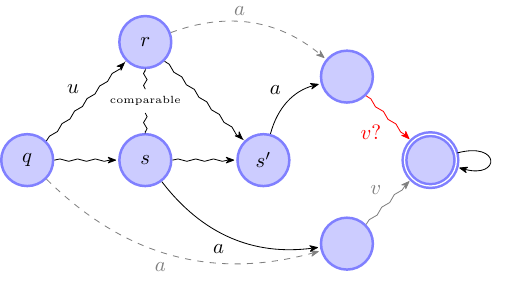}
	\caption{Illustration for proof of Proposition~\ref{lemma:RA'q' equality}. The grey transitions are the ones in \(\A(\rho)\) and the zigzags represent runs over words.}\label{fig:proof nonlin decomp}
 \end{figure}

  Let \(q\in \mifa{\rho}\) and \(E = \set{a \in \Sigma}{\delta_\rho(q,a) = q}\).

	Let's show by induction on \(\preccurlyeq_{\A(\rho)}\) that \(\Res{\A(\rho)}{q}\) is an ideal. In the base case, we do as in the previous proof to show that \(\Res{\A(\rho)}{q} = \Sigma^*\). Otherwise, we show that the structure of the automaton respects the schema described in Fig.~\ref{fig:proof nonlin decomp} and then we conclude by properties on ideals.

	\begin{itemize}
	\item Base case: the automaton \(\A(\rho)\) has a unique final state \(q_f\) which is the greatest element of \(\mifa{\rho}\) for \(\preccurlyeq_{\A(\rho)}\). If \(q=q_f\), \(q\) is a sink state and \(\Res{\A(\rho)}{q} = \Sigma^*\).

	\item If \(q \in \succsA{\A(\rho)}{\rho}\): Corollary~\ref{cor:A' when q leq q_a} ensures that \(\Res{\A(\rho)}{q} = \Res{\A}{q}\). And Proposition~\ref{prop: hold prop on ideals}.\ref{lemma:Res of ideal are ideal} provides that \(\Res{\A}{q}\) is an ideal. Therefore \(\Res{\A(\rho)}{q}\) is an ideal.

	\item If \(q\) is not a sink state in \(\A(\rho)\) and  \(q \notin \succsA{\A(\rho)}{\rho}\):
          by definition of a sink state \(E \neq \Sigma\). Set
			
	\[\mathcal{R} = \bigcup\limits_{a \in \Sigma \setminus E} \Shuffle{\left[a \Res{\A(\rho)}{\delta_\rho(q,a)}\right]}\enspace.\]

	The language \(\mathcal{R}\) is an ideal as it is a union of ideals. Thus it suffices to show that \(\Res{\A(\rho)}{q} = \mathcal{R}\) to conclude. Since \(\A(\rho)\) is deterministic one has, 

			\[\Res{\A(\rho)}{q} = E^* \cdot \bigcup\limits_{a \in \Sigma\setminus E} a \Res{\A(\rho)}{\delta_\rho(q,a)}\enspace.\]

			For all \(a\in \Sigma\setminus E\), since \(\varepsilon \in \Sigma^*\), by definition of the shuffle product, one has that \(a \Res{\A(\rho)}{\delta_\rho(q,a)}\) is a subset of \(\Shuffle{a \Res{\A(\rho)}{\delta_\rho(q,a)}}\). Therefore, since ideals are close by upper-words, one has \(\Res{\A(\rho)}{q} \subseteq \mathcal{R}\). Now let's show that \(\mathcal{R} \subseteq \Res{\A(\rho)}{q}\).

Let \(w \in \mathcal{R}\): there exists \(a \in \Sigma\setminus E\)
such that \(w \in \Shuffle{\left[a
    \Res{\A(\rho)}{\delta_\rho(q,a)}\right]}\).  Therefore,
there exist \(u \in \Sigma^*\) and \(v \in \Shuffle{\Res{\A(\rho)}{\delta_\rho(q,a)}}\) such that \(w =
uav\).
Since, by induction hypothesis, \(\Res{\A(\rho)}{\delta_\rho(q,a)}\) is an ideal,
\(w\in
\Res{\A(\rho)}{\delta_\rho(q,a)}\).  Let \(r = \delta_\rho(q,u)\). By
Lemma~\ref{lemma:existence of s_(q,b)}, the two following minima exists:
 \(s = \min\set{s'' \in \mifa{\rho}}{q \preccurlyeq_\A s''
  \wedge \delta(s'',a)\in \mifa{\rho}}\) and \(s' = \min\set{s'' \in
  \mifa{\rho}}{r \preccurlyeq_\A s'' \wedge \delta(s'',a)\in \mifa{\rho}}\).  By
definition of \(\delta_\rho\), one has \(\delta_\rho(q,a)= \delta(s,a)\) and
\(\delta_\rho(r,a)= \delta(s',a)\). Also, Lemma~\ref{lemma:A' when q leq
  q_a} implies that \(\delta_\rho(s,a) = \delta(s,a)\) and
\(\delta_\rho(s',a) = \delta(s',a)\).

We first show that \(s\) and \(r\) are comparable. Since \(q
\preccurlyeq_{\A(\rho)} \rho\) and since \(\delta(\rho,a) \in \mifa{\rho}\),
the minimality of \(s\) provides that \(s \preccurlyeq_{\A(\rho)} \rho\).
Hence, using Proposition~\ref{porp:rank and states}.\ref{cor:eq norm and preccurlyeq}, we have
\(\norm{s} \leq \norm{\sep}+1\).
If \(\norm{s} = \norm{\sep}+1\), since \(q \in \mifa{\rho}\)
Proposition~\ref{prop:sep}.\ref{cor:eq norm and preccurlyeq} ensures that \(q = \rho\).
Therefore since \(r \in \mifa{\rho}\), \(s\) and \(r\) are comparable. Otherwise if \(\norm{s} \leq \norm{\sep}\) Proposition~\ref{prop:sep}.\ref{lemma: above sep mifa = S} provides \(\mifa{s} = S\). And since \(r \in S\), one has that \(s\) and \(r\) are comparable.

Now we prove that \(s \preccurlyeq_{\A} s'\). Since \(s\) and \(r\)
are comparable, either \(r \preccurlyeq_{\A} s\) or \(s
\preccurlyeq_{\A} r\). If \(r \preccurlyeq_{\A} s\), by definition of
\(s\) and \(s'\) and unicity of minima, one has \(s = s'\), hence \(s \preccurlyeq_{\A}
s'\). Otherwise, if \(s \preccurlyeq_{\A} r\), since by definition \(r \preccurlyeq_\A s'\), the transitivity of \(\preccurlyeq\) gives that \(s \preccurlyeq_{\A} s'\). Hence, 
\(s' \preccurlyeq_{\A} s\) always holds.

Finally it remains to show that \(uav \in \Res{\A(\rho)}{q}\). We proceed  by a case distinction on whether \(av \in \Res{\A}{s}\) or not.

\begin{itemize}
      \item If \(av \in \Res{\A}{s}\):

       Since \(s \preccurlyeq_\A s'\) Proposition~\ref{prop:new prop on ideals}.\ref{lemma:preccurlyeq and
         inclusion of residuals} gives \(\Res{\A}{s} \subseteq
       \Res{\A}{s'}\). Hence, since \(av \in \Res{\A}{s}\), one has
       \(av \in \Res{\A}{s'}\). Then, applying Lemma~\ref{lemma:RAq
         subset RA'q} yields \(\Res{\A}{s'} \subseteq \Res{\A(\rho)}{s'}\)
       yields. Therefore \(av \in \Res{\A(\rho)}{s'}\). Thus, since
       \(\delta_\rho(s',a) = \delta_\rho(r,a)\) and since \(av \in
       \Res{\A(\rho)}{s'}\), one has \(av \in \Res{\A(\rho)}{r}\). Finally,
       since \(\delta_\rho(q,u) = r\), one can conclude \(uav \in
       \Res{\A(\rho)}{q}\).

	\item If \(av \notin \Res{\A}{s}\):

To prove \(w \in \Res{\A(\rho)}{q}\), we first show that \(\delta(s,a)
\prec_\A \rho\) and then we will prove that \(\delta(s,a)
\preccurlyeq_{\A(\rho)} \delta(s',a)\).

First let's show that \(\delta(s,a) \prec_\A \rho\) by
contradiction. Assume that \(\rho \preccurlyeq \delta(s,a)\). Since \(v
\in \Res{\A(\rho)}{\delta_\rho(s,a)}\), one has \(v \in
\Res{\A}{\delta(s,a)}\). Then, using Corollary~\ref{cor:A' when q leq
  q_a}, \(\Res{\A(\rho)}{\delta_\rho(s,a)} = \Res{\A}{\delta(s,a})\)
yields. Now, since \(\delta_\rho(s,a) = \delta(s,a)\), one has \(av \in
\Res{\A}{s}\), a contradiction.

Now let's show that \(\delta(s,a) \preccurlyeq_{\A}
\delta(s',a)\). Proposition~\ref{porp:rank and states}.\ref{cor:norm and path} applied on \(\delta(s,a)
\prec_{\A} \rho\) provides \(\norm{\delta(s,a)} < \norm{\sep}\).

Thus, by Lemma~\ref{lemma:unicity above sep} one has that \(\delta(s,a) \preccurlyeq \sep\). Hence applying Proposition~\ref{prop:sep}.\ref{lemma: above sep mifa = S} one can conclude that \(\delta(s',a) \in
\mifa{\delta(s,a)}\). We claim now that \(\delta(s,a)
\preccurlyeq_{\A} \delta(s',a)\). Since\(s \preccurlyeq_{\A} s'\),
there exists \(u' \in \Sigma^*\) such that \(\delta(s,u') =
s'\). Therefore \(\delta(s',a)~=~\delta(s,u'a)\). Thus, if
\(\delta(s',a) \preccurlyeq_{\A} \delta(s,a)\) then, since
\(\delta(s',a) = \delta(s,u'a)\), \(\delta(s,u'a)
\preccurlyeq_\A \delta(s,a)\) holds. Consequently, using Proposition~\ref{prop:new prop on ideals}.\ref{lemma:
  min automata preccurlyeq equality}, one has \(\delta(s,a) =
\delta(s,u'a) = \delta(s',a)\). Thus, \(\delta(s,a) \preccurlyeq_{\A}
\delta(s',a)\). Otherwise, if  \(\delta(s',a) \not\preccurlyeq_{\A}
\delta(s,a)\), since \(\delta(s',a) \in \mifa{\delta(s,a)}\), we would
have \(\delta(s,a) \preccurlyeq_{\A} \delta(s',a)\), proving the
claim.

Consequently, by Lemma~\ref{lemma: preccurlyeq in A' and A}, one has \(\delta(s,a) \preccurlyeq_{\A(\rho)} \delta(s',a)\).

Finally let's show that \(w = uav \in \Res{\A(\rho)}{q}\). Since \(a
\notin E\) and by definition of \(\preccurlyeq\) one has \(q \prec \delta(q,a)\). Hence by construction of \(s\) one has \(q \prec_{\A(\rho)} \delta(s,a)\). Thus, by induction
hypothesis, \(\Res{\A(\rho)}{\delta(s,a)}\) is an ideal. Hence since,
\(\delta(s,a) \preccurlyeq_{\A(\rho)} \delta(s',a)\), by
Proposition~\ref{prop:new prop on ideals}.\ref{lemma:preccurlyeq and inclusion of residuals},
\(\Res{\A(\rho)}{s} \subseteq \Res{\A(\rho)}{s'}\) yields . Since \(av \in
\Res{\A(\rho)}{q}\) and since \(\A(\rho)\) is deterministic, one has \(v \in
\Res{\A(\rho)}{\delta_\rho(q,a)}\). But \(\delta_\rho(q,a) = \delta(s',a)\),
therefore one has \(v \in \Res{\A(\rho)}{\delta(s',a)}\). Finally, since
\(\Res{\A(\rho)}{s} \subseteq \Res{\A(\rho)}{s'}\), one can conclude that
\(v \in \Res{\A(\rho)}{\delta(s',a)}\). But since \(\delta(s',a) =
\delta_\rho(r,a)\), one has \(v \in \Res{\A(\rho)}{\delta_\rho(r,a)}\). Thus by
definition of a residual, \(av \in \Res{\A(\rho)}{r}\), and since \(r =~\delta_\rho(q,u)\), one has \(uav \in \Res{\A(\rho)}{q}\).
\end{itemize}

Finally \(uav \in \Res{\A(\rho)}{q}\) hence since, \(w =uav\), one has
\(w \in \Res{\A(\rho)}{q}\). Therefore \(\mathcal{R} \supseteq \Res{\A(\rho)}{q}\).
	\end{itemize}

	Finally we proved that \(\mathcal{R} \subseteq
        \Res{\A(\rho)}{q}\). Combined with \(\Res{\A(\rho)}{q} \subseteq
        \mathcal{R}\), it  gives that \(\Res{\A(\rho)}{q} = \mathcal{R}\).
\end{proof}

Finally, to prove Theorem~\ref{theo:non-linear are decomposable}, it
remains to show that the language of \(\A\) is
equal to the intersection of the languages of its family automata.

\begin{proprep}\label{lemma:L(A) = inter L(A(a))}
	Let \(\A = \minAuto\) be a non-linear minimal automaton that recognizes an ideal. One has \[\lang{\A} = \bigcap\limits_{\rho \in \setsep} \lang{\A(\rho)} \enspace.\]
\end{proprep}

The proof is deferred to the appendix; the key idea is that,by construction, \(\lang{\A} \subseteq \lang{\A(\rho)}\), one
directly has that every word accepted by \(\A\) is also accepted by
all the \(\A(\rho)\). The reverse inclusion can be showed by proving that
for any \(w\notin \lang{\A}\), there exists a \(\A(\rho)\) that doesn't
  accept it.

\begin{proof}
For all \(\rho \in \setsep\), Lemma~\ref{lemma:RAq subset RA'q} ensures
that \(\Res{\A}{\iota} \subseteq \Res{\A(\rho)}{\iota}\). Hence
\(\lang{\A} \subseteq \lang{\A(\rho)}\). Thus one has \[\lang{\A}
\subseteq \bigcap\limits_{\rho \in \setsep} \lang{\A(\rho)} \enspace.\]

Now let us show the reverse inclusion. Let \(w = w_1\dots w_n \notin
\lang{\A}\) and let \(q_1, \dots, q_{n+1} \in S\) verifying \(q_1 =
\iota\) and for all \(i \in \setint{n}, q_{i+1} = \delta(\iota,
w_1\dots w_i)\). One has for all \(i \in \setint{n}, q_{i}
\preccurlyeq_\A q_{i+1}\).

By Proposition~\ref{prop:sep}.\ref{lemma: above sep mifa = S}, \(\sep\) is comparable to
every state in \(S\).  Therefore, since \(\A\) is trim, \(\iota
\preccurlyeq \sep\). Since \(q_1 = \iota\), there exists \(m =
\max\set{i \in\setint{n+1}}{q_i \preccurlyeq \sep}\).

Now we claim that there exists \(\rho \in \setsep\) such that for all
\(i \in \setint{n}\), one has \(q_i \in \mifa{\rho}\).

If \(m = n+1\), then let \(\rho\) be any state of \(\setsep\). By definition of \(m\), for all \(i \in
\setint{n+1}, q_i \preccurlyeq \sep\). Hence, since \(\sep
\preccurlyeq_\A \rho\), one has \(q_i \preccurlyeq \rho\).

Now if \(m < n+1\), we define \(\rho\) as follows. Since \((q_i)_{i \in \setint{n+1}}\) is an increasing sequence,
by definition of maximum, \(\sep \prec q_{m+1}\). Therefore
\(\norm{q_{m+1}} \geq \norm{\sep} + 1\) is given by
Proposition~\ref{porp:rank and states}.\ref{cor:norm and path}. Thus, by Proposition~\ref{porp:rank and states}.\ref{lemma:lower than
  n implis path taht uses n}, there exists \(\rho \in \setsep\) such that
\(\rho \preccurlyeq q_{m+1}\). Consequently, since
\((q_i)_{i \in \setint{n}}\) is an increasing sequence and by
transitivity of \(\preccurlyeq\), for all \(i \in \llbracket m+1, n+1
\rrbracket, \rho \preccurlyeq q_i\). Similarly and since
\(\sep \preccurlyeq \rho\) for all \(i \in \setint{m}, q_i
\preccurlyeq \rho\). Hence by definition of \(\mifa{\rho}\), for all \(i \in
\setint{n+1}, q_i \in \mifa{\rho}\).

Recall that \(\A(\rho) = (\mifa{\rho}, \Sigma, \iota, F, \delta_\rho)\). For
any \(i \in \setint{n+1}\), since \(q_{i+1} \in \mifa{\rho}\),
Lemma~\ref{lemma:A' when q leq q_a} provides that
\((q_i,w_i,q_{i+1})\in \delta_\rho\). Thus one has \(\delta_\rho(\iota,w) =
q_n\).  Since  \(w \notin \lang{\A})\), it follows that \(q_n \notin
F\). Therefore, since \(q_n \notin F\) and \(\delta_\rho(\iota,w) = q_n\),
one has \(w \notin \lang{\A(\rho)}\). Consequently \(w \notin
\bigcap\limits_{\rho \in \setsep} \lang{\A(\rho)}\) proving
that \[\bigcap\limits_{\rho \in \setsep} \lang{\A(\rho)} \subseteq
\lang{\A} \enspace,\] which concludes the proof.
\end{proof}

\section{Intersection-decomposition of linear automata}\label{sec:linear_automata_and_decomposition}

\begin{figure}
	\centering
	\begin{subfigure}{0.39\textwidth}
		\centering
		\includegraphics[width=1\textwidth]{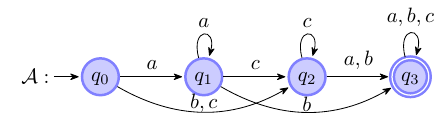}
	\end{subfigure}
	\hfill
	\begin{subfigure}{0.29\textwidth}
		\centering
		\includegraphics[width=1\textwidth]{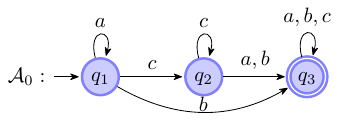}
	\end{subfigure}
	\hfill
	\begin{subfigure}{0.29\textwidth}
		\centering
		\includegraphics[width=1\textwidth]{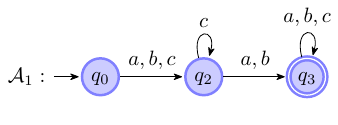}
	\end{subfigure}
	\caption{An automaton \(\A\),
		along with two smaller automata
		satisfying \(\lang{\A_0} \cap \lang{\A_1}\ = \lang{\A}\).}
	\label{fig: decomposition}
\end{figure}

The goal of this section is to prove Proposition~\ref{prop: decomposition of linear automata},
which characterizes intersection-composite linear automata recognizing ideals.
Therefore, during the whole section,
we use terms such as ``composite'' and ``prime''
to specifically refer to the \emph{intersection} decomposition.

We introduce notations adapted to linear automata.
For any linear automaton \(\A\),
we denote it set of states as \(\{q_0,q_1,\ldots,q_n\}\),
where the state enumeration agrees with the transition relation:
\(q_i \preccurlyeq q_j \text{ if and only if } i \leq j\).
Remark that \(\iota = q_0\)
as otherwise \(q_0\) is not reachable.
Moreover, if \(\A\) is a minimal automaton recognizing an ideal, it set of final states is \(\{q_n\}\)
as Proposition~\ref{prop: hold prop on ideals}.\ref{theo:ideal are regular} implies that such automata
have one final sink state, and 
\(q_n\) is the only sink state of \(\A\).

\begin{figure}[h!]
 	\centering
  	\includegraphics[width=7cm]{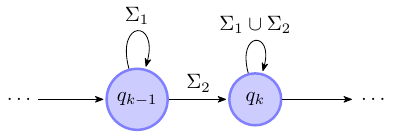}
	\caption{Illustration of a damping pattern between \(q_{k-1}\) and \(q_{k}\).}
	\label{fig:damping pattern}
\end{figure}

\noindent
\textbf{\textsf{Damping pattern.}}
We characterize composite linear automata through the notion of \emph{damping patterns}.
A damping pattern of a linear automaton is a pair of consecutive states
such that each letter that either keeps the first state unchanged
or moves from the first state to the second also induces a loop on the second state.
Formally, let \(\A = \AutoLinGen\) be a linear automaton.
For all \(i,j \in \setint{n}\),
we denote by \(\Sigma_{i,j}(\A)\), or simply \(\Sigma_{i,j}\) when the automaton is clear,
the set \(\set{a \in \Sigma}{\delta(q_i,a)= q_{j}}\)
of letters labelling the transitions from \(q_i\) to \(q_j\).
For all \(k \in \setint{n-1}\),
\(\A\) has a damping pattern
between \(q_{k-1}\) and \(q_{k}\)
if
\[
\partAlph{k-1}{k-1} \cup \partAlph{k-1}{k}
\subseteq
\partAlph{k}{k}.
\]
For instance, this pattern is illustrated in Fig.\ref{fig:damping pattern}. And the automaton \(\A\) in Fig.~\ref{fig: decomposition}
has a damping pattern between \(q_0\) and \(q_1\). Indeed in that case \(\Sigma_{0,0} \cup \Sigma_{0,1} = \{a\} = \Sigma_{1,1}\).
The figure shows that
\(\A\) is composite, which is consistent with the presence of a damping pattern:
the following proposition states that damping patterns characterize composite linear automata.

\begin{restatable}[]{theorem}{decompOfLinear}
	\label{prop: decomposition of linear automata}
	Let \(\A\) be a linear minimal automaton that recognizes an ideal.
	Then \(\A\) is prime if and only if it has no damping pattern.
	Moreover, if \(\A\) is composite it is decomposable into two linear automata recognizing ideals.
\end{restatable}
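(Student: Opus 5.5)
The plan is to handle the two implications separately. For ``damping pattern \(\Rightarrow\) composite'' we build two explicit smaller automata, which also gives the ``moreover'' part. For ``no damping pattern \(\Rightarrow\) prime'' we run a pigeonhole and pumping argument against any candidate decomposition. Throughout we use the linear notation \(\A = (\{q_0,\dots,q_n\},\Sigma,q_0,\{q_n\},\delta)\).

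First we note that \(\Sigma_{k-1,k}\neq\emptyset\) for every \(k\). Indeed, every transition strictly increases the index unless it is a loop, and \(q_k\) is reachable. So the last non-loop transition of a run reaching \(q_k\) must come from some \(q_i\) with \(i<k\). If \(i<k-1\), then \(q_{k-1}\) would not satisfy \(q_{k-1}\preccurlyeq q_k\), contradicting linearity. Hence some letter \(c\) satisfies \(\delta(q_{k-1},c)=q_k\).

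\textbf{Damping pattern \(\Rightarrow\) composite.} Assume \(\Sigma_{k-1,k-1}\cup\Sigma_{k-1,k}\subseteq\Sigma_{k,k}\) with \(k\le n-1\). We use two automata.

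\(\A_1\) is obtained by merging \(q_{k-1}\) and \(q_k\) into one state carrying the outgoing transitions of \(q_k\), and redirecting to it every transition that entered \(q_{k-1}\) or \(q_k\). It has \(n\) states.

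\(\A_2\) keeps \(q_0,\dots,q_{k-1}\) and adds a final sink \(f\). Every transition of \(\A\) leaving \(q_{k-1}\) to some \(q_j\) with \(j\ge k\) is redirected to \(f\), as is every transition from some \(q_i\) with \(i<k-1\) into \(\{q_k,\dots,q_n\}\). It has \(k+1\le n\) states.

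The inclusion \(\lang{\A}\subseteq\lang{\A_1}\cap\lang{\A_2}\) is a shortcut argument in the spirit of Lemma~\ref{lemma:RAq subset RA'q}. For the reverse inclusion, take \(w\) accepted by both. Acceptance by \(\A_2\) means the run of \(\A\) reaches \(q_{k-1}\) (or jumps past it) and then reads a letter \(b\notin\Sigma_{k-1,k-1}\) leading to some \(q_j\) with \(j\ge k\). Acceptance by \(\A_1\) means the remaining suffix \(bv\) lies in \(\Res{\A}{q_k}\). We then need \(v\in\Res{\A}{q_j}\). Choosing \(c\in\Sigma_{k-1,k}\), we have \(\delta(q_{k-1},cb)=\delta(q_k,b)\). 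Proposition~\ref{prop:new prop on ideals}.\ref{lemma: min automata preccurlyeq equality}, together with linearity, gives \(q_j\preccurlyeq\delta(q_k,b)\); this handles the case \(q_j\neq\delta(q_k,b)\). The damping hypothesis is exactly what makes the merged state consistent: letters looping on \(q_{k-1}\) or moving to \(q_k\) all loop on \(q_k\), so \(\A_1\) never ``forgets'' progress. Finally, \(\lang{\A_1}\) and \(\lang{\A_2}\) are shown to be ideals by the same inductive residual decomposition used in Proposition~\ref{lemma:RA'q' equality}. Both automata are clearly linear after minimization.

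\textbf{No damping pattern \(\Rightarrow\) prime.} Suppose \(\lang{\A}=\bigcap_i\lang{\B_i}\) with each \(|\B_i|\le n\). For every \(k\in\setint{n-1}\), pick a letter \(a_k\in(\Sigma_{k-1,k-1}\cup\Sigma_{k-1,k})\setminus\Sigma_{k,k}\). The aim is to construct a single word \(w\notin\lang{\A}\) that every \(\B_i\) accepts. The natural candidate is built from a ``staircase'' word \(c_1c_2\cdots c_n\) with \(c_k\in\Sigma_{k-1,k}\), which is accepted, combined with high powers \(a_k^{N}\). Since each \(\B_i\) has at most \(n\) states while \(\A\) distinguishes the \(n+1\) prefixes reaching \(q_0,\dots,q_n\), each \(\B_i\) must identify two prefixes reaching distinct \(q_j\) and \(q_k\). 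Combined with pumping of the \(a_k\)-blocks, this should let us transplant an accepted continuation onto a prefix that \(\A\) rejects.

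This direction is the main obstacle. The difficulty is that the word must be chosen uniformly for all \(\B_i\), while the collision index differs from one \(\B_i\) to another. I would first try a word of the form \(w=x_0x_1\cdots x_{n}\) in which each block \(x_k\) consists of \(a_k^N\) interleaved with staircase letters. This would make each \(\B_i\) unable to tell whether the step \(q_{k-1}\to q_k\) was really taken, with non-membership in \(\lang{\A}\) certified by the absence of damping at every \(k\).
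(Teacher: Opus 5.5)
Your proposal has two genuine gaps: the pair of automata you give for the damping case is not a decomposition, and the primality direction is not actually proved.

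\textbf{The pair \(\A_1,\A_2\) is not a decomposition.} Your \(\A_1\) is the paper's reduced automaton \(\A_{k-1}\). Your \(\A_2\), however, only remembers whether the run of \(\A\) has reached some \(q_j\) with \(j\geq k\), and that is too coarse. The step that fails is the last one of your reverse inclusion. From \(q_j\preccurlyeq\delta(q_k,b)\) you get \(\Res{\A}{q_j}\subseteq\Res{\A}{\delta(q_k,b)}\). That is the wrong direction for deducing \(v\in\Res{\A}{q_j}\) from \(v\in\Res{\A}{\delta(q_k,b)}\).

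Here is a counterexample. Over \(\Sigma=\{a,b,c\}\), take the chain \(q_0,q_1,q_2,q_3\) with \(\delta(q_0,a)=q_1\), \(\delta(q_0,b)=\delta(q_0,c)=q_2\), \(\delta(q_1,a)=q_1\), \(\delta(q_1,b)=q_3\), \(\delta(q_1,c)=q_2\), \(\delta(q_2,c)=q_2\), \(\delta(q_2,a)=\delta(q_2,b)=q_3\), and \(q_3\) the accepting sink.
\begin{itemize}
\item Each letter acts extensively and monotonically on the chain, so inserting a letter can only push a run upward; hence the language is an ideal.
\item The residuals \(\lang{\A}\), \(\Shuffle{\{b,ca\}}\), \(\Shuffle{\{a,b\}}\), \(\Sigma^*\) are pairwise distinct, so \(\A\) is minimal and linear.
\item \(\Sigma_{0,0}\cup\Sigma_{0,1}=\{a\}=\Sigma_{1,1}\), so there is a damping pattern with \(k=1\).
\end{itemize}
Your \(\A_1\) recognizes \(\Shuffle{\{b,ca\}}\) and your \(\A_2\) recognizes \(\Sigma^+\). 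Both accept \(b\), whereas \(\delta(q_0,b)=q_2\) is rejecting.

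The uncovered case is a run that visits \(q_{k-1}\) and then jumps over \(q_k\). The paper's second component is instead the reduced automaton \(\A_k\): delete \(q_k\) and redirect its incoming transitions to \(q_{k+1}\). Then:
\begin{itemize}
\item a rejected run of \(\A\) that avoids \(q_k\) is unchanged in \(\A_k\);
\item one that avoids \(q_{k-1}\) is unchanged in \(\A_{k-1}\);
\item one that visits both is resynchronized at \(q_k\) in \(\A_{k-1}\) by the damping inclusion.
\end{itemize}
Both reduced automata are linear and recognize ideals (Proposition~\ref{lemma:A_k recognize an ideal language}).

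\textbf{The primality direction is only a heuristic.} The uniformity problem you identify disappears once you have the right witness, which is the missing idea.

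For each \(i\in\setint{n-1}\), pick a block \(w(i)\) with \(\delta(q_{i-1},w(i))=q_i\neq\delta(q_i,w(i))\). Take a letter of \(\Sigma_{i-1,i}\setminus\Sigma_{i,i}\) if there is one; otherwise take a letter of \(\Sigma_{i-1,i-1}\setminus\Sigma_{i,i}\) followed by a letter of \(\Sigma_{i-1,i}\). Absence of damping is exactly what makes these blocks exist, and no high powers are needed. The word \(w(1)\cdots w(n-1)\) leads \(\A\) to \(q_{n-1}\), so it is rejected. Duplicating any single block puts the run one state ahead for all later blocks, so it reaches \(q_n\).

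Now let \(\mathcal{B}\) have at most \(n\) states with \(\lang{\A}\subseteq\lang{\mathcal{B}}\). Consider the states \(r_0,\ldots,r_{n-1}\) that \(\mathcal{B}\) visits at the \(n\) block boundaries.
\begin{itemize}
\item If \(r_i=r_j\) with \(i<j\), pump the segment \(w(i+1)\cdots w(j)\). The pumped word contains, as a subword, the word with \(w(i+1)\) duplicated, so it lies in \(\lang{\A}\subseteq\lang{\mathcal{B}}\). On it \(\mathcal{B}\) ends in the same state as on the witness.
\item If the \(r_i\) are pairwise distinct, they exhaust the states of \(\mathcal{B}\). Any \(v\in\lang{\A}\) then leads \(\mathcal{B}\) to some \(r_i\). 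The word \(v\,w(i+1)\cdots w(n-1)\) is in \(\lang{\A}\) and leads \(\mathcal{B}\) to \(r_{n-1}\), which is therefore accepting.
\end{itemize}
So one fixed rejected word is accepted by every such \(\mathcal{B}\), wherever its collision occurs. By contrast, your pigeonhole on the \(n+1\) prefixes of an accepted staircase word gives a collision whose position depends on \(\mathcal{B}_i\), and it yields no common rejected word. Once you restrict to a rejected word with only \(n\) boundaries, the collision-free case must be handled separately, as above.

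A smaller point: your justification of \(\Sigma_{k-1,k}\neq\emptyset\) is a non sequitur. A jump from some \(q_i\) with \(i<k-1\) into \(q_k\) is compatible with \(q_{k-1}\preccurlyeq q_k\). Instead, take a run from \(q_{k-1}\) to \(q_k\); it can only loop on \(q_{k-1}\) before entering \(q_k\) directly.
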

The proof of Theorem~\ref{prop: decomposition of linear automata} in the case where \(\A\) linear is done in the next two subsections.

In Section~\ref{subsec:DecomposableLinearAutomata},
we show how to decompose automata that have a damping pattern
between two states \(q_{k-1}\) and \(q_{k}\).
To this end, we define the reduced automata \(\A_{k-1}\) and \(\A_{k}\),
obtained by removing the state \(q_{k-1}\) (resp. \(q_{k}\))
from \(\A\), and redirecting the transitions leading towards the removed state
to its successor \(q_{k}\) (resp. \(q_{k+1}\)).
We prove that both automata recognize ideals (Proposition~\ref{lemma:A_k recognize an ideal language}),
and that they satisfy \(\lang{\A} = \lang{\A_{k-1}} \cap \lang{\A_{k}}\) (Proposition~\ref{lemma:L(A) = inter L(A_k)}).
Since both automata have strictly fewer states than \(\A\), this establishes that  \(\A\) is composite.

In Section~\ref{subsec:PrimeLinearAutomata}
we show that the automata without damping patterns are prime.
To that end, we rely on the notion of \emph{primality witness} of an automaton \(\A\), which is a word \(w \in \Sigma^*\) that does not belong to \(\lang{\A}\)
but is in the language recognized by every automaton
that is strictly smaller than \(\A\) and recognizes a superset of \(\lang{\A}\).  
Formally, we construct a set of words \(\Wit(\A) \subseteq \Sigma^*\)
that is non-empty as long as \(\A\) does not have a damping pattern (Lemma~\ref{lemma:cond is eq Wit non-empty}).
Then, we prove that \(\Wit(\A) \cap \lang{\A} = \emptyset\) (Proposition~\ref{lemma:wit notin L(A)})
yet \(\Wit \subseteq \lang{\A'}\) for every automaton \(\A'\) satisfying \(|\A'| < |\A|\) and \(\lang{\A} \subseteq \lang{\A'}\) (Proposition~\ref{lemma:wit in L(A')}).
Therefore the intersection of every sequence of languages recognized
by automata strictly smaller than \(\A\)
either does not contain \(\lang{\A}\) or contains \(\Wit\),
which proves that \(\A\) is prime.

\subsection{Decomposition of automata with a damping pattern}\label{subsec:DecomposableLinearAutomata}

In this section,
we show that every automaton with a damping pattern can be decomposed into two automata recognizing ideals.
We begin by defining, for every \(k \in \setint{n-1}\),
an automaton \(\A_k\), called the reduced automaton,
obtained by removing the state \(q_k\) from \(\A\)
and redirecting the transitions heading to \(q_k\) toward \(q_{k+1}\).
We then show that, if \(\A\) recognizes an ideal, so does each reduced automaton.
Finally, we prove that every pair of reduced automata
corresponding to a damping pattern
yields a decomposition of \(\A\).

\medskip
\noindent
\textbf{\textsf{Reduced automaton.}}
Let \(\A = \AutoLin\) be a linear automaton.
For every \(k \in \setintz{n-1}\),
the \emph{reduced automaton} \(\A_k\) is defined as follows.
Let
\[\delta_k^- = \set{(q,a,q_{k}) \in \delta}{q \in S} \text{ and } \delta_k^+ = \set{(q,a,q_{k+1})}{(q,a,q_k) \in \delta}.\]
Then we set \(\A_k = (\{q_0,q_1,\ldots,q_n\} \setminus\{q_k\}, \Sigma, \iota_k, \{q_n\}, \delta_k)\),
where
\[
\delta_k = (\delta \setminus \delta_k^-)\cup \delta_k^+ \quad
\textup{ and }  \quad
\iota_k =
\left\{
\begin{array}{ll}
	q_1 &
	\textup{if \(k = 0\);}\\
	q_0 &
	\textup{if \(k > 0\)}.
\end{array}
\right.\]
Fig.~\ref{fig: decomposition} illustrates this construction by showing an automaton \(\A\) along 
with the automata \(\A_0\) and \(\A_1\) obtained by this process.
Remark that this construction guarantees that each reduced automaton is linear.
Our first result 
states that if \(\A\) recognizes an ideal,
so do its reduced automata.

\begin{proprep}\label{lemma:A_k recognize an ideal language}
	Let \(\A\) be a linear automaton
	that recognizes an ideal.
	For all \(k \in \setintz{n-1}\),
	the automaton \(\A_k\) also recognizes an ideal.
\end{proprep}

\begin{proof}
	Let \(k \in \setintz{n-1}\),
	and let \(\A_k =
	(\{q_0,q_1,\ldots,q_n\} \setminus\{q_k\}, \Sigma, q_0, \{q_n\}, \delta_k)\).
	We show by induction on \(n-i\)
	that \(\Res{\A_k}{q_i}\) is an ideal
	for all \(i \in \{0,1,\ldots,n\} \setminus \{k\}\).
	
	If \(i = n\), the fact that \(q_n\) is the accepting sink state
	of both \(\A\) and \(\A_k\) yields
	\(\Res{\A_k}{q_i} = \Sigma^*\), which is an ideal.
	Now suppose that \(i < n\).
	We prove that \(\Res{\A_k}{q_i} = \Res{\A_k}{q_i} \shuffle \Sigma^*\).
	Let \(w \in \Res{\A_k}{q_i} \shuffle \Sigma^*\).
	Then there exist \(u \in \Res{\A_k}{q_i}\) and \(v \in \Sigma^*\)
	satisfying \(w \in u \shuffle v\).
	Note that, as \(u \in  \Res{\A_k}{q_i}\) and \(q_i\) is not an accepting state,
	the run of \(\A_k\) over \(u\) starting from \(q_i\) leaves \(q_i\) at some point.
	Let \(u = u_1au_2\) denote the decomposition of \(u\)
	such that the run of \(\A_k\) over \(u\) starting in \(q_i\)
	remains in \(q_i\) while reading \(u_1\)
	and leaves \(q_i\) upon reading the letter \(a\).
	Moreover, consider the decomposition \(w = w_1aw_2\) of \(w\)
	where \(w_1a\) is the shortest prefix of \(w\)
	such that \(u_1a\) is a subword of \(w_1a\).
	Note that \(u_2\) is a subword of \(w_2\).
	
	We study the runs of \(\A_k\)
	over \(u\) and \(w\) starting in \(q_i\).
	First, the definition of \(u_1\) yields
	\begin{equation}\label{eq:loop}
		\delta_k(q_i,u_1a) = \delta_k(q_i,a).
	\end{equation}
	Then, Proposition~\ref{prop:new prop on ideals}.\ref{lemma: min automata preccurlyeq equality}
	yields
	 \(\delta(q_i,a) \preccurlyeq_\A \delta(q_i,w_1a)\).
	Since \(\delta(q_i,w_1a) \preccurlyeq_\A \delta_k(q_i,w_1a)\)
	 by definition of \(\delta_k\), 
	 and since \(\delta_k(q_i,a)\) is either equal to \(\delta(q_i,a)\)
	 or to its successor state in the specific case where
	 \(\delta(q_i,a) = q_k\),
	 we get
	\begin{equation}\label{eq:jump}
		\delta_k(q_i,a) \preccurlyeq_\A \delta_k(q_i,w_1a).
	\end{equation}
	Finally, the fact that \(u \in \Res{\A_k}{q_i}\)
	implies that \(u_2 \in \Res{\A_k}{\delta_k(q_i,u_1a)}\).
	Therefore, the induction hypothesis allows the use of
	Proposition~\ref{prop:new prop on ideals}.\ref{lemma:preccurlyeq and inclusion of residuals},
	which yields \(u_2 \in \Res{\A_k}{\delta_k(q_i,w_1a)}\).
	Then, as \(u_2\) is a subword of \(w_2\),
	a second application of the induction hypothesis yields 
	\(w_2 \in \Res{\A_k}{\delta_k(q_i,w_1a)}\).
	Combined with Equations~\eqref{eq:jump} and \eqref{eq:loop},
	this proves that \(w \in~\Res{\A_k}{q_i}\).
\end{proof}

\noindent
To conclude,
we show that the reduced automata can be used to form a decomposition of \(\A\).

\begin{proprep}\label{lemma:L(A) = inter L(A_k)}
	Let \(\A\) be a linear automaton
	that recognizes an ideal.
	If \(\A\) has a damping pattern between two states \(q_{k-1}\) and \(q_k\),
	then \(\lang{\A} = \lang{\A_{k-1}} \cap \lang{\A_{k}}\).
\end{proprep}

\begin{proof}
	Suppose that \(\A = \AutoLin\) has a damping pattern between two states \(q_{k-1}\) and \(q_k\).
	To prove that \(\lang{\A} = \lang{\A_{k-1}} \cap \lang{\A_{k}}\),
	we first show that \(\lang{\A}\) is contained in the language of \emph{every} reduced automaton.
	We then show that the damping pattern implies that \(\lang{\A_{k-1}} \cap \lang{\A_{k}} \subseteq \lang{\A}\).

	\subparagraph*{Containment in reduced automata.}
	Let \(i \in \setintz{n-1}\).
	For all \(u \in \Sigma^*\),
	we show by induction over \(|u|\)
	that \(\delta(q_0,u) \preccurlyeq_\A \delta_i(\iota_i,u)\).
	
	If \(u = \varepsilon\),
	we have \(\delta(q_0,u) = q_0 \preccurlyeq_\A \iota_i = \delta_i(\iota_i,u)\).
	Otherwise, let \(u = va\) 
	with \(v \in \Sigma^*\) and
	\(a \in \Sigma\).
	Then the induction hypothesis yields
	\(\delta(q_0,v) \preccurlyeq_\A \delta_i(\iota_i,v)\),
	that is, there exists \(w \in \Sigma^*\)
	satisfying \(\delta(\delta(q_0,v),w) = \delta_i(\iota_i,v)\).
	As required, we have
	\[
		\delta(q_0,u)
		=
		\delta(\delta(q_0,v),a)
		\stackrel{(1)}{\preccurlyeq_\A}
		\delta(\delta(q_0,v),wa)
		=
		\delta(\delta_i(\iota_i,v),a)
		\stackrel{(2)}{\preccurlyeq_\A}
		\delta_i(\delta_i(\iota_i,v),a) =
		\delta_i(\iota_i,u).
	\]
	Inequality (1) follows
	from Proposition~\ref{prop:new prop on ideals}.\ref{lemma: min automata preccurlyeq equality}
	since the states \(\delta(\delta(q_0,v),a)\) and \(\delta(\delta(q_0,v),wa)\)
	are comparable as \(\A\) is linear.
	Inequality (2) follows directly from the definition 
	of \(\delta_i\).
	
	Finally, for all \(u \in \lang{\A}\)
	the property above yields 
	\(q_n = \delta(q_0,u)\preccurlyeq_{\A} \delta_i(\iota_i,u)\).
	This implies \(u \in \lang{\A_i}\) as \(q_n\) is the unique maximal accepting state of \(\A\) and \(\A_i\). Hence \(\lang{\A} \subseteq \lang{\A_i}\).

	\subparagraph*{Reverse inclusion.}
	Now we prove that \(\lang{\A_{k-1}} \cap \lang{\A_{k}} \subseteq \lang{\A}\).
	We show that every word \emph{rejected}
	by \(\A\) is not in \(\lang{\A_{k-1}} \cap \lang{\A_{k}}\).
	Let \(u \notin \lang{\A}\),
	and let us consider the run \(\pi\) of \(\A\) over \(u\).
	We distinguish cases depending on whether or not \(q_{k-1}\)
	and \(q_k\) occur along \(\pi\),
	and show that in each case \(u \notin \lang{\A_{k-1}} \cap \lang{\A_{k}}\).
	
	If the state \(q_k\) never occurs along \(\pi\),
	then the run of \(\A_k\) over \(u\)
	coincides with that of \(\A\), hence \(u \notin \lang{\A_{k}}\), and therefore
	\(u \notin \lang{\A_{k-1}} \cap \lang{\A_{k}}\).
	Similarly, if \(q_{k-1}\) never occurs along \(\pi\)
	then \(u \notin \lang{\A_{k-1}}\) and again \(u \notin \lang{\A_{k-1}} \cap \lang{\A_{k}}\).
	
	It remains to consider the case where both \(q_{k-1}\)
	and \(q_k\) occur along \(\pi\).
	Let \(u = u_1 a_1 u_2 a_2 u_3\) denote the decomposition of \(u\)
	where \(u_1,u_2,u_3 \in \Sigma^*\)
	and \(a_1,a_2 \in \Sigma\) such that
	\begin{itemize}
		\item \(a_1\) is the letter upon which \(\pi\) first reaches \(q_{k-1}\);
		\item \(a_2\) is the letter upon which \(\pi\) first reaches \(q_{k}\).
	\end{itemize}
	Note that, as \(\A\) is linear,
	the run \(\pi\) remains in \(q_{k-1}\) while processing \(u_2\), 
	that is, \(u_2 \in \partAlph{k-1}{k-1}^*\).
	
	Consider the run \(\pi'\) of \(\A_{k-1}\) over \(u\).
	The prefix of \(\pi'\) over \(u_1\)
	coincides with that of \(\pi\) as \(q_{k-1}\) is not visited along this prefix.
	The two runs diverge upon reading \(a_1\): the run \(\pi\) transitions to \(q_{k-1}\),
	while \(\pi'\) follows the redirected transition to \(q_k\).
	Since \(\A\) has a damping pattern between \(q_{k-1}\) and \(q_k\),
	the run \(\pi'\) remains in \(q_k\) while \(\pi\) processes \(u_2\)
	(since \(\partAlph{k-1}{k-1} \subseteq \partAlph{k}{k}\))
	and also while \(\pi\) moves from \(q_{k-1}\) to \(q_k\) upon reading \(a_2\)
	(since \(\partAlph{k-1}{k} \subseteq \partAlph{k}{k}\)).
	Thus, after reading \(u_1 a_1 u_2 a_2\), both runs are in the same state \(q_k\).
	Moreover, since \(q_{k-1}\) does not occur in the suffix \(u_3\),
	the runs \(\pi\) and \(\pi'\) coincide again on this suffix,
	hence they end in the same state.
	Since \(u \notin \lang{\A}\), we have that \(u \notin \lang{\A_{k-1}}\), and in turn
	\(u \notin \lang{\A_{k-1}} \cap \lang{\A_{k}}\).
\end{proof}

\subsection{Primality witnesses of automata without damping patterns}\label{subsec:PrimeLinearAutomata}

For every linear automaton \(\A = \Auto\)
we define a set of words \(\Wit(\A)\),
we show that this set is not empty if \(\A\) does not have a damping pattern (Lemma~\ref{lemma:cond is eq Wit non-empty}),
and that all the words in this set are primality witnesses of \(\A\) (Lemma~\ref{lemma:wit in L(A')}).

\subparagraph*{Witness set.} Let \(\A\) be a linear automaton without damping pattern.
The witnesses of \(\A\) are defined by concatenating smaller words as follows:
\[
\Wit(\A)  = \set{w(1) \ldots w(n-1) \in \Sigma^*}{\delta(q_{i-1},w(i)) = q_{i} \neq \delta(q_{i},w(i)) \textup{ for every } i \in \setint{n-1}}.
\]
We denote the elements of \(\Wit(\A)\) by their decomposition in words \(w(1) \ldots w(n-1)\),
and implicitly assume that this decomposition satisfies the above condition.

\begin{lemma}\label{lemma:cond is eq Wit non-empty}
	Every linear automaton
	\(\A\) without damping pattern satisfies
	\(\Wit(\A) \neq \emptyset\).
\end{lemma}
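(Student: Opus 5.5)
The plan is to build the word \(w(1)\ldots w(n-1)\) one factor at a time. For each \(i \in \setint{n-1}\) we need a word \(w(i)\) with \(\delta(q_{i-1},w(i)) = q_i\) and \(\delta(q_i,w(i)) \neq q_i\). Once such words exist for every \(i\), their concatenation lies in \(\Wit(\A)\) by definition, since the condition is imposed on each factor separately. So everything reduces to a local argument for each consecutive pair \(q_{i-1},q_i\), where \(i\) ranges exactly over the indices \(k\) for which damping patterns are defined.

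First I will record a structural fact about linear automata. For every \(i \in \setint{n}\) there is a letter \(b\) with \(\delta(q_{i-1},b) = q_i\). Indeed, \(q_{i-1} \preccurlyeq q_i\), so some run leads from \(q_{i-1}\) to \(q_i\). Every state on that run lies between \(q_{i-1}\) and \(q_i\) in the total order, hence is one of these two states. The transition on which the run first leaves \(q_{i-1}\) must therefore lead to \(q_i\), which gives \(\Sigma_{i-1,i} \neq \emptyset\).

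Now fix \(i \in \setint{n-1}\). Since \(\A\) has no damping pattern between \(q_{i-1}\) and \(q_i\), there is a letter \(a \in \Sigma_{i-1,i-1} \cup \Sigma_{i-1,i}\) with \(a \notin \Sigma_{i,i}\), that is, \(\delta(q_i,a) = q_j\) for some \(j > i\). I then split into two cases.
\begin{itemize}
\item If \(a \in \Sigma_{i-1,i}\), I take \(w(i) = a\); both required conditions hold immediately.
\item If \(a \in \Sigma_{i-1,i-1}\), I take \(w(i) = ab\), where \(b \in \Sigma_{i-1,i}\) is the letter given by the structural fact. Then \(\delta(q_{i-1},ab) = \delta(q_{i-1},b) = q_i\). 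Moreover \(\delta(q_i,ab) = \delta(q_j,b)\), which satisfies \(q_j \preccurlyeq \delta(q_j,b)\). Since \(q_i \prec q_j\), antisymmetry of \(\preccurlyeq\) (a partial order by Proposition~\ref{prop: hold prop on ideals}) gives \(\delta(q_i,ab) \neq q_i\).
\end{itemize}

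The only point that needs care is the structural fact, namely that \(q_{i-1}\) reaches its successor \(q_i\) in a single letter. This fails in general partial orders but holds here by linearity, through the sandwiching argument above. The remainder of the proof is a direct verification.
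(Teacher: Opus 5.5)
Your proof is correct and follows the paper's argument: the same two-case choice of \(w(i)\), either a single letter of \(\Sigma_{i-1,i}\setminus\Sigma_{i,i}\) or a word \(ab\) with \(a\in\Sigma_{i-1,i-1}\setminus\Sigma_{i,i}\) and \(b\in\Sigma_{i-1,i}\). You also prove two points the paper leaves implicit: that \(\Sigma_{i-1,i}\neq\emptyset\) (by linearity) and that \(\delta(q_i,ab)\neq q_i\) (by antisymmetry, which already follows from \(\preccurlyeq\) being a total order).
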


\begin{proof}
	Let \(\A = \AutoLinGen\) be a linear automaton without damping pattern.
	Then for every \(i \in \setint{n-1}\) we have
	\((\partAlph{i-1}{i-1} \cup \partAlph{i-1}{i})  \setminus \partAlph{i}{i} \neq \emptyset\),
	and we can set  
	\[
	w(i) = \left\{
	\begin{array}{ll}
		a \in \partAlph{i-1}{i} \setminus \partAlph{i}{i} 
		& \text{ if }\partAlph{i-1}{i} \setminus \partAlph{i}{i} \neq \emptyset;\\
		a_1a_2 \in (\partAlph{i-1}{i-1} \setminus \partAlph{i}{i})\cdot \partAlph{i-1}{i} 
		& \text{ if }\partAlph{i-1}{i} \setminus \partAlph{i}{i} = \emptyset
		\text{ and }\partAlph{i-1}{i-1} \setminus \partAlph{i}{i} \neq \emptyset.
	\end{array}
	\right.
	\]
	Both cases imply that \(\delta(q_{i-1},w(i)) = q_{i} \neq \delta(q_{i},w(i))\)
	as required by the definition of \(\Wit(\A)\),
	hence the word \(w(1)w(2) \ldots w(n-1)\) is in \(\Wit(\A)\).
\end{proof}
To prove that each word \(\wit = w(1) \ldots w(n-1) \in \Wit(\A)\) is a primality witness, 
we rely on the following technical lemma stating that,
while \(\wit \notin \lang{\A}\),
duplicating any \(w(i)\) yields a word in \(\lang{\A}\).

\begin{proprep}\label{lemma:wit notin L(A)}
	Let \(\A\) be a linear minimal automaton that recognizes an ideal.
	Every \(\wit = w(1) \dots w(n-1) \in \Wit(\A)\)
	satisfies \(\wit \notin \lang{\A}\)
	and
	\[w(1) w(2)\dots w(i-1) w(i) w(i) w(i+1) \dots w(n-1) \in \lang{\A}
	\textup{ for all }
	i \in \setint{n-1}.\]
\end{proprep}

\begin{proof}
	The definition of \(\Wit(\A)\) yields that
	\(\delta(q_{i-1},w(i)) = q_{i}\) for all \(i \in \setint{n-1}\).
	Stitching these runs together yields that
	\begin{equation}\label{equ:witnessStandardRun}
		\delta(q_{0},w(1)w(2) \ldots w(i)) = q_{i} \textup{ for all } i \in \setint{n-1}.
	\end{equation}
	In particular \(\delta(q_0,\wit) = q_{n-1}\),
	and since  \(q_n\) is the only accepting state of \(\A\) by Proposition~\ref{prop: hold prop on ideals}.\ref{lemma: ideals unique final state},
	we get that \(\wit\) is not in  \(\lang{\A}\).
	
	We now prove that duplicating any factor \(w(i)\) yields an accepted word.
	The definition of \(\Wit(\A)\) says that for all \(i \in\setint{n-1}\), 
	\(\delta(q_{i},w(i)) \neq q_{i}\).
	Since the transitions of \(\A\) agree with the state ordering, 
	this implies that \(\delta(q_{i},w(i)) \succcurlyeq q_{i+1}\),
	and, more generally,  \(\delta(q,w(i)) \succcurlyeq q_{i+1}\) for every \(q \succcurlyeq q_i\).
	Applying this observation \(j-i+1\) times yields that
	\begin{equation}\label{equ:witnessAdvancedRun}
		\delta(q,w(i)w(i+1) \ldots w(j)) \succcurlyeq q_{j+1} \textup{ for all }  i \leq j \in \setint{n-1} \textup{ and } q \succcurlyeq q_i.
	\end{equation}
	As a consequence, for every \(i \in \setint{n-1}\),
	\[\begin{array}{lll}
    	\multicolumn{3}{c}{\delta(q_0,w(1) w(2)\dots w(i-1) w(i) w(i) w(i+1) \dots w(n-1))}\\
    	 & = & \delta(q_i,w(i) w(i+1) \dots w(n-1))\\
    	& \succcurlyeq & q_n,
	\end{array}\]

	where the first equality follows from Equation~\eqref{equ:witnessStandardRun}
	and the second from Equation~\eqref{equ:witnessAdvancedRun}.
	Since \(q_n \in F\) is an accepting sink state, the result follows.
\end{proof}

\begin{proprep}\label{lemma:wit in L(A')}
	Let \(\A\) be a linear minimal automaton that recognizes a non-empty ideal.
	For every automaton \(\A'\) smaller than \(\A\),
	\(\lang{\A} \subseteq \lang{\A'}\) implies \(\Wit(\A) \subseteq \lang{\A'}\).
\end{proprep}

\begin{proof}
	Let \(\A = \AutoLin\) be a linear minimal automaton that recognizes a non-empty ideal,
	and let \(\A' = (S',\Sigma, \iota', F', \delta')\) be an automaton satisfying \(|S'| < n+1\) and
	\(\lang{\A} \subseteq \lang{\A'}\).
	We use Proposition~\ref{lemma:wit notin L(A)} and the fact
	that ideals are closed by upper-word
	to show that \(\A'\) reaches an accepting state while reading each word of \(\Wit(\A)\).
	
	Formally, let \(\wit = w(1) \dots w(n-1) \in \Wit(\A)\),
	and let \((r_i)_{i \in \llbracket 0, n-1 \rrbracket}\) be the sequence of states
	visited while reading each consequent word that makes up \(\wit\), that is,
	\(r_0 = \iota'\) and \(\delta'(r_{i-1},w(i))= r_i\) for all \(i \in \setint{n-1}\).
	We distinguish two cases depending on whether or not a state is repeated in this sequence,
	and show that \(\wit \in \lang{\A'}\) in both cases.
	
	\textbf{Case 1.} Suppose that there \(i,j \in \llbracket 1, n-1 \rrbracket\) with \(i < j\) such that \(r_i = r_j\).
	Consider the following words obtained by iterating factors of \(\wit\):
	\[
	\begin{array}{lll}
		\wit' & = & w(1) w(2) \dots w(i-1) w(i) w(i) w(i+1) \dots w(n-1);\\
		\wit'' & = & w(1) w(2) \dots w(i-1) (w(i) \dots w_{j-1})^2 w(j) \dots  w(n-1)).
	\end{array}
	\]
	Proposition~\ref{lemma:wit notin L(A)} implies that \(\wit' \in \lang{\A} \subseteq \lang{\A'}\).
	As a consequence, \(\wit'' \in \lang{\A} \subseteq \lang{\A'}\)
	as \(\wit'\) is a subword of \(\wit''\) and \(L(\A)\) is an ideal.
	Finally, the fact that \(r_i = r_j\)
	implies that \(\delta'(\iota,\wit) = \delta'(\iota',\wit'')\), 
	thus in turn \(\wit \in \lang{\A'}\).
	
	\textbf{Case 2.} Suppose that \(r_i \neq r_j\) for all \(i \neq j \in \llbracket 0, n-1 \rrbracket\).
	Since \(|S'| < n + 1\), this implies that all states of \(\A'\) are visited while reading \(\wit\).
	As \(\lang{\A}\) is non-empty, there exists \(v \in \lang{\A}\).
	Consequently, there exists a decomposition \(\wit = xy\) such that \(\delta'(\iota',v) = \delta'(\iota',x)\).
	It follows that \(\delta'(\iota',vy) = \delta'(\iota',xy) = r_{n-1}\).
	Moreover, \(vy \in \lang{\A} \subseteq \lang{\A'}\) as \(\lang{\A}\) is an ideal and \(v\) is a sub-word of \(vy\).
	Hence \(r_{n-1}\) is an accepting state, thus  \(\wit \in \lang{\A'}\). 
\end{proof}

\section{Union decomposition}\label{sec:union_decomposition}

Throughout this section, the notions of decomposition, composite and prime
refer to union decomposition.
The main result is  Theorem~\ref{theo:union characterization},
which characterizes prime minimal automata recognizing ideals.
For composite automata,
we show how to construct a decomposition into  \(|\Lmin|\) prime
automata recognizing ideals.
The characterization relies on the
notion of \emph{accelerating patterns}.

\begin{figure}[h!]
  \centering
    \includegraphics[width=7cm]{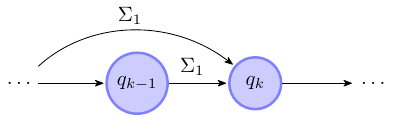}
  \caption{Illustration of an accelerating pattern between \(q_{k-1}\) and \(q_{k}\).}
  \label{fig:accelerating pattern}
\end{figure}

\subparagraph*{Accelerating pattern.} An \emph{accelerating pattern} in
a linear automaton recognizing an ideal is a state such that each letter
labelling an incoming transition from its greatest predecessor also
labels an incoming transition from a strictly smaller state.
Formally, let \(\A = \AutoLin\) be a linear automaton recognizing an ideal
such that \(q_{k-1}\prec q_{k}\) for all \(k \in \llbracket 1, n\rrbracket\).
For all \(k \in \llbracket 1, n\rrbracket\), \(\A\) has an accelerating pattern if there exists \(k\) such that
\[
\partAlph{k-1}{k}
\quad \subseteq \quad
\bigcup\limits_{j=0}^{k-2}
\partAlph{j}{k}.
\]
For instance,this pattern is illustrated in Fig.\ref{fig:accelerating pattern}. And the automaton \(\A\) in
Fig.\ref{fig: decomposition} has an accelerating pattern in
\(q_2\). This automaton can be decomposed into
automata accepting \(\Shuffle{w}\) for \(w\in \{ab,ba,bb,ca,cb\}\).
This illustrates the connection between accelerating patterns
and decomposability, which we now formalize.

\begin{theorep}\label{theo:union characterization}
	Let \(\A\) be a minimal  automaton that recognizes an ideal. We denote \(\lang{\A} = L\). If we denote \(m = \max\set{|w|}{w \in \Lmin}\) then the following are equivalent:
	\begin{enumerate}
		\item \(\A\) is prime.
		\item The size of \(\A\) is equal to \(m+1\).
		\item \(\A\) is linear and has no accelerating pattern.
	\end{enumerate}
\end{theorep}

\noindent
Before proving the above theorem we need to define the automata used in the decomposition. They are the ones that recognize languages of the form \(\Shuffle{\{w\}}\) where \(w\) is a word.

\subparagraph*{Principal Automata.} The \emph{principal automaton}
generated by \(w\in \Sigma^n\), written \(\A_w\), is the minimal
automaton that recognizes \(\Shuffle{\{w\}}\). Its set of states is
\(S_w = \set{q_i}{i \in \llbracket 0, n \rrbracket}\), its alphabet is
\(\Sigma\),  its initial state is \(q_0\), its final state is \(\{q_n\}\) and its set of transitions \(\delta_w\) is 
\[\set{(q_{i-1},w_{i},q_{i})}{i \in \setint{n-1}} \ \cup \bigcup\limits_{i \in
  \llbracket 0, n-1\rrbracket} \set{(q_i,a,q_i)}{a \in \Sigma
  \setminus\{w_{i + 1}\}} \ \cup
  \ \set{(q_n,a,q_n)}{a\in \Sigma}.\] This definition is illustrated in
Fig.\ref{fig: Auto A_w}.

\begin{figure}
	\centering
 	\includegraphics[width=0.5\textwidth]{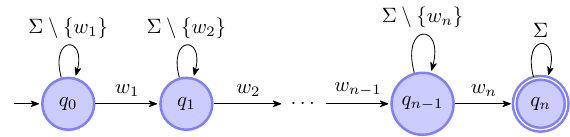}
	\caption{Example of \(\A_w\) when \(w = w_1\dots w_n\).}\label{fig: Auto A_w}
\end{figure}

\begin{proof}
  The equivalence between (1.) and (2.) has been proved in the paper.
  We assume that \(|\A| \geq 1\). Otherwise \(\A\) is prime, linear
  with no accelerating pattern.  Let \(\A = \AutoLin\) be a minimal
  linear automaton recognizing an ideal and satisfying \(q_i\prec
  q_{i+1}\).
  
  First, we prove that (3.) implies (2.). Assume that \(\A\) has  no
  accelerating pattern.  Then, we
  can set \(u= a_1 \dots a_{n}\) with \(a_i\in \Sigma_{i-1,i}\) such
  that for all \(j < i-1, a_i \notin \Sigma_{j,i}\).

  Using the claim of the equivalence proof between (1.) and (2.),
  \(m\leq n\). In order to prove that \(|\A|=m+1\), that is \(n=m\),
  it remains to prove that \(n \leq m\).  Since for all \(i \in
  \setint{n}\), one has \(a_i \in \Sigma_{i-1,i}\) and, by definition
  of \(\delta\), one has \(\delta(q_0,u)= q_n\). Therefore, since
  \(q_n\) is the final state, \(u \in \lang{\A}=L\). Now we will prove
  that no strict subword of \(u\) belongs to \(L\). Let \(u'\in
  \Sigma^t\) be a strict subword of \(u\). Since \(n\geq 1\), \(u\neq
  \varepsilon\). Consequently there exists a strictly increasing
  sequence \((\tau(i))_{i\in\setint{t}}\) such that for all \(i \in
  \setint{t}, u'_i = u_{\tau(i)}\). Let \(k = \max\set{i \in
    \setint{t}}{\tau(i) = i}\) if this set is non-empty, and \(k=0\)
  otherwise.

  \begin{itemize}
    \item If \(k=t\), then \(\delta(q_0,u')=q_k\). Since
      \(k<m\leq n\), \(q_k\neq q_n\). Therefore \(u'\notin
      \lang{\A}\).
    \item If \(k < t\) let \(q_i,q_j \in \setint{n}\) such that \(q_i
      \prec q_{j-1}\). Since \(\A\) is linear, one has
      \(\delta(q_i,a_j) \prec q_j\). Therefore, there exists \(v \in
      \Sigma^*\) such that \(\delta(q_i,v)=q_{j-1}\). Consequently,
      since in a linear automaton \(\preccurlyeq\) is a total order,
      Proposition~\ref{prop:new prop on ideals}.\ref{lemma: min
        automata preccurlyeq equality} provides that \(\delta(q_i,a_j)
      \preccurlyeq \delta(q_{j-1},a_j) = q_j\). By hypothesis \(a_j
      \notin \Sigma_{i,j}\), therefore \(\delta(q_i,a_j) \neq
      q_j\). Hence by definition of \(\prec\), one has that
      \(\delta(q_i,a_j) \prec q_j\). By definition of \(k\), one has
      \(\delta(q_0,w'_1\dots w'_k)= q_k\) and \( q_k \prec
      q_{\tau(k+1)-1} \). Now, using \(t - k\) times that
      \(\delta(q_i,a_j) \prec q_j\) when \(q_i\prec q_{j-1}\) , we
      obtain that \(\delta(q_0,u') \prec q_{(\tau(t))}\). Thus \(u'
      \notin \lang{\A}\).
  \end{itemize}
  We just proved that \(u\in L\) and no strict subword of \(u\)
  belongs to \(L\). Consequently \(u\in \Lmin\), proving that \(m=n\).

  Secondly, we prove that (2.) implies (3.). The proof is by
  contraposition. If \(\A\) is not linear, by the construction of \(\Lmin\) done 
  in~\cite{heam2002shuffle} one has that \(m < |\A|\) since \(m \leq \max\set{\norm{q}}{q \in S}\). 
  Now we consider \(\A\) to be linear. 
  Assume there exists \(i \in \setint{n}\) such that
  for all \(a \in \Sigma_{i-1,i}\) there exists \(j \in \setint{i-1}\)
  satisfying \(a \in \Sigma_{j,i}\). We will prove that \(m \leq n\).
  Let \(u \in \lang{\A} \cap \Sigma^{n}\).

  If there exists \(k< l \in \setint{n}\) such that
  \(\delta(q_0,u_1\dots u_k) = \delta(q_0,u_1\dots u_l)\), then one
  has \(\delta(q_0,u_1\dots u_k u_{l+1} u_{l+2} \dots u_{n}) =
  \delta(q_0,u)\).  Consequently \(u_1\dots u_k u_{l+1} u_{l+2} \dots
  u_{n} \in \lang{\A}\). Thus, by Proposition~\ref{prop: hold prop on
    ideals}.\ref{lemma:no subwords of Lmin in L}, \(u \notin \Lmin\).

  Otherwise, the pigeon hole principle and the structure of \(\A\)
  implies that for all \(k \in \setint{n}\), one has \(u_k \in
  \Sigma_{k-1,k}\). By hypothesis, there exists \(j \in \setint{i-1}\)
  such that \(u_i \in \Sigma_{j,i}\). Hence by the definition of
  \(\delta\), one has \(\delta(q_0,u_1\dots u_{j-1} u_i u_{i+1} \dots
  u_n) = q_n\). Therefore \(u_1\dots u_{j-1} u_i u_{i+1} \dots u_n\)
  is a subword of \(u\) that belongs to \(\lang{\A}\). Hence, using
  Proposition~\ref{lemma:no subwords of Lmin in L}, one has \(u
  \notin \Lmin\). Thus, by definition of \(m\), one has \(m
  \neq n\), which concludes the proof.
\end{proof}

\begin{proof}
	We show that \((1.)\) and \((2.)\) of theorem~\ref{theo:union characterization} are equivalent.
	The equivalence between \((2.)\) and \((3.)\) is proved in the appendix.

Let \(\wmax \in \Lmin\) such that \(|\wmax| = m\). We first claim that
every automaton of size smaller than or equal to \(m\) recognizing
\(\wmax\) also recognizes at least one word that is not in
\(L\).
Let \(\A' = \Auto\) be an automaton such that \(|\A'| \leq m\)
and \(\wmax \in \lang{\A}\). Let \((q_i)_{i\in
  \llbracket 0, m \rrbracket} \in S^{m+1}\) be the states visited by
the accepting path on \(w\).
Then \(q_0 = \iota\), \(q_m\in F\) and \(\delta(q_{i-1},\wmax_i) =
q_i\) for
all \(i \in \setint{m}\).
By a cardinality argument, there exist two \(i, j \in \llbracket 0, m
\rrbracket\) such that \(i < j\) and  \(q_i = q_j\). Let \(w'  =
\wmax_1 \dots \wmax_{i-1} \wmax_i \wmax_{j+1} \wmax_{j+2} \dots
\wmax_m\). By construction,  \(\delta(\iota,w') = q_m\), proving that
\(\wmax^\prime \in \lang{\A^\prime}\). But \(w^\prime\) is a subword of
  \(\wmax\). By Proposition~\ref{prop: hold prop on ideals}.\ref{lemma:no
    subwords of Lmin in L}, \(w' \notin L\), proving the
  claim. Hence \(|\A| \geq m + 1\) always hold.

  Let us now show that \((2.)\) implies \((1.)\). Assume now that \(|\A| = m+1\). Towards building a contradiction,
  assume that \(\A\) admits a
  decomposition \((\A_i)_{i\in \setint{k}}\). Since \(\wmax\in\lang{\A}\),
  there exists  \(i \in \setint{k}\) such that \(\wmax \in
  \lang{\A_i}\). Using the claim, either \(|\A_i|\geq m+1\)
  or \(\A_i\) recognizes a word that is not in \(\lang{\A}\),
  and both cases contradict the fact that \(\A_i\)
  is a component of a union-decomposition of \(\A\).
  Therefore \(\A\) is prime, proving that (2.)
  implies (1.).

  To finish we show that \((1.)\) implies \((2.)\). Assume that the size of \(\A\) is not \(m+1\). Since
  \(\wmax\in \lang{\A}\), using the claim, \(|\A|\geq m+1\). Therefore 
  \(|\A| > m+1\). But, one has
  \[\Shuffle{\lang{\A}} = \bigcup\limits_{u \in \Lmin} \Shuffle{\{u\}}\enspace.\]
 By construction of principal automata, \(|\A_u|=|u| +
 1\). Therefore\((\A_u)_{u \in \Lmin}\) is a decomposition of \(\A\)
 in \(|\Lmin|\) linear automata that recognize ideals, proving that
 \(\A\) is not prime, which proves that that (1.)
 implies (2.). \qedhere
\end{proof}

\noindent
Note that the proof of equivalence of \((1.)\) and \((2.)\) is
constructive: whenever \(\A\) is composite we can first compute \(\Lmin\) using the algorithm in~\cite{heam2002shuffle}, and then the present proof yields a decomposition into \(|\Lmin|\) prime linear automata that recognize ideals.

\section{Complexity results}\label{sec: Decomposition in prime automaton}

In this section, we show that the problems of deciding whether a
minimal automaton recognizing an ideal is decomposable with respect to 
intersection and union are both in NL. 
We also give bounds on the size of the decomposition into prime automata obtained by iterating the algorithms.

\begin{figure}
	\noindent
	\begin{minipage}[t]{0.48\textwidth}
		\begin{algorithm}[H]
			\textbf{InterComposite}\((\A)\)
			
			\eIf{\textbf{\rm\bf NonLinear}\((\A)\)}
			{
				return(\textbf{True})
			}
			{
				\textbf{guess} \(q,r \in S\)
				
				return(\textbf{Next}\((q,r) \ \wedge \ \)\textbf{Damping}\((q,r)\))
			}
		\end{algorithm}
		\begin{algorithm}[H]
			\textbf{NoDamping}\((q_i,q_j)\)
			
			\For{\(a \in \Sigma\)}
			{
			\If{ \(\delta(q_i,a) \in \{q_i,q_j\} \wedge \delta(q_j,a) \neq q_j\)} 
		{
			return(\textbf{True}) 
		}
	}
	return(\textbf{False})
\end{algorithm}
\end{minipage}
\hfill
\begin{minipage}[t]{0.48\textwidth}

\begin{algorithm}[H]
	\textbf{NonLinear}\((\A)\)
	
	\textbf{guess} \(q,r \in S\)
	
	return(\textbf{NoPath}\((q,r) \ \wedge \ \)\textbf{NoPath}(\(r,q\)))
\end{algorithm}
\begin{algorithm}[H]
	\textbf{NoNext}\((q_i,q_j)\)
	
	\eIf{{\textbf{\rm\bf NoPath}}\((q_i,q_j) \vee i = j\)}
	{
		return(\textbf{True})
	}
	{
		\textbf{guess} \(q \in S\setminus\{q_i,q_j\}\)
		
		\If{\textbf{\rm\bf Path}\((q_i,q)
			\ \wedge\) \textbf{\rm \bf Path}\((q,q_j)\)} 
		{
			return(\textbf{True}) 
		}
		return(\textbf{False})
	}
\end{algorithm}
\end{minipage}
\caption{An NL algorithm deciding 
whether a minimal automaton \(\A\) recognizing an ideal
is composite for intersection,
together with auxiliary NL procedures
performing local structural checks:
detecting a damping pattern between two states,
determining whether an automaton is linear,
and checking whether two states are consecutive.}
\label{fig:nlogspace algorithms}
\end{figure}

\maintheorem*

\begin{proof}
	Since the 
        Immerman-Szelepcsényi theorem~\cite{immerman1988nondeterministic}
        states that NL = co-NL,
        it suffices to design an NL algorithm deciding whether \(\A\) is
        composite.
        To this end, we use five procedures:
        {\bf InterComposite}, {\bf NoPath},  {\bf NonLinear}, {\bf NoNext} and
        {\bf NoDamping}, depicted in
        Fig.~\ref{fig:nlogspace algorithms}.
        
        Our main procedure, {\bf InterComposite},
        is based on the characterization of intersection-prime automata
        given by Theorems~\ref{theo:non-linear are decomposable} and~\ref{prop: decomposition of linear automata}:
        \(\A\) is composite if and only if it is non-linear
        or admits a damping pattern.
        Accordingly, the algorithm first invokes {\bf NonLinear},
        which determines whether \(\A\) is non-linear.
        If this is not the case,
        it nondeterministically guesses two states,
        invokes {\bf NoNext} to test whether they are consecutive, and then {\bf Damping} to check for the existence of a
        damping pattern.
        We now describe each subroutine in detail
        and show that they all operate in logarithmic space.
        \begin{itemize}
        	\item 
	        {\bf Path} and {\bf NoPath}
	        respectively decide whether there is a path or no path
	        between two given states.
	        It is known that both problems are in NL~\cite{immerman1988nondeterministic},
	        therefore we do not redefine the corresponding routines here.
	        \item 
	        {\bf NonLinear} takes as input an
	        automaton and accepts if it is non-linear.
	        It guesses two
	        states and returns {\bf True}
	        if and only if they are incomparable,
	        using {\bf NoPath} as a subroutine to perform
	        the reachability checks.
	        Since only two state indices need to be stored
	        in addition to the workspace
	        of the NL {\bf NoPath} procedure, this algorithm is in~NL.
	        \item 
	       {\bf NoNext} decides whether, in a linear
	        automaton,
	        \(q_j\) is not the direct successor of \(q_i\)
	        for the total order on states.
	        The first {\bf if} statement relies on {\bf NoPath}
	        to check whether
	        \(q_i \not\preccurlyeq q_j\) or \(q_i = q_j\),
	        both of which witness that \(q_j\)
	        is not the direct successor of \(q_i\).
	        Otherwise, since
	        \(q_i \prec q_j\), it remains to check whether there exists \(q\) such
	        that \(q_i\prec q \prec q_j\),
	        which is checked using the  {\bf Path} subroutine.
	        Since only three state indices need to be stored in addition to the workspace
	        of the NL
	        {\bf Path} and {\bf NoPath} procedures, this algorithm is in NL. And since Nl = co-Nl, there exists {\bf Next} an NL algorithm.
	        \item 
	        {\bf NoDamping} decides,
	        given \(i\) and \(j\),
	        whether there is no damping pattern between 
	        \(q_i\) and \(q_j\).
	        The {\bf if} statement checks whether
	        there exist a letter \(a\in \Sigma_{i,i} \cup \Sigma_{i,j}\)
	        which is not in
	        \(\Sigma_{j,j}\).
	        Since a damping pattern corresponds to the inclusion
	        \(\Sigma_{i,i} \cup \Sigma_{i,j} \subseteq \Sigma_{j,j}\),
	        the existence of such a letter witnesses its absence.
	        Only two state indices and one letter name need to be stored,
	        which can be done in logarithmic space. 
	        Moreover, since NL = co-NL, there also exists an NL algorithm for the complement
	        {\bf Damping} of {\bf NoDamping}. \qedhere
	    \end{itemize}
\end{proof}

\noindent
We show that these steps can be performed in linear time,
relying on a topological sorting (e.g., Kahn’s algorithm~\cite{DBLP:journals/cacm/Kahn62}).
Such a sorting can be used, first, to determine whether the automaton is linear.
If it is not, it can then be used to compute the separator set.
Otherwise, we simply traverse the states in topological order while searching for a damping pattern.

\constructionIntersection*

\begin{toappendix}
\constructionIntersection*
	\begin{proof}
Testing whether \(\A\) is linear can be done in time  \(\mathcal{O}(|\A||\Sigma|)\) using a topological sorting, for instance by Kahn's
algorithm \cite{DBLP:journals/cacm/Kahn62}. If \(\A\) is non-linear,
it is decomposable by  Theorem~\ref{theo:non-linear are
  decomposable}. Otherwise, the topological sort provides the rank of
each state and one can verify in linear time the presence of a damping
pattern. Proposition~\ref{prop: decomposition of linear automata}
allows then to conclude on the intersection primality of \(\A\).

If \(\A\) is composite, and non-linear, then finding the smallest
rank \(k\) verifying that \(|\set{q}{\norm{q} = k}| \geq 2\) can be done in
\(\mathcal{O}(|\A|)\) time steps using the topological sort.

  Let \(m\) be the number of states of rank \(k\). Then one can build
  arrays representing \(\mifa{q}\) for \(q\) a state of rank \(k\) in
  \(\mathcal{O}(|\A||\Sigma|m)\) time steps. Finally we create \(m\)
  copies of \(\A\) and modify them to be the family automata \(\A(\rho)\) in
  \(\mathcal{O}(|\A||\Sigma|m)\) steps. To do so one can modify the
  transitions of state by order of decreasing rank using an array to
  remember all the \(s^\rho_{q,a}\) of the construction. Therefore the time complexity in this case is
  \(\mathcal{O}(|\A||\Sigma|^2)\) since \(m\) is at most \(|\Sigma|\).

Otherwise, if \(\A\) linear and composite, one can compute the
automata \(\A_0\) and \(\A_1\) defined in
Section~\ref{subsec:DecomposableLinearAutomata}  in time \(\mathcal{O}(|\A||\Sigma|)\).
	\end{proof}
\end{toappendix}

\begin{figure}[t]
	\centering
	\includegraphics[width=0.6\textwidth]{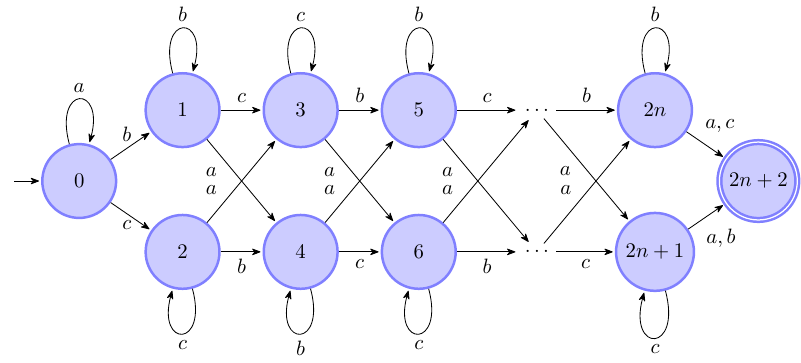}
	\caption{Example of the construction of \(\A_n\) non-linear that is decomposed by the algorithm in \(2^{|\A|/2-1}\) linear automata.}\label{fig:non_linear_to_exp_linear}
\end{figure}

\noindent
We now study the size of the decomposition obtained
by iteratively applying our constructions until reaching intersection-prime components.
We show that repeated application of our procedure on a non-linear automaton
yields at most exponentially many linear automata, and that this bound is tight.
Similarly, applying our procedure to a linear automaton
produces at most exponentially many prime components, and this bound is also tight.
In both cases, we construct families of automata witnessing the tightness
of the exponential bound.

\begin{proprep}\label{lemma: bound sup Decomposition in linear automata}
Let \(\A = \minAuto\) be an non-linear minimal automaton that
recognizes an ideal. Iterating the decomposition described in Theorem~\ref{theo:non-linear are decomposable} yields
at most \(2^{|\succsA{\A}{\sep}|}\) linear automata. 
\end{proprep}

\begin{proof}
	This lemma is shown by induction on \(|\A|\).

	The construction describes described in Theorem~\ref{theo:non-linear are decomposable} decompose \(\A\) in \((\A(\rho)_{\rho\in \setsep})\). First, let's show that for any \(\rho \in \setsep\), the decomposition in linear automata given by our algorithm is at most \(2^{|\succsA{\A}{\sep}| - |\setsep|}\) then we will conclude using a inequality on integers. Let \(\rho \in \setsep\). Lemma~\ref{lemma: preccurlyeq in A' and A} and Lemma~\ref{lemma:unicity above sep} says that if \(\A(\rho)\) is not linear, then its separator state is smaller than \(\rho\). Hence, since there are only trivial cycles in \(\A\) and since for all \(\rho' \in \setsep\setminus \{\rho\}\), one has \(\rho' \notin \succsA{\A(\rho)}{\rho}\) and \(\sep\notin \succsA{\A(\rho)}{\rho}\), the induction hypothesis says that \(\A(\rho)\) is decomposed in at most \(2^{|\succsA{\A}{\sep}| - |\setsep|}\) linear automata.
	Therefore, \(\A\) is decomposed in \(|\setsep|\) automata which, by induction, are decomposed in at most \(2^{|\succsA{\A}{\sep}| - |\setsep|}\) linear automata. Hence, one can conclude that \(\A\) is decomposed in at most \(|\setsep|2^{|\succsA{\A}{\sep}| - |\setsep|}\) linear automata. And since for all \(k \in \N\) one has \(k \leq 2^k\), iterating the construction in Theorem~\ref{theo:non-linear are decomposable} \(\A\) is decomposed in at most \(2^{|\succsA{\A}{\sep}|}\) linear automata.
\end{proof}

We conjecture that this bound can be lowered to
\(2^\frac{|\succsA{\A}{\sep}|}{2}\), and we know it cannot be further improved.
Indeed, for all \(n \in \N\), let \(\A_n\) be the automaton
in Fig.~\ref{fig:non_linear_to_exp_linear}.
This automaton has \(2n+2\) states, and is decomposed in \(2^{n}\) linear automata by iterating the
construction of Theorem~\ref{theo:non-linear are decomposable}, one for each path from state \(0\) to state \(2n+2\).

\begin{proprep}\label{lemma: Decomposition of linear in prime automata}
	Let \(\A\) be a linear automaton that recognizes an
    ideal. Iterating the decomposition described in 
    Theorem~\ref{prop: decomposition of linear automata} yields
    at most \(2^{|\A|}\) automata. 
\end{proprep}

\begin{proof}
	The proof is done by induction on the size of \(\A\).

	If \(\A\) is composed of a unique state, then \(\A\) is prime.

	Otherwise if \(2 \leq |\A|\) then, by Proposition~\ref{prop:
          decomposition of linear automata}, \(\A\) is decomposed in
        two linear automata of size \(|\A|-1\). Then, by induction
        hypothesis on these two automata, one has that \(|\A|\) is decomposed in at most \(2\times 2^{|A|-1} = 2^{|\A|}\) automata.  
\end{proof}

The bound of Proposition~\ref{lemma: Decomposition of linear
  in prime automata} is tight.
Let \(\A = \minAuto\) and \(\A' =
(S',\Sigma,\iota',\{q_f^\prime\},\delta')\) be two minimal automata
that recognize ideals. We define \(\A \cdot \A'=(S
\sqcup S'\setminus F, \Sigma, \iota, F', \delta'')\), where \(
\sqcup\) is the disjoint union (we rename states if necessary) and where
\[\delta'' = \set{(q,a,q')\in \delta}{q' \neq q_f} \cup \set{(q,a,\iota')}{(q,a,q_f) \in \delta} \cup \delta'\enspace.\]

We write, for an automaton \(\A\) that recognizes an ideal and \(n \in
\N\setminus\{0\}\), \(\A^1 = \A\) and inductively \(\A^n = \A^{n-1} \cdot \A\).
 Let \(\A, \A_0\) and \(\A_1\) be the automata of Fig.~\ref{fig:
   decomposition}. The size of \(\A^n\) is
 then \(3n+1\). Using iteratively Proposition~\ref{prop: decomposition of linear automata}, \(\A^n\) is decomposed in \(\{\A_{i(1)}\dots \A_{i(n)} \mid (i(j))_{j\in \setint{n}} \in \{0,1\}^n\}\), which contains \(2^n= \Omega(2^{|\A_i|})\) prime automata. 
Using the two previous propositions, one can conclude the following:

\begin{corollary}
	Let \(\A\) be a minimal automaton recognizing an ideal.
	Using algorithms of the previous sections,
	one can compute an intersection-decomposition of \(\A\)
	into at most \(2^{2|\A|}\) intersection-prime automata
	that recognize ideals.
\end{corollary}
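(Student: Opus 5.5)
The plan is to chain the two iteration bounds already established. If \(\A\) is linear, iterating Theorem~\ref{prop: decomposition of linear automata} produces prime components, and Proposition~\ref{lemma: Decomposition of linear in prime automata} bounds their number by \(2^{|\A|} \leq 2^{2|\A|}\). If \(\A\) is non-linear, I proceed in two stages.

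\emph{Stage 1: reduce to linear components.} Iterate the construction of Theorem~\ref{theo:non-linear are decomposable}. Each step returns family automata that recognize ideals (Proposition~\ref{lemma:RA'q' equality}) and are strictly smaller (Proposition~\ref{lemma:|A(a)| < |A|}), so the process terminates. Before iterating, each component should be replaced by its minimal automaton. This keeps it in \(\I\), does not increase its size, and does not change its language, so the intersection is unchanged. By Proposition~\ref{lemma: bound sup Decomposition in linear automata}, this stage yields at most \(2^{|\succsA{\A}{\sep}|} \leq 2^{|\A|}\) linear automata, each of size at most \(|\A|-1\).

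\emph{Stage 2: make each linear component prime.} Apply Proposition~\ref{lemma: Decomposition of linear in prime automata} to each linear automaton \(\A_j\) from Stage 1. Again minimize first, noting that minimizing a linear automaton recognizing an ideal keeps it linear, or at least keeps it in \(\I\). If it ever became non-linear, I would re-enter Stage 1, which is the point to watch. Each \(\A_j\) then contributes at most \(2^{|\A_j|} \leq 2^{|\A|}\) prime components. The total is therefore at most \(2^{|\A|}\cdot 2^{|\A|} = 2^{2|\A|}\).

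\emph{Correctness of the final family.} The intersection of all final components equals \(\lang{\A}\), because each stage replaces an automaton by a family whose languages intersect to its language (Propositions~\ref{lemma:L(A) = inter L(A(a))} and~\ref{lemma:L(A) = inter L(A_k)}). Every component is strictly smaller than \(\A\) and recognizes an ideal. A component counts as intersection-prime once no further step applies, that is, once it is linear with no damping pattern, by Theorem~\ref{prop: decomposition of linear automata}.

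\emph{Main obstacle.} The delicate point is that the two propositions are stated for the iteration procedure, not for each intermediate component. I must check that the components fed into Stage 2 really are minimal linear automata in \(\I\). If minimization of some \(\A_j\) could reintroduce non-linearity, the stages would interleave and the bound would need to be handled by a single induction on size. In that case I would prove by strong induction on \(n = |\A|\) that the number \(N(\A)\) of prime components satisfies \(N(\A) \leq 2^{2n}\). In the non-linear step this uses the fact that at most \(|\Sigma|\) children of size at most \(n-1\) appear, together with the sharper subset-counting argument from the proof of Proposition~\ref{lemma: bound sup Decomposition in linear automata}.
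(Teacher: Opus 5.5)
Your proposal is correct and matches the paper's argument: Proposition~\ref{lemma: bound sup Decomposition in linear automata} gives at most $2^{|\A|}$ linear components, Proposition~\ref{lemma: Decomposition of linear in prime automata} splits each of them into at most $2^{|\A|}$ prime automata, and the product is $2^{2|\A|}$. The obstacle you flag does not arise: the minimal automaton of an accessible automaton is a quotient of it, and the quotient map preserves reachability, so minimizing a linear automaton keeps it linear and your fallback induction is unnecessary.
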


\begin{figure}[t]
	\begin{algorithm}[H]
		\textbf{UnionComposite}\((\A)\)
		
		\eIf{\textbf{\rm\bf NonLinear}\((\A)\)}
		{
			return(\textbf{True})
		}
		{
			\textbf{guess} \(q,s \in S\) \tcp*{We guess a state that has the accelerating pattern}
			
			\eIf{\textbf{Next}\((q,s)\) \tcp*{Verify that that \(q = q_{i-1}\) and \(s = q_i\)}}
			{
				\For{\(a \in \Sigma\)}
				{
					\If{\(\delta(q,a) = s\) \tcp*{For all \(a \in \Sigma_{i-1,i}\)}} 
					{
						\textbf{guess} \(r \in S\) \tcp*{we try to guess a \(q_j\) such that \(a \in \Sigma_{j,i}\)}
						
						\If{\(\neg (
							\textbf{\rm\bf
								Path}(r,q) \wedge r \neq q \wedge \delta(r,a) = s)\)}
						{
							return(\textbf{False})
						}
					}
				}
				return(\textbf{True}) \tcp*{We accept if for all \(a \in \Sigma_{i-1,i}\) we found a \(q_j\)}
			}
			{
				return(\textbf{False})
			}
		}
	\end{algorithm}
	\caption{An NL algorithm deciding 
		whether a minimal automaton \(\A\) recognizing an ideal
		is composite for union.
		The procedure guesses candidate states witnessing the accelerating pattern
		and verifies locally the required transition constraints.}
	\label{fig:nlogspace union algorithm}
\end{figure}

We now investigate the complexity of deciding the primality for the
union decomposition. 

\constructionUnion*

\begin{proof}
	Since the 
	Immerman-Szelepcsényi theorem~\cite{immerman1988nondeterministic}
	states that NL = co-NL,
	it suffices to design an NL algorithm deciding whether \(\A\) is
	composite.
	To this end, we use five procedures:
	{\bf UnionComposite}, {\bf NoPath},  {\bf NonLinear} and {\bf Next},
	depicted in
	Fig.~\ref{fig:nlogspace algorithms} and Fig.~\ref{fig:nlogspace union algorithm}.
	We refer the reader to the proof of Theorem~\ref{theo:space complexity}
	for the detailed description of the subroutines and the fact that they are in NL.

  The main procedure {\bf UnionComposite} relies of the characterization
  of union-composite automata stated in Theorem~\ref{theo:union characterization}:
  \(\A\) is composite if and only if it is non-linear or has an accelerating pattern.
  Accordingly,  {\bf UnionComposite} first calls {\bf NonLinear} to check
  whether \(\A\) is non-linear.
  If it is the case, it accepts.
  Otherwise, it tries to guess a state \(s=q_i\)
  that may witness an accelerating
  pattern, together with its predecessor \(q=q_{i-1}\). 
  Next, it
  sequentially enumerates all the letters \(a\) in
  \(\Sigma\), and for each \(a\in\Sigma_{i-1,i}\)
  it nondeterministically guesses
  a state \(q_j \prec q_{i-1}\) such that \(a\in \Sigma_{j,i}\).
  If this succeeds for all letters,
  then the algorithm has found an accelerating pattern and accepts.
  Otherwise, the algorithm returns {\bf False}.
  At each step, a fixed number of state indexes and letter names must
  be stored, which can be achieved using a logarithmic space.
 \end{proof}

\section{Conclusion}
\label{sec:conclusion}

We studied decomposition problems
for automata recognizing ideal languages,
a subclass of aperiodic languages
corresponding to level \(1/2\)
of the Straubing-Thérien hierarchy.
We showed that both intersection and union primality
can be decided in NL, which contrasts with the general setting,
where no algorithms below EXPTIME are known.
We now discuss
how our results relate to wider open questions.

\subparagraph*{Are there efficient decomposition algorithms for aperiodic languages?}

\noindent
A natural direction for future work is to extend our techniques beyond ideals,
and to climb further in the Straubing--Thérien hierarchy.
Higher levels are obtained by iterating Boolean and polynomial closure operations.
The techniques developed in this work for both intersection and union decompositions
may provide a starting point for studying Boolean closures of ideals.

\subparagraph*{Which classes of regular languages are preserved by decomposition?}

\noindent
One notable aspect of our results is that decompositions
preserve the class of ideal languages:
whenever an automaton recognizing an ideal is decomposable (either as an intersection or as a union),
it can be decomposed into automata that also recognize ideals.
This mirrors what is known for permutation automata~\cite{jecker2025decomposing}.
In contrast, for automata recognizing finite languages,
existing decomposition techniques do not preserve the class~\cite{spenner2023decomposing}.
This contrast raises several natural questions:
can one prove that some finite languages necessarily require infinite languages in their decompositions?
More generally, can we characterize which subclasses of languages are closed under decomposition,
and which are not?

\subparagraph*{How to decompose infinite word languages?}

\noindent
In the context of model checking, it would be particularly worthwhile
to explore how the various decomposition results extend to infinite
words. Investigating the languages defined by LTL formulas, or
equivalently by \(FO(<)\), presents a natural and relevant challenge,
especially with the aim of simplifying specifications. A promising
starting point would be the first existential fragment of \(FO(<)\),
which, over finite words, corresponds to the ideals studied in this
paper.

\bibliography{biblio}

@book{DBLP:books/aw/HopcroftU79,
  author       = {John E. Hopcroft and
                  Jeffrey D. Ullman},
  title        = {Introduction to Automata Theory, Languages and Computation},
  publisher    = {Addison-Wesley},
  year         = {1979},
  url          = {https://doi.org/10.1145/568438.568455},
  doi          = {10.1145/568438.568455},
}

@article{DBLP:journals/tc/KamedaW70,
  author       = {Tsunehiko Kameda and
                  Peter Weiner},
  title        = {On the State Minimization of Nondeterministic Finite Automata},
  journal      = {{IEEE} Trans. Computers},
  volume       = {19},
  number       = {7},
  pages        = {617--627},
  year         = {1970},
  url          = {https://doi.org/10.1109/T-C.1970.222994},
  doi          = {10.1109/T-C.1970.222994},
}

@article{DBLP:journals/siamcomp/JiangR93,
  author       = {Tao Jiang and
                  Bala Ravikumar},
  title        = {Minimal {NFA} Problems are Hard},
  journal      = {{SIAM} J. Comput.},
  volume       = {22},
  number       = {6},
  pages        = {1117--1141},
  year         = {1993},
  url          = {https://doi.org/10.1137/0222067},
  doi          = {10.1137/0222067},
}

@article{brainerd1968minimalization,
  title        = {The minimalization of tree automata},
  author       = {Brainerd, Walter S},
  journal      = {Information and Control},
  volume       = {13},
  number       = {5},
  pages        = {484--491},
  year         = {1968},
  publisher    = {Elsevier},
  url          = {https://doi.org/10.1016/S0019-9958(68)90917-0},
  doi          = {10.1016/S0019-9958(68)90917-0},
}

@proceedings{DBLP:conf/compos/1997,
  editor       = {Willem P. de Roever and
                  Hans Langmaack and
                  Amir Pnueli},
  title        = {Compositionality: The Significant Difference, International Symposium,
                  COMPOS'97, Bad Malente, Germany, September 8-12, 1997. Revised Lectures},
  series       = {Lecture Notes in Computer Science},
  volume       = {1536},
  publisher    = {Springer},
  year         = {1998},
  url          = {https://doi.org/10.1007/3-540-49213-5},
  doi          = {10.1007/3-540-49213-5},
}

@article{kupferman2015prime,
  title        = {Prime languages},
  author       = {Kupferman, Orna and Mosheiff, Jonathan},
  journal      = {Information and Computation},
  volume       = {240},
  pages        = {90--107},
  year         = {2015},
  publisher    = {Elsevier},
  url          = {https://doi.org/10.1016/j.ic.2014.09.010},
  doi          = {10.1016/J.IC.2014.09.010},
}

@inproceedings{DBLP:conf/cav/SternD97,
  author       = {Ulrich Stern and
                  David L. Dill},
  title        = {Parallelizing the Mur\emph{phi} Verifier},
  booktitle    = {{CAV} '97,
                  Haifa, Israel, June 22-25, 1997, Proceedings},
  series       = {Lecture Notes in Computer Science},
  pages        = {256--278},
  publisher    = {Springer},
  year         = {1997},
  url          = {https://doi.org/10.1007/3-540-63166-6\_26},
  doi          = {10.1007/3-540-63166-6\_26},
}

@incollection{DBLP:books/sp/18/BarnatBDLPPR18,
  author       = {Jiri Barnat and
                  Vincent Bloemen and
                  Alexandre Duret{-}Lutz and
                  Alfons Laarman and
                  Laure Petrucci and
                  Jaco van de Pol and
                  Etienne Renault},
  title        = {Parallel Model Checking Algorithms for Linear-Time Temporal Logic},
  booktitle    = {Handbook of Parallel Constraint Reasoning},
  pages        = {457--507},
  publisher    = {Springer},
  year         = {2018},
  url          = {https://doi.org/10.1007/978-3-319-63516-3\_12},
  doi          = {10.1007/978-3-319-63516-3\_12},
}

@inproceedings{DBLP:conf/formats/DalsgaardLLOP12,
  author       = {Andreas Engelbredt Dalsgaard and
                  Alfons Laarman and
                  Kim G. Larsen and
                  Mads Chr. Olesen and
                  Jaco van de Pol},
  title        = {Multi-core Reachability for Timed Automata},
  booktitle    = {{FORMATS} 2012, London, UK, September 18-20, 2012. Proceedings},
  series       = {Lecture Notes in Computer Science},
  pages        = {91--106},
  publisher    = {Springer},
  year         = {2012},
  url          = {https://doi.org/10.1007/978-3-642-33365-1\_8},
  doi          = {10.1007/978-3-642-33365-1\_8},
}

@inproceedings{spenner2023decomposing,
  author       = {Daniel Alexander Spenner},
  title        = {Decomposing Finite Languages},
  booktitle    = {{MFCS} 2023, Bordeaux, France, August 28 - September 1, 2023},
  series       = {LIPIcs},
  volume       = {272},
  pages        = {83:1--83:14},
  publisher    = {Schloss Dagstuhl - Leibniz-Zentrum f{\"{u}}r Informatik},
  year         = {2023},
  url          = {https://doi.org/10.4230/LIPIcs.MFCS.2023.83},
  doi          = {10.4230/LIPICS.MFCS.2023.83},
}

@inproceedings{DBLP:conf/mfcs/JeckerKM20,
  author       = {Isma{\"{e}}l Jecker and
                  Orna Kupferman and
                  Nicolas Mazzocchi},
  title        = {Unary Prime Languages},
  booktitle    = {{MFCS 2020}, Prague, Czech Republic, August 24-28, 2020},
  series       = {LIPIcs},
  pages        = {51:1--51:12},
  publisher    = {Schloss Dagstuhl - Leibniz-Zentrum f{\"{u}}r Informatik},
  year         = {2020},
  url          = {https://doi.org/10.4230/LIPIcs.MFCS.2020.51},
  doi          = {10.4230/LIPICS.MFCS.2020.51},
}

@article{straubing1985finite,
  title        = {Finite semigroup varieties of the form {V}*{D}},
  author       = {Straubing, Howard},
  journal      = {Journal of Pure and Applied Algebra},
  volume       = {36},
  pages        = {53--94},
  year         = {1985},
  publisher    = {Elsevier},
  url          = {https://doi.org/10.1016/S0304-3975(96)00297-6},
  doi          = {10.1016/S0304-3975(96)00297-6},
}

@article{therien1981classification,
  title        = {Classification of finite monoids: the language approach},
  author       = {Th{\'e}rien, Denis},
  journal      = {Theoretical Computer Science},
  volume       = {14},
  number       = {2},
  pages        = {195--208},
  year         = {1981},
  publisher    = {Elsevier},
  url          = {https://doi.org/10.1016/0304-3975(81)90057-8},
  doi          = {10.1016/0304-3975(81)90057-8},
}

@inproceedings{barloy2022regular,
  title        = {The Regular Languages of First-Order Logic with One Alternation},
  author       = {Barloy, Corentin and Cadilhac, Micha{\"e}l and Paperman, Charles and Zeume, Thomas},
  booktitle    = {{LICS} '22: 37th Annual {ACM/IEEE} Symposium on Logic in Computer
                  Science, Haifa, Israel, August 2 - 5, 2022},
  year         = {2022}, 
  url          = {https://doi.org/10.1145/3531130.3533371},
  doi          = {10.1145/3531130.3533371},
}

@article{heam2002shuffle,
  title        = {On shuffle ideals},
  author       = {H{\'e}am, Pierre-Cyrille},
  journal      = {RAIRO-Theoretical Informatics and Applications},
  volume       = {36},
  number       = {4},
  pages        = {359--384},
  year         = {2002},
  publisher    = {EDP Sciences},
  url          = {https://doi.org/10.1051/ita:2003002},
  doi          = {10.1051/ITA:2003002},
}

@article{straubing1988partially,
  title        = {Partially ordered finite monoids and a theorem of {I}. {S}imon},
  author       = {Straubing, Howard and Th{\'e}rien, Denis},
  journal      = {Journal of Algebra},
  volume       = {119},
  number       = {2},
  pages        = {393--399},
  year         = {1988},
  publisher    = {Elsevier},
  url          = {PAS TROUVE},
  doi          = {10.1016/0021-8693(88)90067-1},
}

@article{immerman1988nondeterministic,
  title        = {Nondeterministic space is closed under complementation},
  author       = {Immerman, Neil},
  journal      = {{SIAM} Journal on computing},
  volume       = {17},
  number       = {5},
  pages        = {935--938},
  year         = {1988},
  publisher    = {SIAM},
  url          = {https://doi.org/10.1109/SCT.1988.5270},
  doi          = {10.1109/SCT.1988.5270},
}

@article{abdulla2004using,
  title        = {Using forward reachability analysis for verification of lossy channel systems},
  author       = {Abdulla, Parosh Aziz and Collomb-Annichini, Aurore and Bouajjani, Ahmed and Jonsson, Bengt},
  journal      = {Formal Methods in System Design},
  volume       = {25},
  number       = {1},
  pages        = {39--65},
  year         = {2004},
  publisher    = {Springer},
  url          = {https://doi.org/10.1023/B:FORM.0000033962.51898.1a},
  doi          = {10.1023/B:FORM.0000033962.51898.1A},
}

@article{jecker2025decomposing,
  author       = {Isma{\"{e}}l Jecker and
                  Nicolas Mazzocchi and
                  Petra Wolf},
  editor       = {Serge Haddad and
                  Daniele Varacca},
  title        = {Decomposing Permutation Automata},
  journal      = {{CONCUR} 2021},
  publisher    = {Schloss Dagstuhl - Leibniz-Zentrum f{\"{u}}r Informatik},
  year         = {2021},
  url          = {https://doi.org/10.4230/LIPIcs.CONCUR.2021.18},
  doi          = {10.4230/LIPICS.CONCUR.2021.18},
}

@inproceedings{karandikar2014state,
  title        = {On the state complexity of closures and interiors of regular languages with subwords},
  author       = {Karandikar, Prateek and Schnoebelen, Philippe},
  booktitle    = {International Workshop on Descriptional Complexity of Formal Systems},
  pages        = {234--245},
  year         = {2014},
  organization = {Springer},
  url          = {https://doi.org/10.1007/978-3-319-09704-6\_21},
  doi          = {10.1007/978-3-319-09704-6\_21}
}

@article{karandikar2015index,
  title        = {On the index of {S}imon's congruence for piecewise testability},
  author       = {Karandikar, Prateek and Kufleitner, Manfred and Philipe Schnoebelen},
  journal      = {Information Processing Letters},
  volume       = {115},
  number       = {4},
  pages        = {515--519},
  year         = {2015},
  publisher    = {Elsevier},
  url          = {https://doi.org/10.1016/j.ipl.2014.11.008},
  doi          = {10.1016/J.IPL.2014.11.008},
}

@article{okhotin2010state,
  title        = {On the state complexity of scattered substrings and superstrings},
  author       = {Okhotin, Alexander},
  journal      = {Fundamenta Informaticae},
  volume       = {99},
  number       = {3},
  pages        = {325--338},
  year         = {2010},
  publisher    = {SAGE Publications Sage UK: London, England},
  url          = {https://doi.org/10.3233/FI-2010-252},
  doi          = {10.3233/FI-2010-252},
}

@inproceedings{DBLP:conf/csl/KarandikarS16,
  author       = {Prateek Karandikar and
                  Philippe Schnoebelen},
  title        = {The Height of Piecewise-Testable Languages with Applications in Logical
                  Complexity},
  booktitle    = {{CSL} 2016,
                  Marseille, France, August 29 - September 1, 2016},
  series       = {LIPIcs},
  pages        = {37:1--37:22},
  publisher    = {Schloss Dagstuhl - Leibniz-Zentrum f{\"{u}}r Informatik},
  year         = {2016},
  url          = {https://doi.org/10.4230/LIPIcs.CSL.2016.37},
  doi          = {10.4230/LIPICS.CSL.2016.37},
}

@article{DBLP:journals/lmcs/MasopustK21,
  author       = {Tom{\'{a}}s Masopust and
                  Markus Kr{\"{o}}tzsch},
  title        = {Partially Ordered Automata and Piecewise Testability},
  journal      = {Log. Methods Comput. Sci.},
  volume       = {17},
  number       = {2},
  year         = {2021},
  url          = {https://lmcs.episciences.org/7475},
  doi          = {10.23638/LMCS-17(2:14)2021},
}

@article{DBLP:journals/dm/Klima11,
  author       = {Ondrej Kl{\'{\i}}ma},
  title        = {Piecewise testable languages via combinatorics on words},
  journal      = {Discret. Math.},
  volume       = {311},
  number       = {20},
  pages        = {2124--2127},
  year         = {2011},
  url          = {https://doi.org/10.1016/j.disc.2011.06.013},
  doi          = {10.1016/J.DISC.2011.06.013},
}

@Article{Haines,
  author       = {Leonard H. Haines},
  title        = {On free monoids partially ordered by embedding.},
  journal      = {Journal of Combinatorial Theory},
  year         = {1969},
  volume       = {6},
  pages        = {94-98},
  doi          = {10.1016/S0021-9800(69)80111-0},
  url          = {https://doi.org/10.1016/S0021-9800(69)80111-0},
}

@inproceedings{DBLP:conf/automata/Simon75,
  author       = {Imre Simon},
  title        = {Piecewise testable events},
  booktitle    = {Automata Theory and Formal Languages, 2nd {GI} Conference, Kaiserslautern,
                  May 20-23, 1975},
  series       = {Lecture Notes in Computer Science},
  pages        = {214--222},
  publisher    = {Springer},
  year         = {1975},
  url          = {https://doi.org/10.1007/3-540-07407-4\_23},
  doi          = {10.1007/3-540-07407-4\_23},
}

@Article{higman,
  author       = {Graham Higman},
  title        = {Ordering by divisibility in abstract algebras},
  journal      = {Proc. Lond. Math. Soc.},
  year         = {1952},
  OPTkey       = {},
  volume       = {3},
  number       = {2},
  pages        = {326-336},
  OPTmonth     = {},
  OPTnote      = {},
  OPTannote    = {},
  url          = {https://doi.org/10.1112/PLMS%2FS3-2.1.326},
  doi          = {10.1112/PLMS/S3-2.1.326},
}

@inproceedings{DBLP:conf/icalp/PinW95,
  author       = {Jean{-}Eric Pin and
                  Pascal Weil},
  title        = {Polynomial Closure and Unambiguous Product},
  booktitle    = {{ICALP}95, Szeged, Hungary, July 10-14, 1995, Proceedings},
  series       = {Lecture Notes in Computer Science},
  pages        = {348--359},
  publisher    = {Springer},
  year         = {1995},
  url          = {https://doi.org/10.1007/3-540-60084-1\_87},
  doi          = {10.1007/3-540-60084-1\_87},
}

@article{DBLP:journals/cacm/Kahn62,
  author       = {Arthur B. Kahn},
  title        = {Topological sorting of large networks},
  journal      = {Commun. {ACM}},
  volume       = {5},
  number       = {11},
  pages        = {558--562},
  year         = {1962},
  url          = {https://doi.org/10.1145/368996.369025},
  doi          = {10.1145/368996.369025},
}

@article{DBLP:journals/corr/abs-2605-07031,
  author       = {Daniel Alexander Spenner},
  title        = {Deciding DFA-Primality is NP-Hard},
  journal      = {CoRR},
  volume       = {abs/2605.07031},
  year         = {2026},
  url          = {https://doi.org/10.48550/arXiv.2605.07031},
  doi          = {10.48550/ARXIV.2605.07031},
  eprinttype   = {arXiv},
  eprint       = {2605.07031},
  bibsource    = {dblp computer science bibliography, https://dblp.org},
}
\end{document}